\theoremstyle{plain}
\newtheorem{theorem}{Theorem}
\newtheorem{lemma}{Lemma}
\newtheorem{proposition}[lemma]{Proposition}
\theoremstyle{definition}
\newtheorem{convention}[lemma]{Convention}
\newtheorem{definition}[lemma]{Definition}
\theoremstyle{remark}
\newtheorem{remark}[lemma]{Remark}
\newcommand{\R}{\mathds{R}}
\newcommand{\N}{\mathds{N}}
\newcommand{\defm}[1]{\emph{#1}}
\newcommand{\subeq}[2]{\mathord{\underbrace{\mathop{#1}}_{#2}}}
\newcommand{\sign}{\operatorname{sgn}}
\newcommand{\spC}{\mathcal{C}}
\newcommand{\nva}{\beta}
\newcommand{\contemb}{\hookrightarrow}
\newcommand{\fullB}{\underline{B}}
\newcommand{\xo}{\chi}
\newcommand{\po}{\psi}
\newcommand{\so}{s}
\newcommand{\xn}{\hat\chi}
\newcommand{\pn}{\hat\psi}
\newcommand{\pnl}{\acute\psi}
\newcommand{\etat}{\eta^*}
\newcommand{\vxit}{\vec\xi^*}
\newcommand{\etaa}{\eta}
\newcommand{\vxia}{\vec\xi}
\makeatletter\@addtoreset{equation}{subsection}\@addtoreset{equation}{section}\makeatother
\makeatletter\@addtoreset{lemma}{subsection}\@addtoreset{equation}{section}\makeatother
\newcommand{\myeqref}[1]{(\ref{#1})}
\newcommand{\myref}[1]{\ref{#1}}
\newcommand{\mylabel}[1]{\label{#1}}
\newcommand{\myeqlabel}[1]{\label{#1}}
\newcommand{\sindex}[2]{}
\def\pmpaper{elling-liu-pmeyer-arxiv}
\def\pmnewlabel#1#2{\expandafter\def\csname pmref-#1\endcsname{#2}}
\def\pmref#1{\csname pmref-#1\endcsname}
\def\pmeqref#1{(\csname pmref-#1\endcsname)}
\def\pmc#1{\cite[#1]{\pmpaper}}
\begin{document}

\title{Regular reflection in self-similar potential flow and the sonic criterion}
\author{Volker Elling}
\date{}	

\maketitle

\begin{abstract}
	Reflection of a shock from a solid wedge is a classical problem in gas dynamics.
	Depending on the parameters either a regular or a irregular (Mach-type) reflection results.
	We construct regular reflection as an exact self-similar solution for potential flow. 
	For some upstream Mach numbers $M_I$ and isentropic coefficients $\gamma$, a solution
	exists for all wedge angles $\theta$ allowed by the \defm{sonic criterion}. This demonstrates that,
	at least for potential flow, weaker criteria are false.
\end{abstract}

\parindent=0cm%
\parskip=\baselineskip%

\section{Introduction}

\subsection{The reflection problem}

\mylabel{section:refl}

Reflection of an incident shock from a solid wedge is a classical problem of gas dynamics. It has been studied extensively by 
Ernst Mach \cite{mach-wosyka,krehl-geest} and John von Neumann \cite{neumann-1943}, as well as many other engineers and mathematicians. 

Most commonly, reflection is studied in \emph{steady} inviscid compressible flow, for example when shocks in a nozzle are reflected from the walls.
The reflections can be classified roughly into \defm{regular} and \defm{irregular reflections};
see \cite{ben-dor-book} for a more detailed discussion. In either type, an \defm{incident shock} $Q$ impinges on a solid surface $\fullB$ 
(see Figure \myref{fig:locrrsmr}).
In regular reflection (RR), $Q$ reaches a \defm{reflection point} on the surface, continuing as the \defm{reflected shock} $R$ (see Figure \myref{fig:locrrsmr} left).

In \defm{irregular reflections} (IRR), incident and reflected shock are connected by a more or less complex interaction pattern which in turn connects
to the solid surface by a third shock, called \defm{Mach stem}.
The most important irregular reflections are double, complex and single Mach reflection (DMR, CMR, SMR); various additional types have been 
proposed \cite{guderley,hunter-brio,hunter-tesdall}. Figure \myref{fig:locrrsmr} right shows an (oversimplified) version of single Mach reflection.

\begin{figure}
	\input{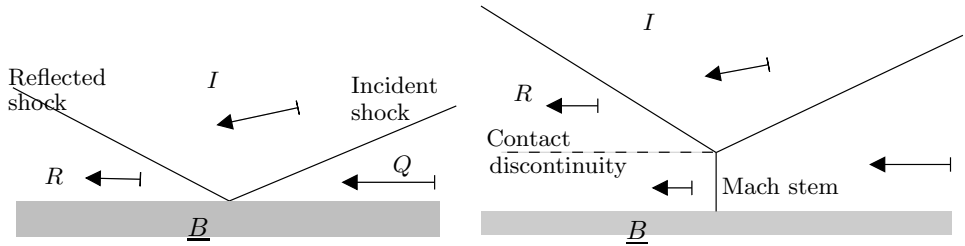}
	\caption{Left: regular reflection, right: single Mach reflection (oversimplified).
	If the reflection point is steady, the sonic criterion corresponds to $M_Q>1$.}
	\mylabel{fig:locrrsmr}
\end{figure}

The reflection problem has several parameters. For polytropic gas it is sufficient to consider the isentropic coefficient $\gamma$ as well as
$L_Q$ and $L_I$, the Mach numbers in the $Q$ 
resp.\ $I$ regions. The incident shock cannot exist unless $L_Q>1$. $L_Q$ and $L_I<L_Q$ determine the incident shock (not all $L_I$ may admit
a matching reflected shock).

In Mach reflection, the Mach stem, reflected and incident shock appear to meet in a \defm{triple point}. In general this is possible
only if they are joined by a contact discontinuity (slip line); for some parameter values it is not possible at all.
In fact for certain values RR is not possible either. This is called the \defm{von Neumann paradox}; it is perhaps the most
famous of the many problems arising in reflection. Many ideas have been proposed towards the resolution of the paradox
(see e.g.\ \cite{guderley,rosales-tabak,gamba-rosales-tabak,hunter-brio,hunter-tesdall}); no single explanation has been accepted widely so far. 

However, this article is concerned with a different question: 
it is natural to ask which parameters cause a RR and which yield IRR. 
Of course both sides of Figure \ref{fig:locrrsmr} are perfectly valid stationary solutions, so the question has to be phrased more carefully.
For example:
\begin{enumerate}
\item Which of the two is dynamically stable (e.g.\ asymptotically stable as a stationary solution of the time-dependent problem)?
\item Which of the two is structurally stable under perturbations like downstream nozzles, wall curvature or 
roughness, interaction with other flow patterns, perturbation of the upstream flow to non-constant with curved incident shock, viscosity, 
heat conduction, boundary layers, noise, slow relaxation to thermal equilibrum and other kinetic effects, dissociation etc.
\end{enumerate}
It is not clear whether these questions are really any better than the original one ---
perhaps both sides of Figure \ref{fig:locrrsmr} are stable.
If so, then the new questions would merely fail in a less obvious way, as stability is harder to check than existence.
But let us assume for the sake of the argument that the vague problem ``does RR or IRR occur'' can be expressed
in some way as a precise mathematical question that selects exactly one of the two choices.

Among the criteria that have been proposed (see \cite[Section 1.5]{ben-dor-book}), three are most important.
The first criterion, called \defm{detachment criterion}, states that RR occurs whenever a reflected shock exists.
Clearly RR is not possible without a reflected shock, so this is the weakest possible criterion.

The velocity $\vec v_I$ in the $I$ region of Figure \myref{fig:rrsmr} forms an angle $\tau$ with $\fullB$; the reflected shock must turn this
velocity by $\tau$ so that $\vec v_R$ is parallel to the wall, satisfying a slip boundary condition.

\begin{figure}
\input{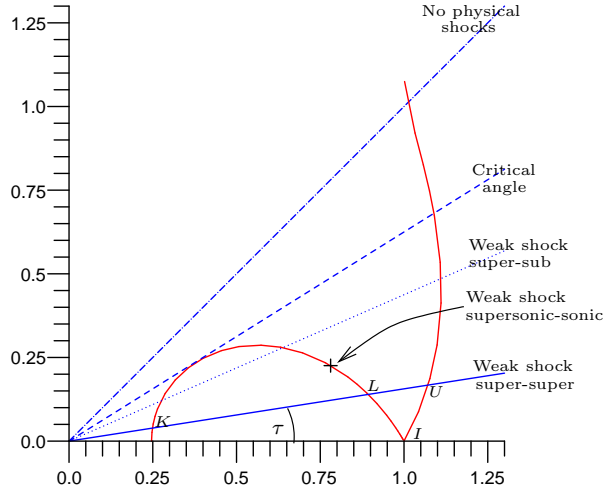}
\caption{A given upstream velocity ($I$) with possible downstream velocities (red curve) 
for steady shocks with varying normal.
The $U$ shock is unphysical; $K$ is the strong and $L$ the weak shock. Shocks cannot turn velocities by more than the critical
angle $\tau_*$.
}
\mylabel{fig:spolar}
\end{figure}

Given the $I$ region data and $\gamma$, let the reflected shock be steady and pass through the reflection point, but vary its angle.
This yields a one-parameter family of velocities $\vec v_R$, forming a curve called \defm{shock polar} 
(see Figure \myref{fig:spolar}). For \emph{physical} shocks there is a maximum angle $\tau_*$ between downstream 
and upstream velocity. $\tau_*$ is determined by the upstream state.

If the angle $\tau$ between wall and $\vec v_I$ region of Figure \myref{fig:rrsmr} right is bigger than
$\tau_*$, no reflected shock exists. If $\tau=\tau_*$, there is exactly one reflected shock. For $\tau<\tau_*$ however there
are \emph{two}, called \defm{weak reflection} and \defm{strong reflection}. We encounter another one of the major issues in reflection:
which of these two should occur? \cite{elling-liu-pmeyer-arxiv} have discussed this question in a related problem.

The flow in the $R$ region can be supersonic or subsonic. If it is supersonic, then waves in the $R$ region cannot travel towards
the reflection point. If it is subsonic, however, they can reach it and interact with it, potentially altering the reflection type.
This motivates the second criterion, called \defm{sonic criterion}: RR occurs exactly if there is a reflected shock with 
supersonic $R$ region, i.e.\ Mach number $L_R>1$.

On the shock polar (Figure \myref{fig:spolar}), $+$ indicates the point where $M_R=1$; 
velocities right of it are supersonic, left of it subsonic. Hence there is an angle $\tau_+$ so that for $\tau<\tau_+$ 
the weak reflection $L$ has $L_R>1$. For $\tau>\tau_+$ however it has $L_R<1$. The strong reflection $K$ is \emph{always} subsonic in the $R$
region --- so the sonic criterion has a pleasant property: only the weak reflection is allowed, solving the uniqueness problem. 
Moreover since $\tau_+<\tau_*$, the sonic criterion is stronger than the detachment criterion.

The third criterion is motivated by studying what happens when the parameters $L_I,L_Q$ are varied so that a transition from RR to IRR occurs.
One might suspect that the pressure in the reflection point in the $R,S$ regions is continuous
and does not jump during transition. Then the pressure behind the reflected shock in RR and the pressure behind the Mach stem in IRR, 
a shock approximately straight and perpendicular to the wall, must be equal at transition. There is a very limited set of $L_I,L_Q,\gamma$ 
for which this happens; 
the \defm{von Neumann criterion} (sometimes called \defm{mechanical equilibrum criterion}) states that the transition can occur only at those parameters.

The von Neumann criterion has various problems. Most importantly, for weak incident shocks the pressure behind the Mach stem 
never matches the pressure below the
reflected shock, so RR should occur in all cases, contradicting observations.

\subsection{Self-similar reflection}

\mylabel{section:refl-selfsim}

Reflection can also be studied in \defm{self-similar} (sometimes called \defm{quasi-steady} or \defm{pseudo-steady}) flow. 
In fact this is advantageous: for finding stationary solutions, choosing boundary conditions that yield well-posedness, in particular
uniqueness, can be rather subtle, as evident from the awkward phrasing of the RR-or-IRR question above.
For initial-value problems, on the other hand, uniqueness
is expected\footnote{\cite{elling-nuq-journal,elling-hyp2004}
raise doubt about the Cauchy problem for the Euler equations, but at least for potential flow the author expects uniqueness to hold.}
 --- or at least a necessary property of any interesting model equation.
Moreover, self-similar flow patterns occur naturally in various reflection experiments.

\begin{figure}
\input{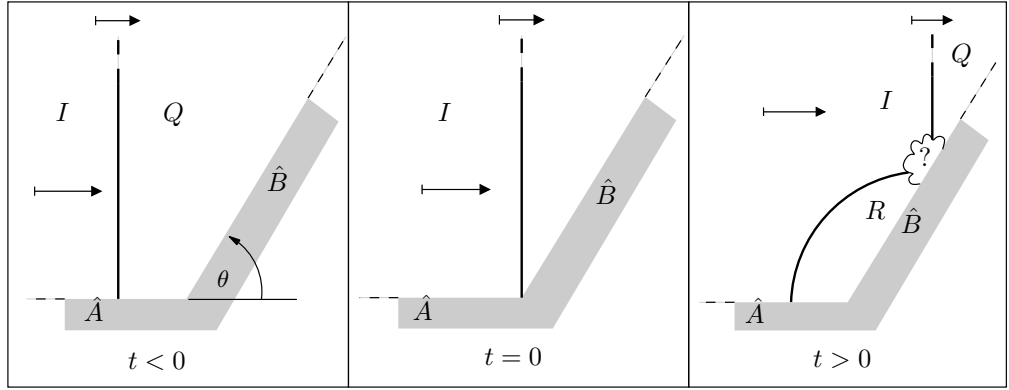}
\caption{Self-similar reflection of a straight vertical shock in a convex corner. Different ``?'' patterns occur depending on corner angle and other parameters.}
\mylabel{fig:rrefini}
\end{figure}
\begin{figure}
	\input{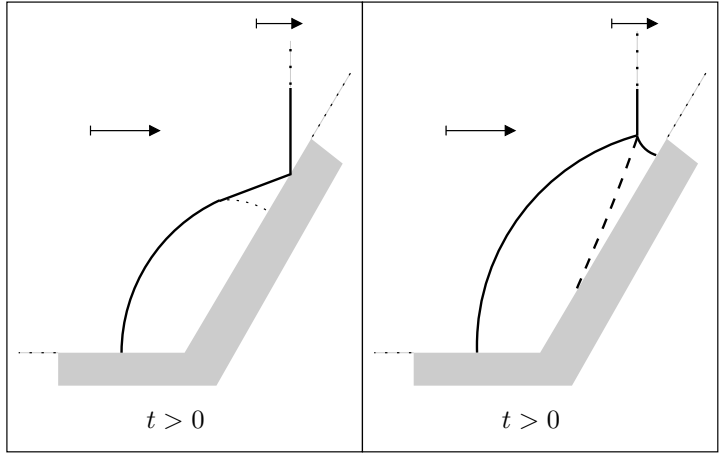}
	\caption{Left: regular reflection. The dotted arc separates a region of constant velocity (above) from a nontrivial region.
	Self-similar potential flow changes type from hyperbolic (above) to parabolic to elliptic across the arc.
	Right: single Mach reflection.}
	\mylabel{fig:rrsmr}
\end{figure}

In self-similar flow, density and velocity are functions of $\xi=x/t$ and $\eta=y/t$ rather than $x,y$. 
To produce a reflection, we consider the horizontal \defm{upstream wall} $\hat A$ and the \defm{downstream wall} $\hat B$ 
(see Figure \myref{fig:rrefini}), meeting in the origin and enclosing an angle $180^\circ-\theta$.
For $t<0$ a vertical incident shock approaches the corner from the left, reaching it at $t=0$; 
for $t>0$ it continues along $\hat B$, while a complex pattern is reflected back from the corner.
For regular reflection, the incident and reflected shock meet in a point $\vec\xi$. 
An observer travelling in the reflection point
will observe a flow expanding at a constant rate, approaching a local RR
as in Figure \myref{fig:locrrsmr} left as $t\uparrow+\infty$. 

To understand self-similarity intuitively, focus on the corner between the two walls in Figure \ref{fig:rrefini} right.
$t\uparrow\infty$ corresponds to zooming into the corner whereas $t\downarrow 0$ corresponds to zooming infinitely far away from the corner.

\begin{figure}
\includegraphics[angle=-90,width=.49\linewidth]{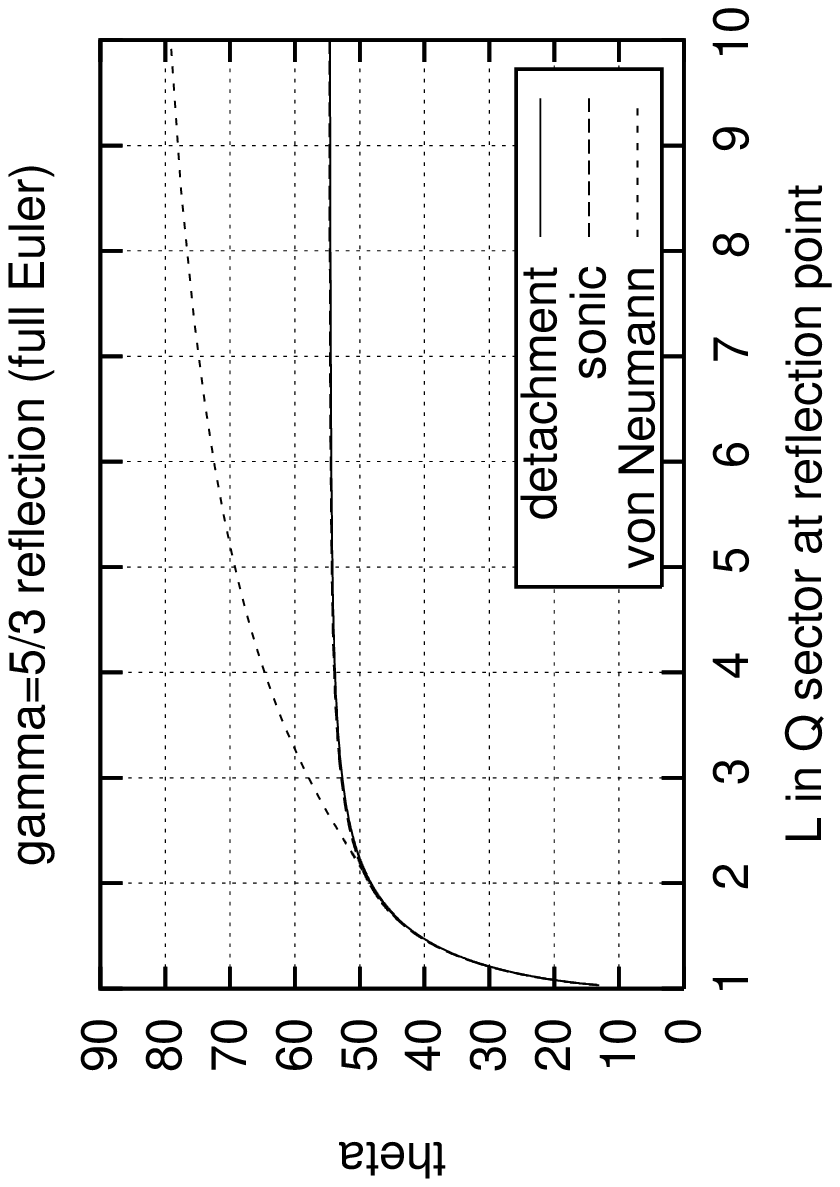}
\includegraphics[angle=-90,width=.49\linewidth]{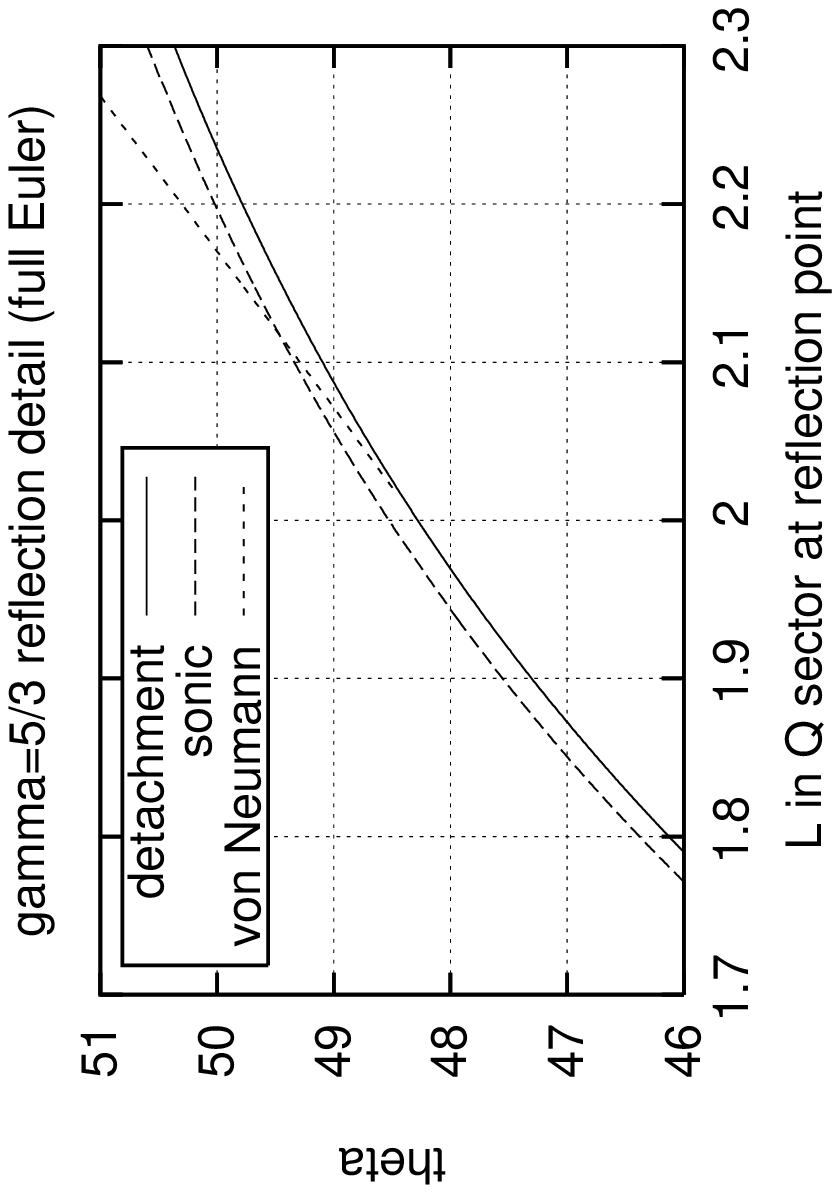}
\caption{Left: transition angles predicted by each criterion (sonic and detachment almost coincide); right: detail.}
\mylabel{fig:transition}
\end{figure}

The three transition criteria discussed for steady reflection specify angles $\theta_d$ (detachment), $\theta_s$ (sonic) and $\theta_N$ (von Neumann), 
depending on $\gamma$ and $L_Q$, so that RR occurs for larger $\theta$ whereas IRR occurs for smaller $\theta$. 
(Here, $L_Q$ is the $Q$ region Mach number as seen by an observer traveling in the intersection point of incident shock and $\hat B$
(= reflection point, in the RR case); of course an observer stationary in the corner will perceive a different velocity in the $Q$ region.)
Note that $\theta_d\leq\theta_s,\theta_N$ always.
Figure \myref{fig:transition} compares the criteria in the case of monatomic gas ($\gamma=5/3$). 

It has also been proposed that the correct criterion may not be the same in steady and self-similar flow (see below),
or that there may be bistable cases where RR and IRR can both occur (see \cite{hornung-oertel-sandeman,ivanov-etal-2001}). 

Nevertheless, it seems that there is an overall preference for the sonic criterion in the scientific community, at least for self-similar reflection.

Numerical and physical experiments are hampered by various difficulties and have not been able to select the correct criterion. 
For example numerical dissipation or physical viscosity smear the shocks and cause boundary layers that interact with the reflection pattern
and can cause ``spurious Mach stems'' \cite{colella-woodward-review}. Moreover, $\theta_d$ and $\theta_s$ are only fractions of a degree
apart (see Figure \myref{fig:transition} right), a resolution that even sophisticated experiments (e.g.\ \cite{lock-dewey}) have been unable to reach.
To quote \cite{ben-dor-book}: ``For this reason it is almost impossible to distinguish experimentally between the sonic and detachment
criteria.''

Constructing exact solutions of most genuinely multi-dimensional flow 
problems is infeasible or restricted to severely simplified equations.
Moreover it would be prohibitively expensive if it could only confirm results that have already been obtained 
many orders of magnitude faster by numerical or physical experiments, unless the certainty of mathematical proof is needed.
Regular reflection appears to be the first instance where rigorous analysis might make a genuine contribution by 
answering a problem that could not be resolved unambiguously by other techniques.

\subsection{Results}

In this article, using techniques developed in \cite{elling-liu-pmeyer-arxiv}, 
regular reflection is constructed as a self-similar solution of compressible potential flow, with
polytropic ($\gamma$-law) gas. 
While classical regular/Mach reflection studies vertical incident shocks, we consider the non-vertical
cases too (these may not arise from any $t<0$ flow), including cases where $\theta>\frac\pi2$.

Most importantly, 
for some values of $\gamma$ and upstream Mach number $M_I$, in particular $\gamma=5/3$ and $M_I=1$,
\emph{every} $\theta$ near $\theta_s$ can be covered. 
This shows rigorously that criteria \emph{stronger} than the sonic criterion are false, at least for potential flow with
this choice of parameters.

As discussed above, there is some tendency to believe that regular reflection does not persist beyond the sonic criterion;
ongoing work aims to show this rigorously, at least under mild assumptions.
This would rule out the \emph{weaker} criteria as well, in particular the detachment criterion, hence prove that sonic is correct.
The problem of weak vs.\ strong reflection (see above) would vanish as well.

However, for now the success is qualified: potential flow lacks contact discontinuities, so \emph{after} the transition to (say) SMR
the flow pattern must be \emph{qualitatively} different from the full Euler flow. It is still possible that the two models 
may have different transition criteria (however, the author believes that this is not the case). 

Although some genuinely multi-dimensional exact solutions have been constructed for steady Euler
flow, self-similar Euler flow is an open and inherently rather difficult problem. But again, it seems unlikely that numerical or experimental 
techniques will yield a clear --- let alone universally accepted --- answer soon, so rigorous analysis would be very valuable.

Here is the precise result:

\begin{figure}
\parbox{.5\textwidth}{\input{theorem.pstex_t}}
\parbox{.5\textwidth}{\input{theorem3.pstex_t}}
\caption{Left: a local RR pattern; right: the curved portion $S$ of the reflected shock has $L_d\leq 1$, hence must be left of the envelope $E$, which bounds it away from the dotted circle and from $\hat B$.}
\mylabel{fig:theorem}
\vskip5mm
\input{theorem2.pstex_t}
\caption{The initial data is constant in each of two sectors that are separated by the incident shock $Q$}
\mylabel{fig:theorem2}
\end{figure}

\begin{theorem}
	\mylabel{th:elling-rrefl}%
	Consider potential flow, as discussed in Section \ref{section:potf}.
	Consider a wall $\hat A=(-\infty,0)\times\{0\}$ (see Figure \myref{fig:theorem}), 
	a second wall ray $\hat B$ at a clockwise angle $180^\circ-\theta$ from $\hat A$, 
	and an incident shock $Q$, at a clockwise angle $180^\circ-\beta_Q$ from $\hat A$, meeting $\hat B$ in the
	reflection point $\vec\xi_R$. Assume that there is a corresponding reflected shock $R$ in $\vec\xi_R$, emanating down and left (or vertically down).
	Define 
	\begin{alignat}{1}
		V_I &:= \{(x,y)\in\R^2:y>0,~-\infty<x<y\cot(\beta_Q)\}\notag \\
		V_Q &:= \{(x,y)\in\R^2:y>0,~y\cot(\beta_Q)<x<y\cot\theta\}\notag, \\
		V &:= \{(x,y)\in\R^2:y>0,~-\infty<x<y\cot\theta\}\notag
	\end{alignat}
	(see Figure \myref{fig:theorem2}).
	\begin{enumerate}
	\item 
		Assume the \defm{sonic criterion} holds: $L_R>1$ in $\vec\xi_R$ in the sector below $R$.
	\item 
		Assume that 
		\begin{alignat}{1}
			|\vec v_I\cdot\vec n_B| &\leq c_I  \mylabel{eq:MyI-one}
		\end{alignat}
	\item 
		Envelope condition:
		of the two\footnote{see Section \myref{section:potf-shocks}} points on the $R$ shock with $L_d=1$, 
		let $\vec\xi^{(0)}_C$ be the one closer to $\vec\xi_R$. 
		Consider shocks with upstream data $\vec v_I,\rho_I$ that go from $\vec\xi^{(0)}_C$ counterclockwise and
		satisfy $L_d\leq 1$ in every point.
		Assume that all such shocks reach $\hat A$ before meeting $\hat B$ or the circle with center $\vec v_I$ and radius $c_I$.
	\end{enumerate}

	Then there exists a weak\footnote{see Remark \myref{rem:weaksol}} solution $\phi=\phi(t,x,y)\in C^{0,1}([0,\infty)\times\overline V)$ of 
	\begin{alignat}{3}
		& \text{unsteady potential flow} \qquad && \text{for $t>0$, $\vec x\in V$,} \myeqlabel{eq:prob1} \\
		& \nabla\phi\cdot\vec n = 0 \qquad && \text{on $\partial V$,} \myeqlabel{eq:prob2} \\
		& \rho = \rho_I, \quad \nabla\phi = \vec v_I \qquad && \text{for $t=0$, $\vec x\in V_I$,} \myeqlabel{eq:prob3} \\
		& \rho = \rho_Q, \quad \nabla\phi = \vec v_Q \qquad && \text{for $t=0$, $\vec x\in V_Q$.} \myeqlabel{eq:prob4} 
	\end{alignat}
\end{theorem}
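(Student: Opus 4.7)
The plan is to work in self-similar coordinates $(\xi,\eta)=(x/t,y/t)$, in which unsteady potential flow reduces to a quasi-linear second-order equation for the pseudo-potential $\chi$ that is elliptic where the pseudo-Mach number is subsonic and hyperbolic otherwise. By the sonic criterion, a neighborhood below the reflected shock at $\vec\xi_R$ is supersonic, so finite propagation forces the flow there to equal the local constant-state RR pattern up to the sonic arc $\Gamma$ through the first sonic point $\vec\xi^{(0)}_C$. Existence then reduces to constructing $\chi$ in the bounded subsonic domain $\Omega$ bounded by $\Gamma$ (on which $\chi$ and $\nabla\chi$ are prescribed by matching the hyperbolic side), the free-boundary curved shock $S$ emanating from $\vec\xi^{(0)}_C$ (carrying the Rankine--Hugoniot jump conditions), and the walls $\hat A,\hat B$ (carrying slip conditions).

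Following the strategy of \cite{elling-liu-pmeyer-arxiv}, I would solve this degenerate free-boundary problem by elliptic regularization plus a Leray--Schauder fixed-point argument. Introduce $\varepsilon>0$ and replace the quasi-linear operator by a uniformly elliptic perturbation that eliminates the sonic degeneracy. Define an iteration map on a weighted Hölder class of admissible shock curves, with weights tuned to the corners at $\vec\xi^{(0)}_C$, at $\vec\xi_R$, and where $S$ meets $\hat A$: given a candidate $S$, solve the linearized mixed BVP for $\chi$ on $\Omega(S)$ with oblique condition on $S$; then update $S$ by integrating the Rankine--Hugoniot ODE driven by the new trace of $\chi$. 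Compactness follows from a small loss of regularity in the weighted framework. A homotopy to an unperturbed reference configuration (straight reflected shock, constant downstream state), for which uniqueness and a nontrivial topological index can be computed, then delivers a fixed point for every $\varepsilon>0$.

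The $\varepsilon$-uniform a priori estimates needed for passage to the limit $\varepsilon\to 0$ are of the same kind as in \cite{elling-liu-pmeyer-arxiv}: a density lower bound via a $c$-maximum principle (preventing cavitation); interior, wall, and shock velocity bounds via shock-polar comparison; shock strength and shock-density control; and weighted Hölder regularity at each of the three corners. Hypothesis (iii), the envelope condition, enters precisely here: it ensures the free shock stays bounded away from both $\hat B$ and the sonic circle $\{|\vec\xi-\vec v_I|=c_I\}$ of the upstream $I$ state along every admissible continuation, keeping $\Omega(S)$ and the estimates consistent. A convergent subsequence as $\varepsilon\to 0$ then yields a weak solution of (\ref{eq:prob1})--(\ref{eq:prob4}).

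The principal obstacle will be the sonic corner $\vec\xi^{(0)}_C$. As $\varepsilon\to 0$, the regularized equation loses uniform ellipticity there, the Rankine--Hugoniot condition loses obliqueness, and the curved shock must peel off the straight reference shock with precisely the right curvature to match the hyperbolic data across $\Gamma$. Producing $\varepsilon$-uniform weighted Hölder bounds at this corner, showing the shock ODE admits a unique admissible initiation at $\vec\xi^{(0)}_C$ in the limit, and controlling how $S$ deviates from the straight reference over an $O(1)$ arc are the hard parts, and they dictate both the choice of weighted norms and the parametric-arc analysis carried out throughout the construction.
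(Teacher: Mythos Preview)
Your outline is essentially the paper's own strategy: self-similar reduction, an elliptic free-boundary problem in the subsonic region bounded by a sonic arc, regularization by $\epsilon>0$, Leray--Schauder degree with homotopy to a trivial straight-shock configuration, $\epsilon$-uniform estimates (with the envelope condition used exactly as you say to keep $S$ away from $\hat B$ and the sonic circle), and a compactness limit $\epsilon\downarrow 0$.

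Two implementation details differ from the paper and are worth flagging. First, the paper does \emph{not} regularize the operator; instead it keeps the exact quasi-linear equation but replaces the parabolic arc $\{L=1\}$ by a slightly interior arc $\{L^2=1-\epsilon\}$ on which it imposes $L^2=1-\epsilon$ as a boundary condition, so the equation is uniformly elliptic on the shrunken domain $\Omega^{(\epsilon)}$. Second, the iteration is not on shock curves: it acts on the scalar function $\psi$ in a weighted H\"older ball on a fixed square (via an onion-type coordinate change). The shock $S$ is \emph{read off} from the old iterate through the first Rankine--Hugoniot condition $\psi=\psi^I$, and the second RH condition is imposed as an oblique boundary condition on the new iterate; there is no integration of a shock ODE. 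Your corner list also needs adjustment: $\vec\xi_R$ lies in the hyperbolic region and is not a corner of $\Omega$; the relevant corners are $\vec\xi_C$ (shock--arc), $\vec\xi_A$ (shock--wall $\hat A$), $\vec\xi_B$ (arc--wall $\hat B$), and the wall--wall corner $\vec\xi_{AB}$, with the H\"older weights placed at $\{\vec\xi_C,\vec\xi_{AB}\}$.
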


Of course existence by itself merely validates that potential flow has interesting solutions. In addition, 
detailed results about the structure of the weak solution can be obtained (see Remark \myref{rem:structure}); most importantly,
the flow patterns are of RR type.

\begin{remark}
	\mylabel{rem:weaksol}%
	By weak solution we mean that 
	\begin{alignat}{1}
		\nabla\phi(0,\vec x) &= \vec v_I \qquad\text{for a.e.\ $\vec x\in V_I$} \myeqlabel{eq:weak-ini-1} \\
		\nabla\phi(0,\vec x) &= \vec v_Q \qquad\text{for a.e.\ $\vec x\in V_Q$} \myeqlabel{eq:weak-ini-2}
	\end{alignat}
	and
	\begin{alignat}{1}
		\int_\Omega
		\rho\vartheta_t
		+\rho\nabla\phi\cdot\nabla\vartheta~d\vec x~dt
		+\int_{V_I}\vartheta(0,\vec x)\rho_Id\vec x 
		+\int_{V_Q}\vartheta(0,\vec x)\rho_Qd\vec x 
		&= 0  \notag
	\end{alignat}
	for all test functions $\vartheta\in C_c^\infty(\overline\Omega)$.

	(For $\phi\in C^{0,1}(\overline\Omega)$, the velocity $\nabla\phi$ is a.e.\ well-defined on $\{0\}\times V$, but
	$\phi_t$ and hence $\rho$ may not be well-defined.)
\end{remark}

\begin{remark}
%	As Remark \myref{rem:notvoid} shows, there is a large set of tip shocks and parameters that satisfy the conditions of
%	Theorem \myref{th:elling-rrefl}.\ucon{maybe do this discussion later? or more detailed here?}
%
	Condition \myeqref{eq:MyI-one} and the envelope condition are merely technical.
	The envelope condition is needed in some cases to prove the shock does not vanish (which is never observed in numerics);
	none of the other estimates requires it.
	Both conditions can probably be removed by future research.
\end{remark}

\subsection{Related work on constructing exact solutions}

In recent years multi-dimensional compressible inviscid flow has received renewed attention, after several recent breakthroughs brought 
the theory of one-dimensional compressible flow to a satisfactory state \cite{glimm,bianchini-bressan-j,liu-yang,bressan-crasta-piccoli}.

\cite{elling-liu-pmeyer-arxiv} (see also \cite{elling-liu-rims05,elling-hyp2006}) studies supersonic flow onto a solid wedge.
For sufficiently sharp wedges, the steady solution consists of a straight shock on each side of the wedge, emanating downstream
and separating two constant-state regions. In inviscid models this shock wave must keep the downstream velocity tangential to the
wedge surface (slip condition). As for regular reflection, there are two different shocks for each (small) wedge angle, a \defm{weak} and a 
\defm{strong} shock. The weak shock is more commonly observed, but no mathematical argument was known to favor it prior to \cite{elling-liu-pmeyer-arxiv}.
In that article, an exact solution was constructed for a wedge at rest in stagnant air, accelerated instantaneously to (sufficiently high) supersonic speed
at time $0$. The resulting flow pattern is self-similar and has a \defm{weak} shock at the wedge tip.

Many of the techniques in \cite{elling-liu-pmeyer-arxiv} are essential in the present article.

The most closely related work, 
and so far the only other paper that proves global existence of some nontrivial time-dependent solution of potential flow is
\cite{chen-feldman-selfsim-journal}: using different techniques, they construct exact solutions for regular reflection, assuming sufficiently 
blunt wedges ($\theta\approx\frac\pi2$). 

Some prior work studies reflection and other problems for simplified models of gas dynamics.
\cite{canic-keyfitz-kim} consider regular reflection for the unsteady transonic small disturbance equation as model.
\cite{yuxi-zheng-rref} studies the same problem for the pressure-gradient system.
The monographs \cite{yuxi-zheng-book,li-zhang-yang} compute various self-similar flows numerically and present some analysis 
and simplified models.

\subsection{Potential flow}

\mylabel{section:potf}

Here we briefly present derivation and elementary results for potential flow. More information can be found in \cite{elling-liu-pmeyer-arxiv}.

Consider the isentropic Euler equations of compressible gas dynamics in $d$ space dimensions:
\begin{alignat}{1}
    \rho_t + \nabla\cdot(\rho\vec v) &= 0 \myeqlabel{eq:rhodiv} \\
    (\rho\vec v)_t + \sum_{i=1}^d(\rho v^i\vec v)_{x^i} + \nabla(p(\rho)) &= 0, \myeqlabel{eq:mom}
\end{alignat}
Hereafter, $\nabla$ denotes the gradient with respect either to the space coordinates ${\vec x}=(x^1,x^2,\cdots,x^d)$ or
the similarity coordinates $t^{-1}\vec x$.
${\vec v}=(v^1,v^2,\cdots,v^d)$ is the velocity of the gas, $\rho$ the density, $p(\rho)$ pressure.
In this article we consider only polytropic pressure laws ($\gamma$-laws) with $\gamma\geq 1$:
\begin{alignat}{1}
    p(\rho) &= \frac{c_0^2\rho_0}{\gamma}\left(\frac{\rho}{\rho_0}\right)^\gamma \myeqlabel{eq:p-polytropic}
\end{alignat}
(here $c_0$ is the sound speed at density $\rho_0$).

For smooth solutions, substituting \myeqref{eq:rhodiv} into \myeqref{eq:mom} yields the simpler form
\begin{alignat}{1}
    \vec v_t+\vec v\cdot\nabla^T\vec v + \nabla(\pi(\rho)) &= 0. \myeqlabel{eq:v}%\tag{\myref{eq:mom}'}
\end{alignat}
Here $\pi$ is defined as
\begin{alignat}{1}
    \pi(\rho) &= c_0^2\cdot\begin{cases}
	\frac{(\rho/\rho_0)^{\gamma-1}-1}{\gamma-1}, & \gamma> 1 \\
	\log(\rho/\rho_0), & \gamma=1.
    \end{cases}\notag
\end{alignat}
This $\pi$ is $C^\infty$ in $\rho\in(0,\infty)$ and $\gamma\in[1,\infty)$
and has the property
$$\pi_\rho=\frac{p_\rho}{\rho}.$$

If we assume \defm{irrotationality}
$$v^i_j=v^j_i$$
(where $i,j=1,\dotsc,d$), then the Euler equations are reduced to potential flow:
\begin{alignat*}{1}
    \vec v &= \nabla_{\vec x}\phi
\end{alignat*}
for some scalar \defm{potential}\footnote{We consider simply connected domains; otherwise $\phi$ might be multivalued.} 
function $\phi$. For smooth flows, substituting this into \myeqref{eq:v} yields, for $i=1,\dotsc,d$,
\begin{alignat}{1}
    0 
    &= \phi_{it} + \nabla\phi_i\cdot\nabla\phi + \pi(\rho)_i = \big(\phi_t + \frac{|\nabla\phi|^2}{2} + \pi(\rho)\big)_i. \notag
\end{alignat}
Thus, for some constant $A$, 
\begin{alignat}{1}
    \rho &= \pi^{-1}(A-\phi_t-\frac{|\nabla\phi|^2}{2}). \myeqlabel{eq:rhoA}
\end{alignat}
Substituting this into \myeqref{eq:rhodiv} yields a single second-order quasilinear hyperbolic equation, the \defm{potential flow} equation, for a scalar field $\phi$:
\begin{alignat}{1}
	\big(\rho(\phi_t,|\nabla\phi|)\big)_t+\nabla\cdot\big(\rho(\phi_t,|\nabla\phi|)\nabla\phi\big) &= 0. \myeqlabel{eq:potflow-divform}
\end{alignat}
Henceforth we omit the arguments of $\rho$. Moreover we eliminate $A$ with the substitution 
\begin{alignat}{1}
	&A\leftarrow 0,\qquad\phi(t,\vec x)\leftarrow\phi(t,\vec x)-tA\notag
\end{alignat}
(so that $\phi_t\leftarrow\phi_t-A$). Hence we use
\begin{alignat}{1}
	\rho &= \pi^{-1}(-\phi_t-\frac{1}{2}|\nabla\phi|^2)\myeqlabel{eq:rho}
\end{alignat}
from now on.

Using $c^2=p_\rho$ and 
\begin{alignat}{1}
  (\pi^{-1})' &= (\pi_\rho)^{-1} = (\frac{p_\rho}{\rho})^{-1}=\frac{\rho}{c^2} \myeqlabel{eq:Dpiinv}
\end{alignat}
the equation can also be written in nondivergence form:
\begin{alignat}{1}
    \phi_{tt} + 2\nabla\phi_t\cdot\nabla\phi + \sum_{i,j=1}^d\phi_i\phi_j\phi_{ij} - c^2\Delta\phi &= 0 \myeqlabel{eq:potential-flow}
\end{alignat}
\myeqref{eq:potential-flow} is hyperbolic (as long as $c>0$).
For polytropic pressure law the local sound speed $c$ is given by
\begin{alignat}{1}
	c^2 &= c_0^2 + (\gamma-1)(-\phi_t-\frac{1}{2}|\nabla\phi|^2). \myeqlabel{eq:cs-uspf}
\end{alignat}

Our initial data is self-similar: it is constant along rays emanating from $\vec x=(0,0)$.
Our domain $V$ is self-similar too: it is a union of rays emanating from $(t,x,y)=(0,0,0)$.
In any such situation it is expected --- and confirmed by numerical results --- that the solution is self-similar as well, i.e.\ that
$\rho,\vec v$ are constant along rays $\vec x=t\vec\xi$ emanating from the origin. 
Self-similarity corresponds to the ansatz
\begin{alignat}{1}
	\phi(t,\vec x) &:= t\psi(\vec\xi),\qquad\vec\xi := t^{-1}\vec x \myeqlabel{eq:psi-phi}.
\end{alignat}
Clearly, $\phi\in C^{0,1}(\Omega)$ if and only if $\psi\in C^{0,1}(\complement W)$.
This choice yields
\begin{alignat}{1}
	\vec v(t,\vec x) &= \nabla\phi(t,\vec x) = \nabla\psi(t^{-1}\vec x), \notag\\
	\rho(t,\vec x) &= \pi^{-1}(-\phi_t-\frac{1}{2}|\nabla\phi|^2) 
	= \pi^{-1}(-\psi+\vec\xi\cdot\nabla\psi-\frac{1}{2}|\nabla\psi|^2).\notag
\end{alignat}
The expression for $\rho$ can be made more pleasant (and independent of $\vec\xi$) by using
\begin{alignat}{1}
	\chi(\vec\xi) &:= \psi(\vec\xi)-\frac{1}{2}|\vec\xi|^2;\notag
\end{alignat}
this yields
\begin{alignat}{1}
	\rho &= \pi^{-1}(-\chi-\frac{1}{2}|\nabla\chi|^2). \myeqlabel{eq:rhoeq}
\end{alignat}
$\nabla\chi=\nabla\psi-\vec\xi$ is called \defm{pseudo-velocity}. 

\myeqref{eq:potflow-divform} then reduces to
\begin{alignat}{1}
	\nabla\cdot(\rho\nabla\chi)+2\rho &= 0 \myeqlabel{eq:chi-divform}
\end{alignat}
(or $+d\rho$, in $d$ dimensions)
which holds in a distributional sense.
For smooth solutions we obtain the non-divergence form
\begin{alignat}{1}
	(c^2I-\nabla\chi\nabla\chi^T):\nabla^2\chi
	= (c^2-\chi_\xi^2)\chi_{\xi\xi}-2\chi_\xi\chi_\eta\chi_{\xi\eta}+(c^2-\chi_\eta^2)\chi_{\eta\eta}
	&= |\nabla\chi|^2-2c^2 \myeqlabel{eq:chi} 
\end{alignat}
Another convenient form is
\begin{alignat}{1}
	(c^2I-\nabla\chi\nabla\chi^T):\nabla^2\psi 
	&= (c^2-\chi_\xi^2)\psi_{\xi\xi}-2\chi_\xi\chi_\eta\psi_{\xi\eta}+(c^2-\chi_\eta^2)\psi_{\eta\eta} 
	= 0. \myeqlabel{eq:psi}
\end{alignat}
Here, \myeqref{eq:cs-uspf} for polytropic pressure law yields 
\begin{alignat}{1}
	c^2 &= c_0^2 + (\gamma-1)(-\chi-\frac{1}{2}|\nabla\chi|^2) \myeqlabel{eq:css}
\end{alignat}

\begin{remark}
	\mylabel{rem:symmetries}%
	\myeqref{eq:chi-divform} inherits a number of symmetries from \myeqref{eq:rhodiv}, \myeqref{eq:mom}: 
	\begin{enumerate}
	\item It is invariant under rotation.
	\item It is invariant under reflection.
	\item It is invariant under translation in $\vec\xi$, which is not as trivial as translation in $\vec x$:
	it corresponds to the Galilean transformation $\vec v\leftarrow \vec v+\vec v_0$, 
	$\vec x\leftarrow \vec x-\vec v_0t$ (with constant $\vec v_0\in\R^d$) in $(t,\vec x)$ coordinates.
	This is sometimes called \defm{change of inertial frame}.
	\end{enumerate}
\end{remark}

\myeqref{eq:chi} is a PDE of mixed type. The type
is determined by the \defm{(local) pseudo-Mach number}
\begin{alignat}{1}
    L &:= \frac{|\nabla\chi|}{c}, \myeqlabel{eq:L}
\end{alignat}
with $0\leq L<1$ for elliptic (pseudo-subsonic), $L=1$ for parabolic (pseudo-sonic), $L>1$ for hyperbolic (pseudo-supersonic) regions.

While velocity $\vec v$ is motion
relative to space coordinates $\vec x$, pseudo-velocity 
$$\vec z:=\nabla\chi$$
is motion relative to similarity coordinates
$\vec\xi$ \emph{at time $t=1$}.

The simplest class of solutions of \myeqref{eq:chi} are the \defm{constant-state solutions}: $\psi$ affine in $\vec\xi$, 
hence $\vec v$, $\rho$ and $c$ constant. They are elliptic in a circle centered in $\vec\xi=\vec v$ with radius $c$, parabolic
on the boundary of that circle and hyperbolic outside.

\begin{convention}
	If we study a function called (e.g.) $\tilde\chi$, then $\tilde\psi$, $\tilde\rho$, $\tilde L$ etc.
	will refer to the quantities computed from it as $\psi$, $\rho$, $L$ are computed from $\chi$
	(e.g.\ $\tilde\psi=\tilde\chi+\frac{1}{2}|\vec\xi|^2$). We will
	tacitly use this notation from now on.
\end{convention}

\subsection{Potential flow shocks}

\mylabel{section:potf-shocks}

Consider a ball $U$ and a simple smooth curve $S$ so that $U=U^u\cup S\cup U^d$ where
$U^u,U^d$ are open, connected, and $S,U^u,U^d$ disjoint.
Consider $\chi:U\rightarrow\R$ so that $\chi=\chi^{u,d}$ in $U^{u,d}$ where 
$\chi^{u,d}\in \spC^2(\overline{U^{u,d}})$. 

$\chi$ is a weak solution of \myeqref{eq:chi-divform} if and only if it is a strong solution in each point of $U_-$ and $U_+$ and 
if it satisfies the following conditions in each point of $S$: 
\begin{alignat}{1}
	\chi^u &= \chi^d, \myeqlabel{eq:chijump} \\
	\vec n\cdot(\rho^u\nabla\chi^u-\rho^d\nabla\chi^d) &= 0 \myeqlabel{eq:momjump}
\end{alignat}
Here $\vec n$ is a normal to $S$.

\myeqref{eq:chijump} and \myeqref{eq:momjump} are the \defm{Rankine-Hugoniot} conditions for self-similar potential flow
shocks. They do not depend on $\vec\xi$ or on the shock speed explicitly; these quantities
are hidden by the use of $\chi$ rather than $\psi$. The Rankine-Hugoniot conditions are derived in the same way as
those for the full Euler equations (see \cite[Section 3.4.1]{evans}).

Note that \myeqref{eq:chijump} is equivalent to 
\begin{alignat}{1}
	\psi^u &= \psi^d. \myeqlabel{eq:psijump}%\tag{\myref{eq:chijump}'}
\end{alignat}
Taking the tangential derivative of \myeqref{eq:chijump} resp.\ \myeqref{eq:psijump} yields
\begin{alignat}{1}
	\frac{\partial\chi^u}{\partial t} &= \frac{\partial\chi^d}{\partial t} \myeqlabel{eq:chitan}, \\
	\frac{\partial\psi^u}{\partial t} &= \frac{\partial\psi^d}{\partial t} \myeqlabel{eq:psitan}.
\end{alignat}
The shock relations imply that the tangential velocity is continuous across shocks.

Define $(z^x_u,z^y_u):=\vec z_u:=\nabla\chi^u$ and $(v^x_u,v^y_u):=\vec v_u:=\nabla\psi^u$.
Abbreviate $z^t_u:=\vec z_u\cdot\vec t$, $z^n_u:=\vec z_u\cdot\vec n$, and same for $v$ instead of $z$.
Same definitions for $d$ instead of $u$.
We can restate the shock relations as
\begin{alignat}{1}
	\rho_uz^n_u &= \rho_dz^n_d, \myeqlabel{eq:steady-continuity} \\
	z_u^t &= z_d^t. \myeqlabel{eq:chitan-z}
\end{alignat}
Using the last relation, we often write $z^t$ without distinction.

The \defm{shock speed} is $\sigma=\vec\xi\cdot\vec n$, where $\vec\xi$ is any point on the shock. 
A shock is \defm{steady} in a point if its tangent passes through the origin.
We can restate \myeqref{eq:steady-continuity} as
\begin{alignat}{1}
	\rho_uv^n_u-\rho_dv^n_d &= \sigma(\rho_u-\rho_d)\notag
\end{alignat}
which is a more familiar form.

We focus on $\rho_u,\rho_d>0$ from now on, which will be the case in all circumstances. 
If $\rho_u=\rho_d$ in a point, we say the shock \defm{vanishes}; in this case $z^n_d=z^n_u$ in that point, by \myeqref{eq:chitan-z}. 
In all other cases $z^n_d,z^n_u$ must have equal sign by \myeqref{eq:chitan-z}; we fix $\vec n$ so that $z^n_d,z^n_u>0$. 
This means the normal points \defm{downstream}. 
The shock is \defm{admissible} if and only if $\rho_u\leq\rho_d$ which is equivalent to $z^n_u\geq z^n_d$.

A shock is called \defm{pseudo-normal} in a point $\vec\xi$ if $z^t=0$ there.
For $\vec\xi=0$, this means that the shock is \defm{normal} ($v^t=0$), but for $\vec\xi\neq 0$ normal and pseudo-normal are not always equivalent.

It is good to keep in mind that for a \emph{straight} shock, $\rho_d$ and $\vec v_d$ are constant if $\rho_u$ and $\vec v_u$ are.
Obviously $\vec z_d$ may vary in this case.

We will need two detailed results.

\begin{proposition}
	\mylabel{prop:shockpolar}%
	Consider a fixed point on a shock with upstream density $\rho_u$ and pseudo-velocity $\vec z_u$ held fixed
	while we vary the normal. Define $\nva:=\measuredangle(\vec z_u,\vec n)$.
	$\rho_d$ is strictly decreasing in $|\nva|$, whereas
	$L_d,|\vec z_d|$ are strictly increasing. $c_d$ is strictly decreasing for $\gamma>1$, constant otherwise.
	Moreover
	\begin{alignat}{1}
		(\partial_\nva\vec v_d)\cdot\vec n = (\partial_\nva\vec z_d)\cdot\vec n &= z^t\Big(\frac{\partial z^n_d}{\partial z^n_u}-1\Big), \myeqlabel{eq:DvdxDnva} \\
		(\partial_\nva\vec v_d)\cdot\vec t = (\partial_\nva\vec z_d)\cdot\vec t &= z^n_d-z^n_u. \myeqlabel{eq:DvdyDnva} 
	\end{alignat}

	If $\vec z_u=(z^x_u,0)$ with $z^x_u>0$, then $z^x_d$ is increasing in $|\nva|$.
\end{proposition}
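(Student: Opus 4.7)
\emph{Proof plan.} The plan is to reduce everything to a one-parameter problem in $\beta$ by exploiting that, with $\rho_u$ and $\vec z_u$ held fixed, the jump conditions \myeqref{eq:steady-continuity} and \myeqref{eq:chitan-z} together with the Bernoulli-type identity $\pi(\rho_u)+\frac{1}{2}(z^n_u)^2=\pi(\rho_d)+\frac{1}{2}(z^n_d)^2$ (obtained by subtracting $\frac{1}{2}(z^t)^2$ from the equality $\chi^u=\chi^d$ via \myeqref{eq:rhoeq}) determine $\rho_d$ and $z^n_d$ as functions of $z^n_u$ alone. In the basis where $\vec z_u=(|\vec z_u|,0)$ and $\vec n=(\cos\beta,\sin\beta)$, one has $z^n_u=|\vec z_u|\cos\beta$ and $z^t=-|\vec z_u|\sin\beta$, so monotonicity in $|\beta|$ reduces to monotonicity in $z^n_u$, which is itself a decreasing function of $|\beta|$.

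First I would invoke the shock-relation lemmas from Section~\myref{section:shocks} (\myref{lemma:srel-M} and \myref{lemma:srel}) to obtain that $\rho_d$ is strictly increasing in $z^n_u$ across an admissible shock, yielding $\rho_d$ strictly decreasing in $|\beta|$. The polytropic formula $c^2=c_0^2(\rho/\rho_0)^{\gamma-1}$ then yields the claim on $c_d$. For $|\vec z_d|$, Bernoulli rearranges to $|\vec z_d|^2=|\vec z_u|^2+2(\pi(\rho_u)-\pi(\rho_d))$; since $\pi$ is strictly increasing and $\rho_d$ decreases in $|\beta|$, $|\vec z_d|$ strictly increases. For $L_d=|\vec z_d|/c_d$ the numerator strictly increases while the denominator is non-increasing, so $L_d$ strictly increases.

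For \myeqref{eq:DvdxDnva} and \myeqref{eq:DvdyDnva}, decompose $\vec z_d=z^n_d\vec n+z^t\vec t$ and differentiate. With $\partial_\beta\vec n=\vec t$ and $\partial_\beta\vec t=-\vec n$, and using that $\vec z_u$ is fixed so $\partial_\beta z^n_u=z^t$ and $\partial_\beta z^t=-z^n_u$, the chain rule gives $\partial_\beta z^n_d=(\partial z^n_d/\partial z^n_u)\,z^t$. Collecting normal and tangential components yields exactly \myeqref{eq:DvdxDnva} and \myeqref{eq:DvdyDnva}, and $\partial_\beta\vec v_d=\partial_\beta\vec z_d$ because $\vec v_d-\vec z_d=\vec\xi$ is held fixed.

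For the last claim, writing $\vec e_1=\cos\beta\,\vec n-\sin\beta\,\vec t$ and combining with the two derivative formulas just established gives
\[
\partial_\beta z^x_d=\sin\beta\bigl[\,2z^n_u-z^n_d-z^n_u\,(\partial_{z^n_u}z^n_d)\,\bigr].
\]
Implicit differentiation of the continuity and Bernoulli relations yields the closed form $\partial_{z^n_u}z^n_d=z^n_d\bigl((z^n_u)^2-c_d^2\bigr)\big/\bigl[z^n_u\bigl((z^n_d)^2-c_d^2\bigr)\bigr]$, and after clearing denominators the bracket collapses to $-(z^n_u-z^n_d)\bigl[z^n_d(z^n_u-z^n_d)+2c_d^2\bigr]\big/\bigl((z^n_d)^2-c_d^2\bigr)$. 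For an admissible shock $z^n_u\geq z^n_d>0$, and the downstream normal component is pseudo-subsonic, $z^n_d<c_d$; hence both numerator and denominator of the fraction are non-positive, making the bracket non-negative and strictly positive when $\beta\neq0$. Thus $\partial_\beta z^x_d$ has the same sign as $\sin\beta$, proving $z^x_d$ strictly increases in $|\beta|$. The principal obstacle is this final sign calculation: the monotonicity of $z^x_d$ is not obvious from the general shock theory and rests on both the explicit expression for $\partial_{z^n_u}z^n_d$ and the admissibility property $z^n_d<c_d$; without either, the bracket could conceivably change sign. The remainder is bookkeeping once the one-parameter reduction is in place.
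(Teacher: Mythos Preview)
Your proposal is correct and complete. The paper itself defers this proposition entirely to \cite{elling-liu-pmeyer-arxiv}, so there is no in-paper argument to compare against; your reconstruction---reducing to the one-variable dependence on $z^n_u$ via the Bernoulli relation, differentiating $\vec z_d=z^n_d\vec n+z^t\vec t$ using $\partial_\beta\vec n=\vec t$, $\partial_\beta\vec t=-\vec n$, and then carrying out the explicit sign analysis of $\partial_\beta z^x_d$ using $\partial_{z^n_u}z^n_d=z^n_d\bigl((z^n_u)^2-c_d^2\bigr)\big/\bigl[z^n_u\bigl((z^n_d)^2-c_d^2\bigr)\bigr]$ together with admissibility ($z^n_u\geq z^n_d$) and the downstream subsonic property ($z^n_d<c_d$)---is exactly the natural proof and matches what one expects the cited argument to contain.
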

\begin{proof}
	This is \pmc{Proposition \pmref{prop:shockpolar}}.
\end{proof}

\begin{proposition}
	\mylabel{prop:vdzero}%
	Consider a straight shock with $v^x_u=0$, $v^y_u<0$ and downstream normal $\vec n=(\sin\nva,-\cos\nva)$
	through $\vec\xi=(0,\eta)$.% (see Figure \myref{fig:horvzero}).
	For every $\nva\in(-\frac{\pi}{2},\frac{\pi}{2})$ there is a unique $\eta=\eta^*_0\in\R$ so that $v^y_d=0$. $\eta^*_0$ and the corresponding
	downstream data are analytic functions of $\nva$. $\eta^*_0$ is strictly increasing in $|\nva|$.

	For the shock passing through $(0,\eta^*_0)$, 
	let $\vxit_L$ and $\vxit_R$ be the two points with $L_d=\sqrt{1-\epsilon}$.
	These points are analytic functions of $\nva$.
	$L^n_u$, $\rho_d$ and $z^n_u$ are increasing functions\footnote{All of these are independent of the location along the (straight) shock.} 
	of $\nva$; 
	$v^x_d$ and $L^n_d$ are decreasing functions of $\nva$.
	For $\nva\in[0,\frac{\pi}{2})$, $\etat_L$ is a strictly decreasing function of $\nva$ with range $(\underline\eta^*_L,\overline\eta^*_0]$,
	where $\overline\eta^*_0$ is the $\eta^*_0$ for $\nva=0$, and $\underline\eta^*_L$ is some \emph{negative} constant.
\end{proposition}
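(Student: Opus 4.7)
The plan is to parametrize the straight shock by $(\nva,\eta)$: at $\vec\xi=(0,\eta)$ the upstream pseudo-velocity is $\vec z_u=(0,v^y_u-\eta)$, so
\[
z^n_u=(\eta-v^y_u)\cos\nva,\qquad z^t=(v^y_u-\eta)\sin\nva,
\]
and these are independent of the position on the shock since $\vec v_u,\rho_u$ are constant. The downstream $z^n_d$, $\rho_d$, $c_d$ then follow analytically from \myeqref{eq:steady-continuity}--\myeqref{eq:chitan-z} and Lemmas \myref{lemma:srel-M}, \myref{lemma:srel}. Projecting $\vec v_d=\vec z_d+\vec\xi$ onto $\vec e_y$ gives the master expression
\[
v^y_d(\nva,\eta) \;=\; \eta\cos^2\nva + v^y_u\sin^2\nva - z^n_d\cos\nva,
\]
on which the whole argument rests.

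For the existence and uniqueness of $\eta^*_0$, at fixed $\nva$ I would first check the endpoints: at $\eta=v^y_u$ the shock vanishes and $v^y_d=v^y_u<0$, while as $\eta\to+\infty$ the polytropic Hugoniot forces $z^n_d/z^n_u=\rho_u/\rho_d\to0$, so $z^n_d\cos\nva=o(\eta)$ and $v^y_d\sim\eta\cos^2\nva\to+\infty$. Strict monotonicity $\partial_\eta v^y_d=\cos^2\nva\,(1-\partial z^n_d/\partial z^n_u)>0$ follows from $\partial z^n_d/\partial z^n_u<1$ on the admissible branch (compare \myeqref{eq:dMRdML}), yielding a unique $\eta^*_0$; the analytic implicit function theorem then makes $\eta^*_0(\nva)$ and the downstream data analytic in $\nva$. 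For monotonicity in $|\nva|$, note first that the reflection $\xi\mapsto-\xi$ preserves the upstream state and the point $(0,\eta)$ while reversing $\nva$, so $\eta^*_0$ is even. For $\nva>0$ I apply Proposition \myref{prop:shockpolar} at fixed upstream data and project onto $\vec e_y$:
\[
\partial_\nva v^y_d \;=\; -\cos\nva\cdot z^t\Big(\tfrac{\partial z^n_d}{\partial z^n_u}-1\Big) + \sin\nva(z^n_d-z^n_u).
\]
At $\eta=\eta^*_0>v^y_u$ and $\nva>0$ we have $z^t=(v^y_u-\eta^*_0)\sin\nva<0$, $\partial z^n_d/\partial z^n_u-1<0$, and $z^n_d-z^n_u<0$ by admissibility, so both terms are strictly negative. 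Implicit differentiation then delivers $d\eta^*_0/d\nva=-\partial_\nva v^y_d/\partial_\eta v^y_d>0$.

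For the remaining claims, tangential continuity $v^t_u=v^t_d$ immediately forces $v^x_d=v^y_u\tan\nva$, strictly decreasing on $[0,\pi/2)$ since $v^y_u<0$. Substituting into the potential-flow Bernoulli relation $c_d^2-c_u^2=(\gamma-1)(|\vec z_u|^2-|\vec z_d|^2)/2$ with $|\vec z_u|^2=(\eta^*_0-v^y_u)^2$ and $|\vec z_d|^2=(v^y_u)^2\tan^2\nva+(\eta^*_0)^2$, together with mass conservation $\rho_uz^n_u=\rho_dz^n_d$, reduces $z^n_u$, $\rho_d$, $c_d$, $L^n_{u,d}$ to explicit functions of $\nva$ alone; differentiating and substituting $d\eta^*_0/d\nva>0$ from the previous step then yields the claimed signs. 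The sonic points $\vxit_L,\vxit_R$ exist because $\vec v_d$ is constant along the straight shock while $\vec z_d=\vec v_d-\vec\xi$ is affine in arc length, so $|\vec z_d|^2$ is a strictly convex quadratic intersecting the level $c_d^2(1-\epsilon)$ in exactly two points (the sonic hypothesis placing its minimum below), and each is analytic in $\nva$ by the implicit function theorem. At $\nva=0$ the horizontal shock has $\vec v_d=(0,0)$, so $\vec z_d=-\vec\xi$ at every shock point and the two sonic points share the $y$-coordinate $\overline\eta^*_0$, giving $\etat_L(0)=\overline\eta^*_0$; the finite lower bound $\underline\eta^*_L>-\infty$ as $\nva\uparrow\pi/2$ follows from the geometric control on the downstream sonic circle afforded by Proposition \myref{prop:shockpolar}.

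The main obstacle will be the sign tracking for $L^n_u$, $\rho_d$, $z^n_u$, $L^n_d$: because $\eta^*_0$ itself moves with $\nva$, the $\nva$-partials couple through $d\eta^*_0/d\nva$, and it is easy to lose track of signs if one works directly from the algebraic formulas. The cleanest route is to express each quantity explicitly in $\nva$ via the implicit relation $v^y_d=0$ and then differentiate the combined shock-plus-Bernoulli system, substituting $d\eta^*_0/d\nva$ from the monotonicity step whenever it appears.
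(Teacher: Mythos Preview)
The paper does not actually prove this proposition: its proof reads in full ``This is \cite[Proposition 2.6.2]{elling-liu-pmeyer-arxiv}.'' So there is no in-paper argument to compare against; you are supplying what the paper outsources.

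Your setup is correct. The parametrization at $(0,\eta)$ gives $z^n_u=(\eta-v^y_u)\cos\nva$ and $z^t=(v^y_u-\eta)\sin\nva$, and your master formula $v^y_d=\eta\cos^2\nva+v^y_u\sin^2\nva-z^n_d\cos\nva$ checks out. The existence, uniqueness, and $|\nva|$-monotonicity of $\eta^*_0$ via the implicit function theorem and Proposition~\ref{prop:shockpolar} are sound, and the clean identity $v^x_d=v^y_u\tan\nva$ (obtained from $v^t_u=v^t_d$ together with $v^y_d=0$) is exactly the right shortcut.

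There are, however, two real gaps beyond the sign-tracking you already flag. First, you do not address the \emph{strict} monotonicity of $\eta^*_L$ in $\nva$ at all: you locate the sonic points as intersections of the shock line with the circle $|\vec\xi-\vec v_d|=c_d\sqrt{1-\epsilon}$ and note analyticity, but never differentiate $\eta^*_L$ in $\nva$. This requires tracking how both the shock line (through $(0,\eta^*_0(\nva))$ with slope $\tan\nva$) and the downstream sonic circle (center $\vec v_d(\nva)$, radius $c_d(\nva)\sqrt{1-\epsilon}$) move together; the argument is not automatic from what you have written. Second, and more seriously, you do not establish that $\underline\eta^*_L$ is \emph{negative}. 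You only argue it is finite (``geometric control on the downstream sonic circle''), but the proposition asserts the stronger claim that the left sonic point eventually drops below the wall $\eta=0$. This is the content that the cited paper actually has to work for, and it does not follow from boundedness alone; you need a quantitative comparison of $\eta^*_0(\nva)$, the shock slope, and the sonic radius $c_d\sqrt{1-\epsilon}$ as $\nva$ increases.
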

\begin{proof}
	This is \pmc{Proposition \pmref{prop:vdzero}}.
\end{proof}

\subsection{Envelope}

\mylabel{section:envelope}

Many techniques in this paper are similar to the construction in \cite{elling-liu-pmeyer-arxiv}; Section \pmref{section:approach} in loc.cit.\ is a good overview.
However, in \pmc{Proposition \pmref{prop:rho}}, 
a lower bound for the shock strength is obtained by a delicate argument using the density. 
Although this argument would reproduce the results of \cite{chen-feldman-selfsim-journal} (namely RR existence for $\theta\approx\frac\pi2$), it cannot
prove the main new contribution of this paper: existence (at least in some cases like $\gamma=5/3$, $M_I=1$) of RR
for $\theta\approx\theta_s$ (with $\theta>\theta_s$), where $\theta_s$ is the smallest $\theta$ allowed by the sonic criterion 
(see Section \myref{section:refl-selfsim}).

For this goal, a new idea is needed: as we will show, the curved portion $S$ of the reflected shock in Figure \myref{fig:rrsmr} left
has an elliptic region of potential 
flow on its right (downstream) side, hence\ downstream pseudo-Mach number $L_d\leq 1$ everywhere. 
Such a shock cannot vanish until it reaches the circle of radius $c_I$ around $\vec v_I$; moreover $L_d\leq 1$ is a constraint on the possible shock tangents,
so that the shock cannot reach the circle quickly. It is bounded away from the circle by the \defm{envelope}:

\begin{figure}
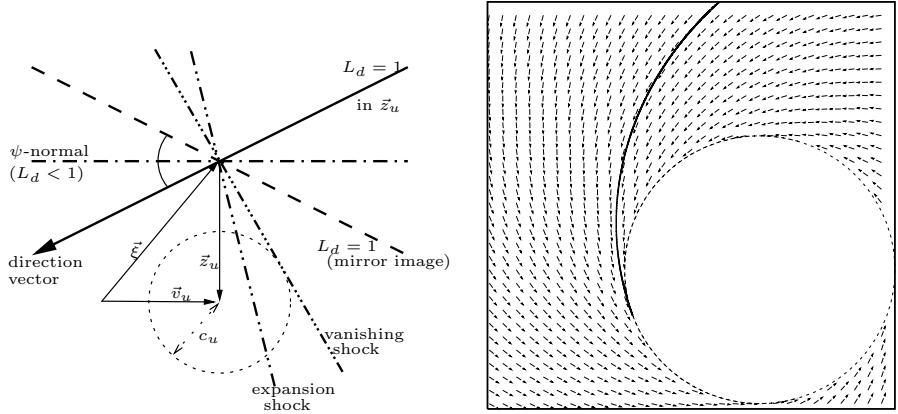

\input{envshock.pstex_t}\hbox to10mm{}\includegraphics[width=.45\linewidth]{envfield.epsi}
\caption{Left: through each $\vec\xi$ farther than $c_u$ from $\vec v_u$ there are exactly two straight shocks (solid, dashed) with $L_d=1$,
	mirror images of each other. The shocks with $L_d\leq 1$ are between them (indicated by arc left of $\vec\xi$).
	The solid lines define the direction field whose integral curves are ``envelopes''.
	Right: no shock with $L_d\leq 1$ can approach $\vec v_u$ faster (in counterclockwise direction) than the counterclockwise envelope.}
\mylabel{fig:env}
\end{figure}

\begin{definition}
	\mylabel{def:envelope}%
	Given constant upstream velocity $\vec v_u$ and sound speed $c_u$.
	Consider a shock through a point $\vec\xi$ with $|\vec z_u|=|\vec v_u-\vec\xi|>c_u$. 
	As shown in Proposition \myref{prop:shockpolar}, 	
	$L_d$ is strictly increasing in $|\beta|$ where $\beta=\measuredangle(\vec z_u,\vec n)\in(-\pi,\pi]$ is the counterclockwise angle from $\vec z_u$ to $\vec n$.

	There are exactly two shock normals so that $L_d=1$. They are mirror-images of each other under reflection across the line with tangent
	$\vec z_u$ through $\vec\xi$ (see Figure \myref{fig:env} left). Consider the one with $\beta>0$; its tangent spans the 
	solid line on Figure \myref{fig:env} left. The tangents for different $\vec\xi$ form a direction field.
	The \defm{counterclockwise envelope} is defined to be a maximal integral curve of that direction field (see Figure \myref{fig:env} right).
\end{definition}

We can parametrize the envelope (like other smooth shocks) in polar coordinates $(r,\phi)$ centered in $\vec v_u$,
by a function $\phi\mapsto r^*(\phi)$ (because the shock relations do not admit shocks with a tangent passing through $\vec v_u$).
The counterclockwise envelope satisfies an ODE of the form 
\begin{alignat}{1}
	\frac{\partial r^*}{\partial\phi}(\phi)=-f(r^*(\phi))  \mylabel{eq:envelope-ode}
\end{alignat}
for some analytic $f$. 

We will not need the fact, but explicit formulas for $f$ can be derived. For example for $\gamma>1$,
\begin{alignat}{1}
	f(r) &= r\sqrt{\frac{
		1-\frac{\gamma+1}{\gamma-1+2(r/c_u)^{-2}}\cdot\Big(\frac{\gamma+1}{2+(\gamma-1)(r/c_u)^2}\Big)^\frac{2}{\gamma-1} 
	}{
		\frac{\gamma+1}{\gamma-1+2(r/c_u)^{-2}}-1
	}}
	\mylabel{eq:envelope-ode-explicit}
\end{alignat}
Moreover it can be shown that the envelope always reaches the circle, 
meeting it in a point where the envelope is $C^1$, but not more regular, and tangent to the circle; it cannot be continued beyond that point. 

\begin{proposition}
	\mylabel{prop:shock-envelope}%
	Let some smooth shock be parametrized as $\phi\mapsto r(\phi)$; let the envelope be parametrized by $\phi\mapsto r^*(\phi)$.
	Assume that $L_d<1$ in every point of the shock. If $r(\phi_0)\geq r^*(\phi_0)$ for some $\phi_0$, then 
	$r(\phi)>r^*(\phi)$ for $\phi>\phi_0$.
	If instead $L_d>1$ in every point of the shock, then
	$r(\phi)<r^*(\phi)$ for $\phi>\phi_0$.
\end{proposition}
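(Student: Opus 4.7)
The plan is to convert the pointwise constraint $L_d<1$ (resp.\ $L_d>1$) into a strict differential inequality for the polar radius $r(\phi)$ of the shock, and then run a standard barrier/first-touch comparison against the envelope ODE \myeqref{eq:envelope-ode}.

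First I would parametrize the shock in polar coordinates centered at $\vec v_u$, writing $\vec\xi-\vec v_u=r(\cos\phi,\sin\phi)$ so that $\vec z_u=\vec v_u-\vec\xi=-r\vec e_r$ and the tangent is $r'\vec e_r+r\vec e_\phi$. The downstream normal $\vec n$ (oriented so $\vec n\cdot\vec z_u>0$) then makes the angle $\nva=\measuredangle(\vec z_u,\vec n)$ of Proposition \myref{prop:shockpolar} with $\vec z_u$, and a short calculation gives $r'(\phi)=-r(\phi)\tan\nva$. By Proposition \myref{prop:shockpolar}, $L_d$ is strictly increasing in $|\nva|$, so there is a unique $\nva_*(r)\in(0,\pi/2)$ with $L_d=1$, and the counterclockwise envelope corresponds to $\nva=\nva_*(r^*)$, giving \myeqref{eq:envelope-ode} with $f(r)=r\tan\nva_*(r)$. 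A shock with $L_d<1$ everywhere satisfies $|\nva|<\nva_*(r)$, hence
\[
	r'(\phi)\;=\;-r(\phi)\tan\nva\;>\;-r(\phi)\tan\nva_*(r(\phi))\;=\;-f(r(\phi));
\]
in the $L_d>1$ case the shock lies in the same counterclockwise family as the envelope (so $\nva>0$), and the same calculation gives the reversed strict inequality $r'(\phi)<-f(r(\phi))$.

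Finally I would set $g(\phi):=r(\phi)-r^*(\phi)$ and argue by a first-touch principle. In the $L_d<1$ case with $g(\phi_0)>0$, suppose toward contradiction that $g$ returns to zero at some first $\phi_2>\phi_0$. Then $r(\phi_2)=r^*(\phi_2)$ gives $f(r(\phi_2))=f(r^*(\phi_2))$, and subtracting the envelope ODE from the strict inequality forces $g'(\phi_2)>0$, contradicting that $g$ decreases into zero from the positive side. The equality case $g(\phi_0)=0$ reduces to the previous one: the same calculation at $\phi_0$ itself yields $g'(\phi_0)>0$, so $g$ is strictly positive immediately to the right of $\phi_0$. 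The $L_d>1$ case is entirely symmetric with all inequalities reversed.

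The only non-routine step is establishing the signed differential inequality with the correct orientation convention for $\nva$; that is essentially an immediate consequence of the monotonicity in Proposition \myref{prop:shockpolar}. After that, the comparison argument is standard, and the only auxiliary point is that both $r$ and $r^*$ must be defined on the interval where the comparison is applied — the relevant portion of the envelope is guaranteed to exist as long as it has not yet touched the sonic circle around $\vec v_u$.
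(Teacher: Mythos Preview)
Your proof is correct and follows the same approach as the paper: translate $L_d<1$ into the strict differential inequality $r'(\phi)>-f(r(\phi))$ via the monotonicity of $L_d$ in $|\nva|$ from Proposition~\ref{prop:shockpolar}, then compare with the envelope ODE \eqref{eq:envelope-ode}. The paper simply invokes the Lipschitz ODE comparison (``invariant region'') theorem where you spell out the first-touch argument, and in fact the paper's written proof omits the $L_d>1$ case entirely, so your explicit remark on the orientation $\nva>0$ needed there is more, not less, than what the paper provides.
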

\begin{proof}
	Our discussion above can be restated as follows: $L_d<1$ for the shock means $-\beta^*<\beta<\beta^*$ where $\beta^*$ is
	the $\beta$ for the envelope. Hence
	$$|\frac{\partial r}{\partial\phi}| < f(r(\phi)).$$
	In particular
	$$\frac{\partial r}{\partial\phi} > -f(r(\phi)).$$
	Since $f$ is smooth, in particular Lipschitz, the invariant region theorem shows that the shock cannot meet the envelope for $\phi>\phi_0$.
\end{proof}

In Proposition \myref{prop:shockenv} we will exploit this fact to bound the curved portion of the reflected shock away from the downstream
wall and to ensure its uniform strength.

\subsection{Sonic criterion}

We focus on the classical case of vertical incident shocks.
In some cases, Theorem \myref{th:elling-rrefl} allows us to construct a regular reflection pattern
like Figure \myref{fig:rrsmr} left for every myref{section:refl}). 
As $\theta\downarrow\theta_s$, the dotted$\theta>\theta_s$ near $\theta_s$, where $\theta_s$ is the smallest $\theta$ allowed
by the sonic criterion (see Section \ parabolic arc in Figure \myref{fig:rrsmr} left approaches the reflection point.

To check whether the envelope condition is satisfied for a particular choice of $\theta$ and incident shock, it suffices to find
the reflected shock and $\vec\xi_C$ on it (see Theorem \myref{th:elling-rrefl}) and to integrate the ODE \myeqref{eq:envelope-ode} 
defining the envelope. Although the ODE is trivially separable, the resulting integral and nonlinear algebraic equation
do not have an explicit solution except for special values of $\gamma$ (see \myeqref{eq:envelope-ode-explicit}). 
Numerical integration is needed to check whether the envelope
meets $\hat B$ or the circle with center $\vec v_I$ and radius $c_I$ before it meets $\hat A$.

\begin{figure}
\input{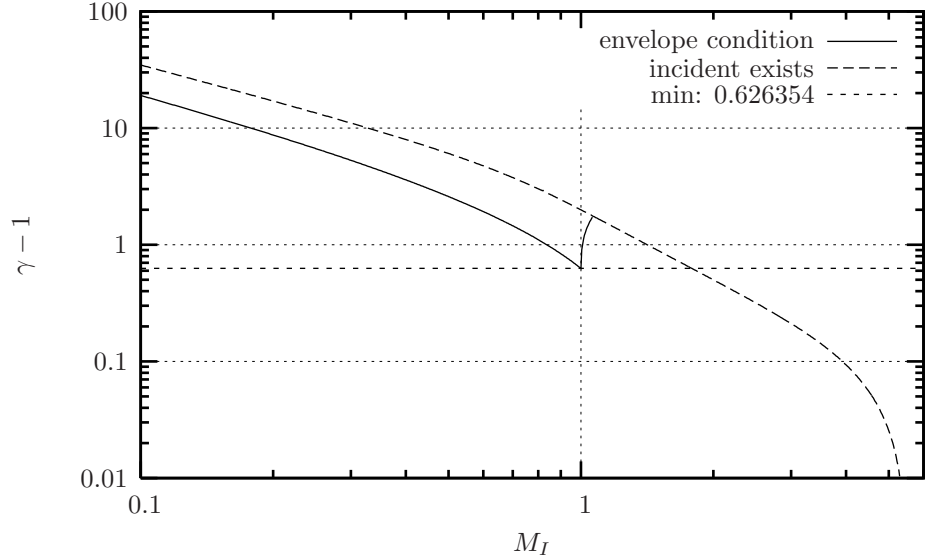}
\caption{For $\beta_Q=0$ (vertical incident shocks) and the set of $M_I,\gamma$ enclosed below the dashed and above the solid line, 
solutions can be constructed for all $\theta\in(\theta_s,\frac\pi2]$.}
\mylabel{fig:good}
\end{figure}

In Figure \myref{fig:good}, we consider arbitrary $\gamma\in[1,\infty)$ and $M_I\in(0,\infty)$ while fixing $\theta=\theta_s$.
Values of $\gamma$ and $M_I$ above the dashed curve do not admit a vertical incident shock with zero velocity
in the $Q$ region (a similar phenomenon occurs in the full Euler equations). 
Values below both solid and dashed curve violate
the envelope condtion. 
Values between solid and dashed curve do have an incident shock
as well as a reflected shock that satisfies the envelope condition. 

The smallest possible $\gamma$ in that feasible region is $\gamma=1.626354...$ with $M_I=1$. In particular
the monatomic gas case $\gamma=5/3$ is covered, whereas $\gamma=7/5$ or $\gamma=4/3$ are not covered. 
(However, the latter values are also possible if we allow non-vertical incident shocks.)
For $\gamma=5/3$, $M_I=1$ we have $\theta_s=55.4583...^\circ$; for $\theta=\theta_s$ the envelope meets $\hat A$ in the point $(-0.000012...,0)$, 
just enough to avoid $\hat B$ and the circle.

While the proof of Theorem \myref{th:elling-rrefl} itself is rigorous, checking the envelope condition is done numerically here,
i.e.\ not a mathematical
proof in the strict sense. However, the shock relations form a small system of nonlinear algebraic equations and the envelope
is defined by \myeqref{eq:envelope-ode}, a scalar nonlinear ODE which is benign except for a mild singularity as $r\downarrow 1$.
The numerical methods for these types of equations are well-understood and a complete convergence theory and error analysis is available ---
which is not at all the case for the full Euler or potential flow PDE. 
Another option is to study rigorous proofs in various asymptotic limits such as $M_I\downarrow 0$, $\gamma\uparrow\infty$.
Moreover the envelope condition is most likely unnecessary since regular reflection up to $\theta=\theta_s$ is observed in numerics for many other
values of $\gamma$ and $M_I$ as well. Since we expect that the condition will be eliminated by further research, it makes little
sense to strive for absolute rigour at this point.

\section{Construction of the flow}
\mylabel{section:ellreg}

\newcommand{\fusp}{\mathcal{F}}
\newcommand{\cfusp}{\overline\fusp}
\newcommand{\IT}{\mathcal{K}}
\newcommand{\BL}{\mathcal{L}}

The elliptic region is constructed as follows: we define a function set $\fusp$ by imposing many constraints on a weighted H\"older space
$\spC^{2,\alpha}_\beta$ (weighted to account for loss of regularity in the corners). An iteration $\IT:\fusp\rightarrow\spC^{2,\alpha}_\beta$
is constructed so that its fixed points solve the PDE and boundary conditions for the elliptic region (see Remark \myref{rem:fp}). $\fusp$ and $\IT$ depend on
several parameters like $\gamma$, collected in a parameter vector $\lambda$. To show that $\IT$ has a fixed point for all $\lambda$,
we use Leray-Schauder degree theory. 

Most of the effort is spent on showing that $\IT$ does not have fixed points on $\partial\fusp$, which implies that $\IT$ has the
same Leray-Schauder degree for all $\lambda$. As $\partial\fusp$ is defined by constraints
in the form of inequalities with continuous sides, 
this is achieved by showing that a fixed point satisfies the \emph{strict} version of each inequality ($<$ instead
of $\leq$). 

A major technical difficulty are the parabolic arcs (dotted arc in Figure \ref{fig:rrsmr} left) where self-similar potential flow \myeqref{eq:psi}
degenerates from elliptic to parabolic. This problem has been solved in \cite{elling-liu-pmeyer-arxiv} (and, by different techniques, in
\cite{chen-feldman-selfsim-journal}), by modifying the arc to be slightly elliptic, with boundary condition $L^2=1-\epsilon$, and obtaining
estimates uniform in $\epsilon$.

For a particular choice of $\lambda$ the problem is much simpler (see Figure \ref{fig:unperturbed}). In that case an explicit solution can be
given and shown to be unique and have nonzero Leray-Schauder index. This implies that $\IT$ has nonzero degree, hence at least one fixed point,
for \emph{every} $\lambda$. The fixed point is extended to a solution on the entire domain by adding the hyperbolic regions and
interface shocks. Using the $\epsilon$-uniform estimates as well as compactness, we can pass to the limit $\epsilon\downarrow 0$ to obtain
a solution of our problem.

\subsection{Parameter set and definitions}

\mylabel{section:parmset}

\begin{figure}
\input{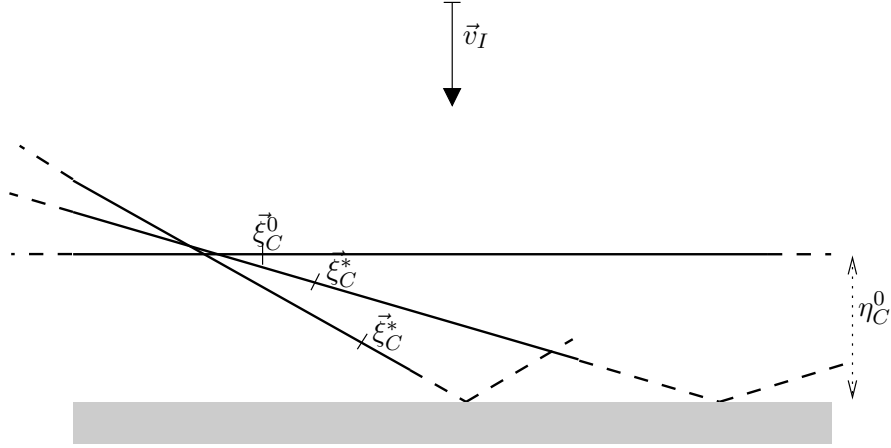}
\caption{Perturbation from the trivial case of $R$ parallel to the wall.}
\mylabel{fig:framerefl}
\end{figure}

\begin{figure}
\input{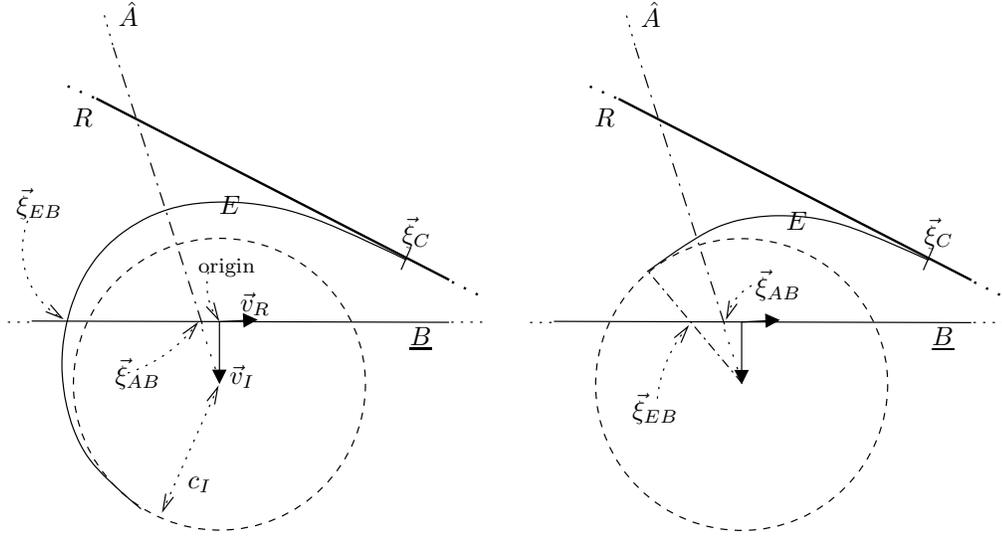}
\caption{$\hat A$ is chosen so that (1) $E$ reaches it before $\fullB$ or the dashed circle, and (2) it forms an angle $\leq90^\circ$ with $R$.}
\mylabel{fig:fullB}
\end{figure}

\begin{figure}
\input{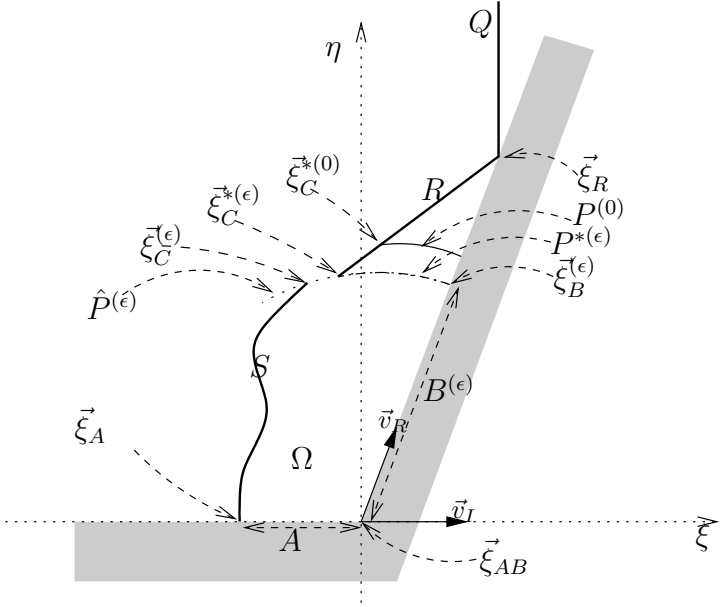}
\caption{To avoid degeneracy, we impose a ``slightly elliptic'' boundary condition, $L^2=1-\epsilon$ for
$\epsilon>0$, on $P^{(\epsilon)}$. 
The shock $S$ is free, along with the endpoints $\vec\xi_A$ and $\vec\xi_C^{(\epsilon)}$ which may slide freely on $\hat A$
resp.\ $\hat P^{(\epsilon)}$. But for fixed points
$\vec\xi_C^{(\epsilon)}$ can be shown to be close to $\vec\xi^{*(\epsilon)}$, hence to $\vec\xi^{*(0)}$.}
\mylabel{fig:frame0}
\end{figure}

Instead of working in the setting of Theorem \myref{th:elling-rrefl}, it will be convenient to choose parameters in a different way.

Choose $\rho_I,c_I>0$. Note that we may fix $\rho_0$ and $c_0$ in the pressure law \myeqref{eq:p-polytropic} separately; however,
given these constants (and $\gamma$), every other $c$ is a function of $\rho$ only (and vice versa).

Let $\epsilon\geq0$ be sufficiently small for the following. Consider a vertical downward velocity $\vec v_I$ onto a solid wall $\fullB$ (see Figure
\myref{fig:framerefl}). 
According to Proposition \myref{prop:vdzero}, there is exactly one straight shock
with upstream velocity $\vec v_u=\vec v_I$ and sound speed $c_u=c_I$ so that $\vec v_d=0$; that shock is horizontal. 
Let $\eta_C^0>0$ be its vertical coordinate. Of the two points on that shock with $L_d=\sqrt{1-\epsilon}$, let $\vec\xi^0_C=(\xi^0_C,\eta^0_C)$ be the 
right one. 
By the same proposition, the shock belongs to a smooth one-parameter family of shocks, each called \defm{R shock}, 
parametrized by $\eta_C^*\in(0,\eta_C^0]$,
so that $v^y_d=0$ and so that $\vec\xi_C^*=(\xi_C^*,\eta_C^*)$ is the right $L_d=\sqrt{1-\epsilon}$ point. 
Define $M^y_I\in[-1,0)$ to be $v_I^y/c_I$ in these coordinates. 
Note that \myeqref{eq:MyI-one} rules out $M^y_I<-1$.
Let $\vec v_R=\vec v_d$ be the downstream velocity of the $R$ shock.

It is not clear whether there is an incident shock $Q$ matching each reflected shock $R$. In fact for $\eta^*_C=\eta^0_C$, the $R$ shock does
not even meet $\fullB$, so clearly there is no RR.
However, for the construction of the elliptic region, a $Q$ shock or reflection pattern are not needed.

To complete the situation of Theorem \myref{th:elling-rrefl}, a wall $\hat A$ is needed. To satisfy the slip boundary condition 
$(\vec v_I-\vec\xi)\cdot\vec n=0$ on $\hat A$, necessarily the extension of $\hat A$ to a line has to pass through $\vec v_I$. 
We fix $\hat A$ by choosing $\vec\xi_{AB}$ on $\fullB$.

Let $E$ be the counterclockwise envelope starting in $\vec\xi_C$. 
If $E$ meets $\fullB$ before it meets the circle with center $\vec v_I$ and radius $c_I$ (Figure \myref{fig:fullB} left), 
let $\vec\xi_{EB}$ be that point. Otherwise (Figure \myref{fig:fullB} right) 
take the line through $\vec v_I$ and the meeting point of $E$ and circle, and let
$\vec\xi_{EB}$ be its intersection with $\fullB$. We allow 
\begin{alignat}{1}
	\xi_{AB}&\in(\xi_{EB},v^x_R] \mylabel{eq:xiAB}
\end{alignat}
(and $\eta_{AB}=0$ obviously). This constraint ensures that (1) the envelope meets $\hat A$ first, while (2) $R$ and $A$ form a sharp
or right angle.

Given $\vec\xi_{AB}$ we let $\hat B$ be the part of $\fullB$ \emph{right} of $\vec\xi_{AB}$. $\hat A$ is the half-line upwards starting in $\vec\xi_{AB}$
whose extension passes through $\vec v_I$. Let $\vec n_A$ be the unit normal of $A$ pointing left, $\vec n_B$ the unit normal of $\hat B$ pointing down.
Let $\vec n_R$ be the downstream (hence downwards) unit normal of the $R$ shock. For each $\vec n_?$, $\vec t_?$ is always the corresponding
unit tangent in \emph{counterclockwise} direction.

\begin{remark}
	\mylabel{rem:Lambda-full}%
	Every local RR pattern that satisfies the conditions of Theorem \myref{th:elling-rrefl} is covered by the parameter ranges defined above.
\end{remark}

$\rho_I$ and $\vec v_I$ define a potential $\psi^I$ for the $I$ region:
$$\psi^I(\vec\xi)=-\pi(\rho_I)-\frac{|\vec v_I|^2}{2}+\vec v_I\cdot\vec\xi.$$
Similar potentials $\psi^R$ and $\psi^Q$ (if an incident shock $Q$ exists) 
are defined by $\rho_R,\vec v_R$ and $\rho_Q,\vec v_Q$.

Now we use Remark \myref{rem:symmetries}: invariance under translation. Translation in self-similar coordinates corresponds to a change of inertial
frame, i.e.\ to adding a constant velocity to all $\vec v,\vec\xi$. Moreover we may rotate by Galilean invariance. This changes Figure \myref{fig:fullB}
to Figure \myref{fig:frame0} which has the coordinates in which we originally posed the self-similar reflection problem.

Let $P^{*(\epsilon)}$ be the circle arc centered in $\vec v_R$ with radius $c_R\cdot\sqrt{1-\epsilon}$ (see Figure \myref{fig:frame0}, where
the coordinates have been changed), passing from $\vec\xi_B^{(\epsilon)}$ on $\hat B$
counterclockwise to $\vec\xi_C^{*(\epsilon)}$ on $R$, excluding the endpoints. (We omit the superscript $\epsilon$ if it is clear from the context.)
$\vxit_C$ will be called the \defm{expected} corner location. 
Let $B^{(\epsilon)}$ be the part of $\hat B$ from $\vec\xi_{AB}$ to $\vec\xi_B$ (excluding the endpoints).

Take $\vec n_R$, $\vec n_Q$ to be the downstream unit normals of the shocks $R,Q$ ($\vec n_R$ points towards $\hat B$). 
Let $\vec n_A,\vec n_B$ be outer unit normals of $\hat A,\hat B$, i.e.\ pointing away from the gas-filled sector $V$ enclosed by $\hat B,\hat A$.

We choose an extended arc $\hat P$ that overshoots $\vxit_C$ by an angle $\delta_{\hat P}>0$, which we choose continuous in $\gamma,\xi_{AB},\eta^*_C$.
The particular $\delta_{\hat P}$ is not important, but it may not depend on $\epsilon$, and $\hat P$ may not have a horizontal tangent
in Figure \myref{fig:onion} coordinates.

$P^*$, $\hat P$, and later $P$, are called \defm{quasi-parabolic arc} (or \defm{parabolic arcs}, by abuse of terminology, 
or short \defm{arcs}).

\paragraph{Parameter set}

The Definitions \myref{def:Lambda}, \myref{def:b} and \myref{def:fusp} 
use many constants and other objects that will be fixed later on. 
In all of these cases, an upper (or lower) bound for each constant is found.
Whenever we say ``for sufficiently small constants'' (etc.), we mean that bounds for them are adjusted.
To avoid circularity, it is necessary to specify which bounds may depend on the values of which other bounds.
In the following list, bounds on a constant may only depend on bounds on \emph{preceding} constants.
\begin{alignat}{1}
	& \delta_{\hat P},
	C_L,C_\eta,\delta_{SB},\delta_{Cc},\delta_{P\sigma},\delta_{Pn},\delta_d,\delta_\rho,\delta_{Lb},\notag\\
	&\qquad C_{Pt},C_{vtR},C_{vnA},C_{Sn},\delta_{vtA},\delta_{vnB},\delta_o,C_d,\epsilon,C_{\spC},r_I,\alpha,\beta. \myeqlabel{eq:constlist}
\end{alignat}
The constants $C_{\spC},r_I,\alpha,\beta$ may depend on $\epsilon$ itself, not just on an upper bound. 
$r_I$ may also depend on $\po$.
The reader may convince himself that the remainder of the paper does respect this order.

The parameters $\gamma$, $\eta^*_C$ and $\xi_{AB}$ used in Leray-Schauder
degree arguments will be restricted to compact
sets below so that any constant that can be chosen continuous in them might as well be taken independent of them.
Dependence on other parameters like $\rho_I$ will not be pointed out explicitly. 

Constants $\delta_?$ as well as $\alpha,\beta,r_I,\epsilon$ are meant to be small and positive, constants $C_?$ are meant to be large 
and finite.

\begin{definition}
	\mylabel{def:Lambda}%
	For the purposes of degree theory we define a restricted parameter set
	$$\Lambda:=\Big\{\lambda=(\gamma,\eta^*_C,\xi_{AB}):
		\gamma\in[1,\overline\gamma],~
		\eta^*_C\in[\underline\eta^*_C,\overline\eta^*_C],
		\xi_{AB}\in[\underline\xi_{AB},\overline\xi_{AB}]
	\Big\}$$
	\sindex{Lambda}{$\Lambda$}%
	\sindex{lambda}{$\lambda$}%
	where it is important that $\xi_{AB}$ and $v^x_R$ are the values in the coordinates of Figure \ref{fig:framerefl} and Figure \myref{fig:fullB};
	clearly	their values are entirely different in any other coordinate system we use.
	$\overline\gamma\in[1,\infty)$ is an arbitrary constant.
	Moreover,
	\begin{alignat}{1}
		\overline\eta^*_C &:= \eta^0_C - \begin{cases}
			0, & \gamma=1, \\
			C_\eta\cdot\epsilon^{1/2}, & \gamma>1,
		\end{cases} \mylabel{eq:Ceta}
	\end{alignat}
	and
	\begin{alignat}{1}
		\overline\xi_{AB} &:= v^x_R - \begin{cases}
			0, & \gamma=1, \\
			C_\xi\cdot\epsilon^{1/2}, & \gamma>1,
		\end{cases} \mylabel{eq:Cxi}
	\end{alignat}
	where $C_\xi,C_\eta$ (to be determined in Proposition \myref{prop:etaa-upperbound}) do not depend on $\epsilon$ or $\lambda$.
	\sindex{Ceta}{$C_\eta$}%
	\sindex{Cxi}{$C_\xi$}%
	$\underline\eta^*_C$ is a constant satisfying $0<\underline\eta^*_C<\overline\eta^*_C$.
	Finally, $\underline\xi_{AB}\in(\xi_{EB},\overline\xi_{AB}]$ may depend on $\gamma$ and $\eta^*_C$.
\end{definition}

\begin{figure}
\input{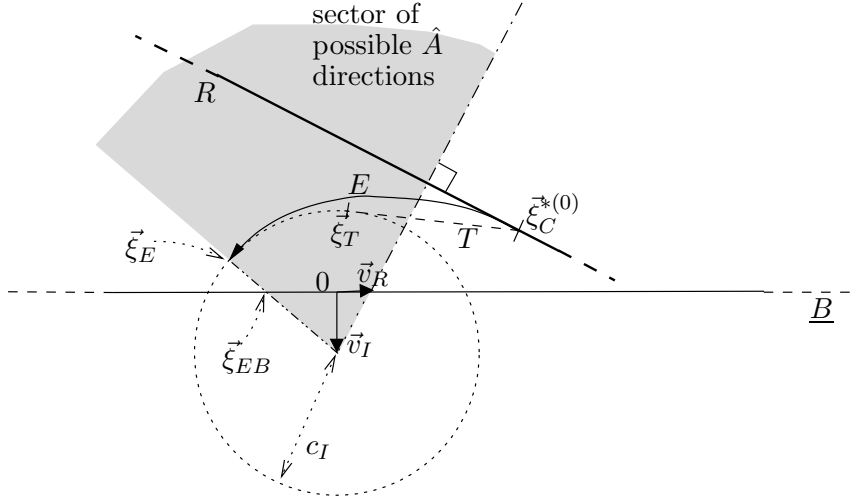}
\caption{The shaded sector consists of all $\hat A$ rays that (1) form a sharp angle with $R$, while (2) meeting $E$ before $E$ meets
$\fullB$ or the circle.}
\mylabel{fig:fullB2}
\end{figure}

\begin{proposition}
	\mylabel{lemma:etax}%
	\mylabel{lemma:Gamma-connected}%
	$\Lambda$ contains $(\gamma,\eta^*_C,\xi_{AB})=(1,\eta^0_C,v^x_R)$ and is path-connected, 
	for $\epsilon$ sufficiently small (depending on $C_\eta,C_\xi$) and $\underline\xi_{AB}$ sufficiently close to $\xi_{EB}$.
\end{proposition}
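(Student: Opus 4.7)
The containment $(1,\eta_C^0,v_R^x)\in\Lambda$ is essentially by inspection. At $\gamma=1$ Definition \myref{def:Lambda} sets $\overline\eta_C^*=\eta_C^0$ and $\overline\xi_{AB}=v_R^x$, so the candidate saturates both upper bounds, while $\underline\eta_C^*\in(0,\eta_C^0)$ is satisfied by its choice. The remaining condition $\underline\xi_{AB}\leq v_R^x$ is compatible with $\underline\xi_{AB}>\xi_{EB}$ exactly when $\xi_{EB}<v_R^x$ at the trivial point. In the trivial configuration the $R$-shock is horizontal with $\vec v_R=0$, so $v_R^x=0$; meanwhile $\vec\xi_C^0=(\xi_C^0,\eta_C^0)$ with $\xi_C^0>0$ (the \emph{right} of the two $L_d=\sqrt{1-\epsilon}$ points), and the counterclockwise envelope leaves $\vec\xi_C^0$ into the half-plane $\xi<\xi_C^0$ by Definition \myref{def:envelope}. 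Combined with the reflection symmetry $\xi\mapsto-\xi$ of the trivial shock picture, this forces $\xi_{EB}<0=v_R^x$, so that $\underline\xi_{AB}$ can indeed be picked in $(\xi_{EB},0]$.

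For path-connectedness the plan is to exploit that the shock relations (Propositions \myref{prop:shockpolar}, \myref{prop:vdzero}) and the envelope ODE \myeqref{eq:envelope-ode} depend analytically on their arguments, so that $\eta_C^0(\gamma)$, $v_R^x(\gamma,\eta_C^*)$, $\vec\xi_C^*(\gamma,\eta_C^*,\epsilon)$ and $\xi_{EB}(\gamma,\eta_C^*,\epsilon)$ are continuous on the relevant compact domain (up to the jump at $\gamma=1$ built into the case distinctions in $\overline\eta_C^*$ and $\overline\xi_{AB}$, which must be bypassed). Given $(\gamma_0,\eta_0,\xi_0)\in\Lambda$, I will connect it to $(1,\eta_C^0,v_R^x)$ in four legs: (i) at $(\gamma_0,\eta_0)$ slide $\xi_{AB}$ linearly from $\xi_0$ to $\overline\xi_{AB}(\gamma_0,\eta_0)$; (ii) at $\gamma=\gamma_0$ slide $\eta_C^*$ from $\eta_0$ down to $\underline\eta_C^*$, setting $\xi_{AB}=\overline\xi_{AB}(\gamma_0,\eta_C^*)$ along the way; (iii) at $\eta_C^*=\underline\eta_C^*$ slide $\gamma$ from $\gamma_0$ down to $1$ with $\xi_{AB}=\overline\xi_{AB}(\gamma,\underline\eta_C^*)$; (iv) at $\gamma=1$ slide $\eta_C^*$ from $\underline\eta_C^*$ up to $\eta_C^0$ with $\xi_{AB}=v_R^x(\eta_C^*)$. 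The detour through $\eta_C^*=\underline\eta_C^*$ in leg (iii) is what avoids the upper-bound jump at $\gamma=1$.

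The main obstacle is verifying that this entire path sits in $\Lambda$, which reduces to the strict inequality $\overline\xi_{AB}>\xi_{EB}$ on its trace (since the $\eta_C^*$-constraint is respected by construction). By the first paragraph the inequality holds strictly at the trivial endpoint $(1,\eta_C^0)$ with margin $|\xi_{EB}(1,\eta_C^0)|>0$; the hypothesis $(\gamma_0,\eta_0,\xi_0)\in\Lambda$ gives the analogous strict inequality at the starting point; and continuity of $\overline\xi_{AB}-\xi_{EB}$, together with compactness of the trace (contained in $[1,\overline\gamma]\times[\underline\eta_C^*,\eta_C^0]$), produces a uniform positive lower bound for the gap along the path. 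Now the hypothesis ``$\underline\xi_{AB}$ sufficiently close to $\xi_{EB}$'' is used to fix $\underline\xi_{AB}(\gamma,\eta_C^*)\leq\xi_{EB}(\gamma,\eta_C^*)+\delta_0$ with $\delta_0$ smaller than that minimum. Finally, ``$\epsilon$ sufficiently small (depending on $C_\eta,C_\xi$)'' ensures that the $O(\epsilon^{1/2})$ truncations $\overline\eta_C^*=\eta_C^0-C_\eta\epsilon^{1/2}$ and $\overline\xi_{AB}=v_R^x-C_\xi\epsilon^{1/2}$ do not eat into the margin, and also that $\overline\eta_C^*(\gamma)>\underline\eta_C^*$ uniformly so leg (iii) makes sense. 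With these choices the four-leg path lies in $\Lambda$, completing the proof.
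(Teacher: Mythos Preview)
There is a genuine gap in your path-connectedness argument. You write that strict inequality $\overline\xi_{AB}>\xi_{EB}$ at the two endpoints of your path, combined with continuity of $\overline\xi_{AB}-\xi_{EB}$ and compactness of the trace, ``produces a uniform positive lower bound for the gap along the path.'' This does not follow: a continuous function on a compact set that is positive at two specified points need not be positive everywhere in between. What you actually need is that the interval $(\xi_{EB},v_R^x]$ is nonempty (equivalently $\xi_{EB}<v_R^x$) for \emph{every} $(\gamma,\eta^*_C)$ on the trace, not just at the endpoints. Once that is established, continuity and compactness do give a uniform positive lower bound on $v_R^x-\xi_{EB}$, and then small $\epsilon$ and $\underline\xi_{AB}$ close to $\xi_{EB}$ finish the job exactly as you describe.

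The paper supplies precisely this missing ingredient: it proves $\xi_{EB}<v_R^x$ for \emph{all} parameter values by a geometric argument. One draws the line $T$ from $\vec\xi_C^{*(0)}$ tangent to the upper half of the circle of radius $c_I$ about $\vec v_I$, treats $T$ as a zero-strength shock with $L_d>1$ except at the tangency point $\vec\xi_T$, and invokes Proposition~\myref{prop:shock-envelope} to conclude that the envelope $E$ stays outside $T$ and so cannot meet the circle to the right of $\vec\xi_T$. The hypothesis $|M_I^y|\leq 1$ (condition~\myeqref{eq:MyI-one}) then forces the circle to meet or cross $\fullB$, and a short case analysis shows that in either scenario $\xi_{EB}<v_R^x$. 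Your symmetry argument at the trivial point does not replace this: the counterclockwise envelope is not itself invariant under $\xi\mapsto-\xi$ (that reflection interchanges the clockwise and counterclockwise envelopes through the two $L_d=\sqrt{1-\epsilon}$ points), so the reflection does not directly bound $\xi_{EB}$, and in any case the argument is needed uniformly in $(\gamma,\eta_C^*)$, not only at the trivial configuration.
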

\begin{proof}
\begin{enumerate}
\item
	We note that the interval $(\xi_{EB},v^x_R]$ has 
	boundaries that are continuous functions of $\lambda$.
\item
	The interval is always nonempty: 
	consider the coordinate system and setting of Figure \myref{fig:fullB}, extended in Figure \myref{fig:fullB2}. 
	Consider a line $T$ through $\vec\xi_C^{*(0)}$ that touches the (upper half of the) circle with center $\vec v_I$ and 
	radius $c_I$ in a point $\vec\xi_T$.
	$T$ can be considered a zero-strength shock (velocity $\vec v_I$, density $\rho_I$ on both sides),
	with $L_d=1$ in $\vec\xi_T$ and $L_d>1$ elsewhere.
	Hence Proposition \myref{prop:shock-envelope} applies: 
	let $\phi\mapsto r(\phi)$ parametrize the line segment from $\vec\xi_C^{*(0)}$ to $\vec\xi_T$; let $\phi\mapsto r_E(\phi)$ parametrize
	$E$. Then $r_E(\phi)>r(\phi)>c_I$ on the interior of the corresponding $\phi$ interval, 
	so the envelope $E$ cannot touch the circle right of $\vec\xi_T$. 
	Moreover, since we have assumed that $M^y_I\leq 1$ (restriction \myeqref{eq:MyI-one}), that means the circle either meets or intersects $\fullB$.
	If $E$ meets $\fullB$ before it meets the circle, then necessarily it meets the part of $\fullB$ \emph{left} of the circle first.

	On the other hand, the extremal choice $\xi_{AB}=v^x_R$
	for $\hat A$ corresponds to (a segment of) the line through $\vec v_I$ and $\vec v_R$ (right side of the shaded sector in Figure
	\ref{fig:fullB2}), which is perpendicular to $R$. Its intersection with the circle is necessarily right of $\vec\xi_T$. 
	Thus: if $E$ meets the circle before it meets $\fullB$, then $\xi_{EB}<v^x_R$ necessarily.
	If $E$ meets $\fullB$ before the circle, then it must meet it left of the origin, so $\xi_{EB}<0<v^x_R$.
	Either way the interval $(\xi_{EB},v^x_R]$ is nonempty.

	Threfore the interval $(\underline\xi_{AB},v^x_R-C_\xi\cdot\epsilon^{1/2}]$ is also nonempty, if $\epsilon$ is sufficiently small 
	(depending on $C_\xi$) and $\underline\xi_{AB}$ sufficiently close to $\xi_{EB}$. 
\item 
	Finally, we show that 
	the special $\lambda=(1,\eta^0_C,v^x_R)$ can be connected by paths in $\Lambda$ to all other $\lambda$:
	it connects to any $(1,\eta^*_C,\xi_{AB})$ with $\eta^*_C\in[\underline\eta^*_C,\eta^0_C)$ and $\xi_{AB}\in[\underline\xi_{AB},v^x_R]$.
	These include $(1,\eta^0_C-C_\eta\cdot\epsilon^{1/2},v^x_R-C_\xi\cdot\epsilon^{1/2})$ which connects to any 
	$(\gamma,\eta^0_C-C_\eta\cdot\epsilon^{1/2},v^x_R-C_\xi\epsilon^{1/2})$ with $\gamma>1$.
	This point, in turn, connects to any $(\gamma,\eta^*_C,\xi_{AB})$ with $\eta^*_C\in[\underline\eta^*_C,\overline\eta^*_C]$
	and $\xi_{AB}\in[\underline\xi_{AB},\overline\xi_{AB}]$.
	Hence $\Lambda$ is path-connected.
\end{enumerate}
\end{proof}

\subsection{Function set and iteration}

\begin{definition}
	\mylabel{def:weighted-hoelder}%
	Let $U\subset\R^n$ open nonempty bounded with $\partial U$ uniformly Lipschitz.
	Let $F\subset\partial U$.
	For $k\in\N_0$, $\alpha\in[0,1]$ and $\beta\in(-\infty,k+\alpha]$ we define
	the \defm{weighted H\"older space} $\spC^{k,\alpha}_\beta(U,F)$ as the set of $u\in\spC^{k,\alpha}(\overline U-F)$ so that 
	$$\|u\|_{\spC^{k,\alpha}_\beta(U,F)}:=
	\sup_{r>0}r^{k+\alpha-\beta}\|u\|_{\spC^{k,\alpha}(\overline U-B_r(F))}$$
	is finite.
\end{definition}

\begin{definition}
	\mylabel{def:b}%
	For sufficiently small $\delta_{\hat P}>0$, 
	there is a function $b\in\spC^2(\overline V)$ 
	with $b,|\nabla b|\leq 1$ so that $b=0$ on $\hat P^{(0)}$, $b>0$ elsewhere,
	$b_n=0$ on $\hat A$ and $\hat B$, and so that $b$ depends continuously on the parameters $\lambda$
	but is independent of $\epsilon$.
	From now on we fix a particular $b$.
	\sindex{b}{$b$}%
\end{definition}
\begin{proof}
	The construction is straightforward. $\delta_{\hat P}$ is taken so small that $\hat P^{(0)}$
	does not meet $\hat A\cup\hat B\cup\{\vec\xi_{AB}\}$ except in $\vec\xi_B^{(0)}$.
\end{proof}
\begin{figure}
	\input{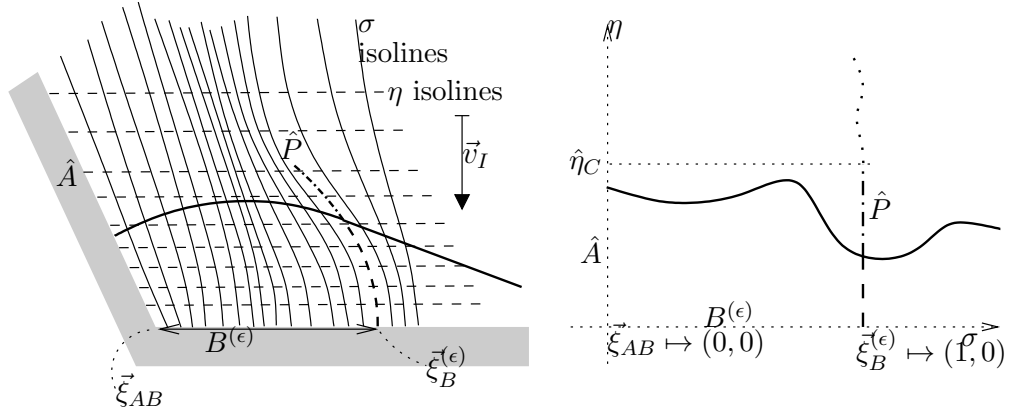}
	\caption{Transformation to ``onion'' coordinates $(\sigma,\eta)$}
	\mylabel{fig:onion}
\end{figure}
\begin{definition}
	\mylabel{def:fusp}%
	\ \par
	{\bf Onion coordinates}

	Rotate Figure \myref{fig:frame0} so that $\hat B$ is the positive horizontal axis (see Figure \myref{fig:onion} left),
	then shift horizontally so that $\vec v_I$ is vertical (Remark \ref{rem:symmetries}).
	Define new coordinates $(\sigma,\eta)\in\R^2$ (see Figure \myref{fig:onion} right) so that
	\begin{enumerate}
	\item
		the coordinate change from $(\xi,\eta)$ to $(\sigma,\eta)$ is $\spC^\infty$ with $\spC^\infty$ inverse,
	\item
		$B^{(\epsilon)}$ maps to $(0,1)\times\{0\}$,
	\item
		$\hat A$ maps to $\{0\}\times(0,\infty)$,
	\item 
		$\hat P^{(\epsilon)}$ maps to $\{1\}\times(0,\hat\eta_C)$ (where $\hat\eta_C$ is the $\eta$ coordinate of the upper endpoint
		of $\hat P^{(\epsilon)}$), 
	\item
		$\vec\xi_B^{(\epsilon)}$ maps to $(1,0)$,
	\item
		$\vec\xi_{AB}$ maps to $(0,0)$.
	\end{enumerate} 
	We require that the change of coordinates and its inverse depend continuously (in the $\spC^\infty$ topology) on $\lambda\in\Lambda$.
	The construction is straightforward.

	Here and in what follows, we will use the weighted H\"older spaces $\spC^{2,\alpha}_\beta(\overline U)$,
	as in Definition \myref{def:weighted-hoelder}.
	The domain $U$ is either $[0,1]^2$ with $F=\{(0,0),(1,1)\}$,
	or $\overline\Omega$ with $F=\{\vec\xi_{AB},\vxia_C\}$ (to be defined).
	For the shock parametrization
	we use $U=[0,1]$ with $F=\{0,1\}$, or (in Figure \myref{fig:onion} left coordinates) $U=[\xi_A,\xi_C]$ with $F=\{\xi_C\}$;
	for $U=P$ we use $F=\{\vec\xi_C\}$, and for $U=A$ or $U=B$ we take $F=\{\vec\xi_{AB}\}$.
	We omit $F$ as it will be clear from the context.
	$\beta\in(1,2)$ and $\alpha\in(0,\beta-1]$ will be determined later.\sindex{alpha}{$\alpha$}\sindex{beta}{$\beta$}
	$C^{2,\alpha}_\beta$ are Banach spaces so that standard functional analysis applies.
	Moreover, 
	$C^{2,\alpha}_\beta(\overline\Omega)$ is continuously embedded in
	$C^1(\overline\Omega)$, so we have $C^1$ regularity in the corners as well, which is crucial.

{\bf Free boundary fit}

	Let $\cfusp$ be the set of functions $\po\in \spC^{2,\alpha}_\beta([0,1]^2)$ that satisfy all of the many conditions explained below.
	Require
	\begin{alignat}{1}
		\|\po\|_{\spC^{2,\alpha}_\beta([0,1]^2)} \leq C_\spC(\epsilon). \myeqlabel{eq:regularity}
	\end{alignat}

	The curves of constant $\sigma$ (isolines) in the $(\xi,\eta)$ coordinate plane are nowhere horizontal, 
	since the other coordinate is $\eta$. Moreover $\psi^I_\eta = v^y_I< 0$ and $\psi^I_\xi=v^x_I=0$, 
	so for all $\sigma\in[0,1]$ there is a unique point $(\xi,s(\sigma))$ on the isoline
	so that 
	\begin{alignat}{1}
		\psi^I(\xi,s(\sigma)) &= \po(\sigma,1). \mylabel{eq:sdef}
	\end{alignat}
	We define another coordinate transform by first mapping $(\sigma,\zeta)\in[0,1]$ to 
	$(\sigma,\eta)$ with $\eta=s(\sigma)\zeta$ and then 
	mapping to $\vec\xi$ with the previous coordinate transform. 

	Let $\vec\xi_A$ resp.\ $\vxia_C$
	\sindex{xiAvector}{$\vxia_A$}%
	\sindex{xiCvector}{$\vxia_C$}%
	be the $\vec\xi$ coordinates for the $(\sigma,\zeta)$ plane points $(0,1)$ and $(1,1)$.
	Let $S$ be the $\vec\xi$ plane curve for $(0,1)\times\{1\}$ (it is the graph of $s$, with endpoints $\vec\xi_A$ and $\vxia_C$). 
	Define $P$ resp.\ $A$ resp.\ $\Omega$\sindex{Omega}{$\Omega$}
	\sindex{P}{$P$}%
	to be the image of $\{1\}\times(0,1)$ resp.\ $\{0\}\times(0,1)$ resp.\ $(0,1)\times(0,1)$.

	Require shock-wall separation:
	\begin{alignat}{1}
		d(S,B) &\geq \delta_{SB} > 0. \myeqlabel{eq:shockwall}
	\end{alignat}
	\sindex{deltaSB}{$\delta_{SB}$}%
	\myeqref{eq:shockwall} ensures that the map from $(\sigma,\zeta)$ to $\vec\xi$ is 
	a well-defined change of coordinates, uniformly nondegenerate (depending on $\delta_{SB}$ and $C_\spC$), 
	with $\spC^{2,\alpha}_\beta([0,1]^2)$ resp.\ $\spC^{2,\alpha}_\beta(\overline\Omega)$ regularity.	
	It is clear now that $\partial\Omega$ is the union of the disjoint sets $S$, $P$, $A$, $B$, and 
	$\{\vxia_C,\vec\xi_B,\vec\xi_A,\vec\xi_{AB}\}$.

	Require: corner close to target: %d_13,d_14
	\begin{alignat}{1} 
		|\etaa_C-\etat_C| \leq \epsilon^{1/2}, \myeqlabel{eq:cornerregion}
	\end{alignat}
	We require $\epsilon$ to be so small that $\vxia_C\in\hat P$.
	
	For later use we define $\eta^\pm_C:=\etat_C\pm\epsilon^{1/2}$ and let $\xi^\pm_C$ be so that
	$\vec\xi^\pm_C\in\hat P_C$.
	\sindex{xiplus}{$\xi^+_C$}%
	\sindex{etaplus}{$\eta^+_C$}%
	\sindex{ximinus}{$\xi^-_C$}%
	\sindex{etaminus}{$\eta^-_C$}%

	Corner cone:\sindex{deltaCc}{$\delta_{Cc}$} %d_15,d_16
	\begin{alignat}{1}
		\sup_{\vec\xi,\vec\xi'\in\overline\Omega}\measuredangle(\vec\xi-\vec\xi_C,\vec\xi'-\vec\xi_C)
		 &\leq \pi-\delta_{Cc}.
		\myeqlabel{eq:cornercone} 
	\end{alignat}
	($\measuredangle(\vec x,\vec y)$ is the counterclockwise angle from $\vec x$ to $\vec y$.)

{\bf Iteration}

	Here we change to the coordinates of Figure \myref{fig:frame0} for the remainder of the definition.

	Shock strength/density: require that 
	\begin{alignat}{1}
		-\xo-\frac{1}{2}|\nabla\xo|^2 >0, \myeqlabel{eq:rhoprep}
	\end{alignat}
	so that $\rho$ is well-defined (see \myeqref{eq:rhoeq}), and require
	\begin{alignat}{1}
		\min_{\overline\Omega}\rho &\geq \rho_I+\delta_\rho. \myeqlabel{eq:rhomin}
	\end{alignat}
	\sindex{deltarho}{$\delta_\rho$}%

	Pseudo-Mach number bound: require %d_7
	\begin{alignat}{1}
		L^2 \leq 1-\delta_{Lb}\cdot b \qquad\text{in $\overline\Omega$,} \myeqlabel{eq:ellip}
	\end{alignat}
	\sindex{deltaLb}{$\delta_{Lb}$}%
	(Note that $L$ is well-defined because by \myeqref{eq:rhomin} $\rho>0$, so $c>0$.)
	$b=0$ on $\hat P^{(0)}_C$ which has distance $\geq\frac\epsilon3$ (for sufficiently small $\epsilon$) from $\overline\Omega$, so 
	\myeqref{eq:ellip} implies
	\begin{alignat}{1}
		L^2\leq 1-\frac13|\nabla b|_{L^\infty}\delta_{Lb}\cdot\epsilon \leq 1-\frac13\delta_{Lb}\cdot\epsilon
		\qquad\text{in $\overline\Omega$,} \myeqlabel{eq:ellipC}
	\end{alignat}

	Require: there is\footnote{$\pn$ is the product of an iteration step with input $\po$.
		We will ensure in Proposition \myref{prop:pn-uqcont} that $\pn$ is unique and continuously 
		dependent on $\po$.} 
	a function $\pn\in \spC^{2,\alpha}_\beta(\overline\Omega)$ with the following properties:
	\begin{enumerate}
	\item
		$\po$ close to $\pn$: % d_6
		\begin{alignat}{1}
			\|\po-\pn\|_{\spC^{2,\alpha}_\beta([0,1]^2)} &\leq r_I(\po) \myeqlabel{eq:oldnew}
		\end{alignat}
		where $r_I\in C(\cfusp;(0,\infty))$ is a continuous function to be determined later.
		\sindex{rI}{$r_I$}%
	\item
		Right away we require $r_I$ to be so small that
		\begin{alignat}{1}
			-\xn-\frac{1}{2}|\nabla\xn|^2 & >0, \myeqlabel{eq:tilderho} 
		\end{alignat}
		so that in particular $\hat\rho$ is well-defined and positive.
		Moreover, require
		\begin{alignat}{1}
			\nabla\pn &\neq \vec v_I, \myeqlabel{eq:faken}
		\end{alignat}		
	\item
		We require $r_I$ to be so small that (using \myeqref{eq:ellipC})
		\begin{alignat}{1}
			\big(c_0^2+(1-\gamma)(\xo+\frac{1}{2}|\nabla\xn|^2)\big)I-\nabla\xn^2 &>0, \myeqlabel{eq:tildeL}
		\end{alignat}
		i.e.\ is a (symmetric) positive definite matrix.
	\item
		Let $\BL=\BL(\po,\pn)$ be defined in $\vec\xi$ coordinates as
		\sindex{L}{$\BL$}%
		\begin{alignat}{1}
			& \Big(\big(c_0^2+(1-\gamma)(\xo+\frac{1}{2}|\nabla\xn|^2)\big)I-\nabla\xn^2\Big):\nabla^2\pn, \myeqlabel{eq:itn-inner} \\
			& \frac{|\nabla\xn|^2}{2} + \frac{(1-\epsilon)\big((\gamma-1)\xo+c_0^2\big)}{2+(1-\epsilon)(\gamma-1)}, \myeqlabel{eq:itn-parabolic} \\
			& \big(\hat\rho\nabla\xn-\rho_I\nabla\chi^I\big)\cdot\frac{\vec v_I-\nabla\pn}{|\vec v_I-\nabla\pn|}, \myeqlabel{eq:itn-shock} \\
			& \nabla\pn\cdot\vec n_A, \nabla\pn\cdot\vec n_B \Big). \myeqlabel{eq:itn-wall}
		\end{alignat}
		where the codomain is 
		$$Y:=\spC^{0,\alpha}_{\beta-2}(\overline\Omega)
		\times \spC^{1,\alpha}_{\beta-1}(\overline S)
		\times \spC^{1,\alpha}_{\beta-1}(\overline P)
		\times \spC^{1,\alpha}_{\beta-1}(\overline A)
		\times \spC^{1,\alpha}_{\beta-1}(\overline B).$$
		\myeqref{eq:itn-shock} is well-defined by
		\myeqref{eq:tilderho} and \myeqref{eq:faken}. The other components have no singularities.

		Note:
		$\nabla\po\in\spC^{1,\alpha}_{\beta-1}$,
		so $|\nabla\xo|^2\in\spC^{1,\alpha}_{\beta-1}$, 
		so 
		$$\Big(\big(c_0^2+(1-\gamma)(\xo+\frac{1}{2}|\nabla\xn|^2)\big)I-\nabla\xn^2\Big)\in\spC^{1,\alpha}_{\beta-1}
			\contemb\spC^{0,\beta-1}\contemb\spC^{0,\alpha}$$ ($\alpha\leq\beta-1$ as required above), 
		and $\nabla^2\po\in\spC^{0,\alpha}_{\beta-2}$,
		so \myeqref{eq:itn-inner} is $\in\spC^{0,\alpha}_{\beta-2}$.
		In the same way we check that \myeqref{eq:itn-parabolic}, \myeqref{eq:itn-shock} and \myeqref{eq:itn-wall} are $\spC^{1,\alpha}_{\beta-1}$.

		For $\pn$ we use the $C^{2,\alpha}_\beta(\overline\Omega)$ topology.
		We pull back $\pn$ and the value of $\BL$ to $(\sigma,\zeta)$
		coordinates, via the coordinate transform defined by $\po$ (see above), so that we have a fixed domain
		$[0,1]^2$ for all Banach spaces. Then $\BL$ is a nonlinear smooth map
		in the corresponding topologies.

		Most importantly: require 
		\begin{alignat}{1}
			\BL(\po,\pn) &= 0. \myeqlabel{eq:pn}
		\end{alignat}
	\end{enumerate}

{\bf Other bounds}

	Require % d_1
	\begin{alignat}{1}
		\|\po\|_{\spC^{0,1}(\overline\Omega)} \leq C_L \myeqlabel{eq:lip}
	\end{alignat}
	where $C_L$ may not depend on $\epsilon$. 
	\sindex{CL}{$C_L$}%

	$\chi_t$ and $\chi_n$ on parabolic arc: %d_8, d_9
	\begin{alignat}{1}
		\max_{\overline P}c^{-1}|\frac{\partial\xo}{\partial t}| &\leq C_{Pt}\cdot\epsilon^{1/2}, \myeqlabel{eq:partan} \\
		\max_{\overline P}c^{-1}\frac{\partial\xo}{\partial n} &\leq -\delta_{Pn}. \myeqlabel{eq:parnor}
	\end{alignat}
	\sindex{CPt}{$C_{Pt}$}%
	\sindex{deltaPn}{$\delta_{Pn}$}%
	We emphasize that $\delta_{Pt},\delta_{Pn}$ may depend \emph{only} on $\lambda$, but not on $\epsilon$ (or $\po$).

	Velocity components:
	\begin{alignat}{1} % d_10
		\vec v\cdot\vec n_A \leq C_{vnA}\cdot\epsilon^{1/2}, 
		\qquad\text{in $\overline\Omega$,} 
		\myeqlabel{eq:Anorvel}
	\end{alignat}
	\begin{alignat}{1} % d_10
		\vec v\cdot\vec t_R \leq \vec v_R\cdot\vec t_R + C_{vtR}\cdot\epsilon^{1/2},
		\qquad\text{in $\overline\Omega$,} 
		\myeqlabel{eq:Rtanvel}
	\end{alignat}
	\begin{alignat}{1}
		\vec v\cdot\vec n_B &\leq \vec v_I\cdot\vec n_B - \delta_{vnB} \qquad\text{in $\overline\Omega$} \myeqlabel{eq:Bnorvel}
	\end{alignat}
	and
	\begin{alignat}{1}
		\vec v\cdot\vec t_A \leq \vec v_I\cdot\vec t_A-\delta_{vtA}
		\qquad\text{in $\overline\Omega$.}
		\myeqlabel{eq:Atanvel}
	\end{alignat}
	\sindex{deltavnB}{$\delta_{vnB}$}%

	Shock normal:
	Let $N\subset S^1$ (unit circle) be the set of $\vec n$ counterclockwise from $\vec n_R$ to $\vec t_A$. Then
	the shock normal satisfies\sindex{deltaSn}{$\delta_{Sn}$}%
	\begin{alignat}{1}
		\sup_Sd(\vec n,N) &\leq C_{Sn}\cdot\epsilon^{1/2}. \myeqlabel{eq:shocknormal}
	\end{alignat}

	Set $\Sigma_1:=A$, $\Sigma_2:=S$, $\Sigma_3:=P$ and $\Sigma_4:=B$. 
	\sindex{Sigma}{$\Sigma_i$}%
	Write the components \myeqref{eq:itn-parabolic}, \myeqref{eq:itn-wall}, \myeqref{eq:itn-shock} of $\BL$
	as 
	$$g^i(\vec\xi,\xn(\vec\xi),\subeq{\nabla\xn(\vec\xi)}{=:\vec p})\qquad (i=1,\dotsc,4),$$ 
	\sindex{gi}{$g^i$}%
	where the $\vec\xi$ dependence includes the dependence on 
	$\xo(\vec\xi)$ and $\nabla\xo(\vec\xi)$. 

	$g^2$ has some singularities, but not on the set of $\vec\xi,\chi,\nabla\chi$ so that \myeqref{eq:Bnorvel}
	and \myeqref{eq:rhomin} (resp.\ \myeqref{eq:tilderho} and \myeqref{eq:faken}) are satisfied. That set is simply connected,
	so we can modify $g^2$ on its complement and extend it smoothly to $\overline\Omega\times\R\times\R^2$.
	The modification is chosen to depend smoothly on $\lambda$. 

	Require uniform obliqueness: 
	\begin{alignat}{1}
		|g^i_{\vec p}\cdot\vec n| &\geq \delta_o|g^i_{\vec p}|\qquad\forall\vec\xi\in\Sigma_i. \myeqlabel{eq:ndobb}
	\end{alignat}
	\sindex{deltao}{$\delta_o$}%

	Functional independence in upper corners: for $i,j=1,4$ and for $i,j=2,3$ set
	$$G:=\begin{bmatrix}g^i_{p^1} & g^j_{p^1}\\g^i_{p^2} & g^j_{p^2}\end{bmatrix},$$
	regard it as a function of $\vec\xi$ (including the dependence on $\nabla\xn(\xi)$) 
	and require\sindex{deltad}{$\delta_d$}\sindex{Cd}{$C_d$}%
	\begin{alignat}{1}
		\|G\|,\|G^{-1}\| \leq C_d \qquad\text{in $B_{\delta_d}(\vec\xi_C)\cap\overline\Omega$.} \myeqlabel{eq:Gb}
	\end{alignat}

	Let $\cfusp$ be the set\footnote{The notation $\cfusp$ does not necessarily imply that $\cfusp$ is the closure of $\fusp$.} 
	of admissible functions so that all of these conditions are satisfied.
	\sindex{F}{$\fusp$, $\fusp_\lambda$, $\cfusp$, $\cfusp_\lambda$}%
	Define
	$\fusp$ to be the set of admissible functions such that all of these conditions are satisfied
	with \emph{strict} inequalities, i.e.\ replace $\leq,\geq$ by $<,>$, ``increasing'' by ``strictly increasing'' etc.

[This is the end of Definition \myref{def:fusp}.]
\end{definition}

The elliptic problem is solved by iteration; $\pn$ is the new iterate, $\po$ the old one. $\BL$ defines $\pn$, as we show later.
As always, the iteration is designed so that its fixed points solve the problem:
\begin{remark}
	\mylabel{rem:fp}%
	If $\pn=\po$, then \myeqref{eq:itn-inner}, \myeqref{eq:itn-shock}, \myeqref{eq:itn-parabolic}, \myeqref{eq:itn-wall}
	and the definition of $S$ yield
	\begin{alignat}{1}
		(c^2I-\nabla\chi^2):\nabla^2\psi &= 0\qquad\text{in $\overline\Omega$,} \notag\\
		\nabla\chi\cdot\vec n &= 0 \qquad\text{on $\overline A$ and $\overline B$,} \notag\\
		\chi^I &= \chi \qquad\text{and} \notag\\
		(\rho\nabla\chi-\rho_I\nabla\chi^I)\cdot\vec n &= 0\qquad\text{on $\overline S$,} \notag\\
		L &= \sqrt{1-\epsilon}\qquad\text{on $\overline P$}\notag
	\end{alignat}
	(we may take closures by regularity \myeqref{eq:regularity}).
\end{remark}

\begin{remark}
\mylabel{rem:reflection}%
Consider a coordinate system where $\vec\xi_{AB}=0$. 
For any point on $A$ or $B$, 
we can use even reflection of $\po$ across the corresponding boundary to obtain a new situation where the point is in the \emph{interior}.
(In $\vec\xi_A$ or $\vec\xi_B$, we obtain a new situation with a point at a shock resp.\ quasi-parabolic arc with an elliptic region on
one side.)
The boundary condition $\chi_n=\po_n=0$ (due to $\vec\xi_{AB}=0$), for even reflection of $\po$, implies that $\po$ is $C^1$ across the 
boundary; then
necessarily it is also $C^{2,\alpha}$.

For fixed points $\po=\pn$, standard regularity theory immediately yields that the solution is locally analytic (even after reflection).
The same technique applied to $\pn$ and to solutions $\pnl$ of linearized equations 
(here $\po$, $\pn$ and $\pnl$ are reflected) yields $C^{2,\alpha}$ regularity. (The same argument applies to $S$ extended by mirror reflection across
$\hat A$.)
\end{remark}

\begin{proposition}
	\mylabel{prop:Lxn-iso}%
	For sufficiently small $\epsilon$ (with bound depending only on $C_{Pt}$) and $r_I$ (depending continuously and only on $\po,\delta_{vx}$):

	for all $\po\in\cfusp$, $\BL(\po,\pn')$ is well-defined for $\pn'$ near $\po$, and the Fr\'echet derivative
	$\partial\BL/\partial\pn'(\po,\po)$ (of $\BL$ with respect to its second argument $\pn'$,
	evaluated at $\pn'=\po$)
	is a linear isomorphism of $\spC^{2,\alpha}_\beta$ onto $Y$. 
\end{proposition}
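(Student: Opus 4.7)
The statement asserts two things: (i) that $\BL(\po,\pn')$ is a smooth map of $\pn'$ in a small $\spC^{2,\alpha}_\beta$-ball around $\po$, and (ii) that its Fr\'echet derivative at $\pn'=\po$ is an isomorphism onto $Y$. Part (i) is a soft statement about open conditions: the three assumptions eq:tilderho, eq:faken, eq:tildeL required to evaluate $\BL(\po,\pn')$ are all open in the $\spC^1$ topology of $\pn'$. At $\pn'=\po$ they hold by eq:rhoprep, by the fact that $\nabla\po=\vec v$ differs pointwise from $\vec v_I$ (since $\rho\geq\rho_I+\delta_\rho$ on $\overline\Omega$ by eq:rhomin), and by eq:ellipC (which is equivalent to the matrix in eq:tildeL being uniformly positive definite). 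Therefore these conditions persist on a neighborhood whose radius depends continuously on $\po$; choose $r_I(\po)$ smaller than that radius and smaller than the thresholds implicit in the later bounds, and well-definedness and smoothness of $\BL$ follow.

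For part (ii), the plan is to write out the derivative $D:=\partial\BL/\partial\pn'(\po,\po)$ component-wise and recognize it as a linear elliptic oblique-derivative problem on $\Omega$. On the interior, differentiating eq:itn-inner in $\pn'$ at $\pn'=\po$ gives
\begin{alignat}{1}
	L_{\mathrm{int}}\delta\pn
	&= \Big(\big(c_0^2+(1-\gamma)(\xo+\tfrac12|\nabla\xo|^2)\big)I-\nabla\xo\nabla\xo^T\Big):\nabla^2\delta\pn + \vec b\cdot\nabla\delta\pn,\notag
\end{alignat}
whose principal part is uniformly elliptic by eq:tildeL. On $P$, $S$, $A$, $B$ the derivatives of eq:itn-parabolic, eq:itn-shock, eq:itn-wall are first-order operators $L_P\delta\pn,L_S\delta\pn,L_A\delta\pn,L_B\delta\pn$ whose conormal directions are exactly the $g^i_{\vec p}$ in eq:ndobb. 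Hence all four boundary conditions are uniformly oblique by eq:ndobb.

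Next I invoke the weighted Schauder theory for linear elliptic oblique-derivative problems on piecewise smooth domains with two corners at $\vec\xi_{AB}$ (where $A$ meets $B$) and $\vec\xi_C$ (where $P$ meets $S$). Away from the corners the problem has smooth coefficients, uniformly elliptic principal part, and smooth uniformly oblique boundary conditions on smooth arcs, so standard Schauder estimates give a $\spC^{2,\alpha}$ a priori bound modulo the corner behaviour. At each corner the two adjacent boundary operators are either $(L_A,L_B)$ or $(L_P,L_S)$, and eq:Gb provides uniform invertibility of the coefficient matrix $G$ of the pair. Together with eq:cornercone (which bounds the opening angle away from $\pi$) this lets me apply a Kondratiev/Lieberman-type corner estimate in the weighted space $\spC^{2,\alpha}_\beta$, giving
\begin{alignat}{1}
	\|\delta\pn\|_{\spC^{2,\alpha}_\beta(\overline\Omega)} &\leq C\,\|D\delta\pn\|_Y\notag
\end{alignat}
for $\beta\in(1,2)$ and $\alpha\in(0,\beta-1]$ chosen compatibly with the corner angles. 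This estimate supplies injectivity of $D$ and, via the method of continuity (deforming the coefficients to a constant-coefficient Poisson-type problem on the same domain, along which ellipticity, obliqueness, and the invertibility of $G$ are preserved), surjectivity as well. The open mapping theorem then turns $D$ into an isomorphism of $\spC^{2,\alpha}_\beta$ onto $Y$.

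\textbf{Main obstacle.} The hard part is the corner estimate at $\vec\xi_{AB}$ and $\vec\xi_C$: the weight $\beta$ and the H\"older exponent $\alpha$ must be chosen inside the range for which no ``singular exponent'' of the model corner problem falls into $(1,\beta+\alpha)$, and the corresponding model problem must be genuinely solvable. The functional independence condition eq:Gb and the corner cone condition eq:cornercone, together with the obliqueness eq:ndobb, are exactly what is needed to read off these admissible ranges, but verifying this amounts to appealing to the corner theory developed in Section \myref{section:corner} (cf.\ Proposition \myref{prop:corner}) rather than producing a self-contained estimate here. Once that is granted, the rest of the argument is a routine specialization of elliptic Schauder theory to the linearization of $\BL$.
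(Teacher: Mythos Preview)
Your outline is essentially correct and matches the structure of the argument the paper invokes: the paper's own proof is simply a pointer to the identical proposition in \cite{elling-liu-pmeyer-arxiv}, noting that the only new feature here---the wall-wall corner at $\vec\xi_{AB}$---is handled by \cite[Theorem 1.4]{lieberman-crelle-1988} in the same way as the other corners. Your identification of the linearized problem as a uniformly elliptic oblique-derivative problem with corner conditions governed by \myeqref{eq:ndobb}, \myeqref{eq:Gb}, \myeqref{eq:cornercone}, and your appeal to weighted corner Schauder theory, is exactly the content of that referenced proof.

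One point worth tightening: the domain $\Omega$ has \emph{four} corners ($\vec\xi_{AB}$, $\vec\xi_A$, $\vec\xi_C$, $\vec\xi_B$), not two. The weight set $F=\{\vec\xi_{AB},\vec\xi_C\}$ omits $\vec\xi_A$ and $\vec\xi_B$ because those are right-angle corners between a wall and $S$ (resp.\ $P$) and are removed by even reflection across the wall (Remark \myref{rem:reflection}), after which they become interior shock or arc points with full $\spC^{2,\alpha}$ regularity. Your sketch implicitly assumes this but does not say so; without it, the a priori estimate you write would not close at those two vertices.
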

\begin{proof}
	The proof is almost identical to \pmc{Proposition \pmref{prop:Lxn-iso}}; the new corner between $A,B$ is 
	covered by \cite[Theorem 1.4]{lieberman-crelle-1988} in the same way as the other ones.
\end{proof}

\begin{proposition}
	\mylabel{prop:pn-uqcont}%
	$r_I$ can be chosen so that $\pn$ is unique and depends continuously on $\po\in\cfusp$ (both in the $\spC^{2,\alpha}_\beta$ topology)
	and $\lambda$.
\end{proposition}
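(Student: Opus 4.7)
The plan is to deduce the result from an implicit function theorem argument applied to the equation $\BL(\po,\pn)=0$, viewed as implicitly defining $\pn$ in terms of $\po$. By Proposition \myref{prop:Lxn-iso}, the Fr\'echet derivative $A(\po):=\partial\BL/\partial\pn'(\po,\po)$ is a linear isomorphism of $\spC^{2,\alpha}_\beta$ onto $Y$. The first step is to check that the bound on $\|A(\po)^{-1}\|$ is uniform over $\po\in\cfusp$ and $\lambda\in\Lambda$, by inspecting the proof of Proposition \myref{prop:Lxn-iso} and noting that the Schauder/oblique-derivative constants depend only on the quantities bounded in the definition of $\cfusp$: the regularity \myeqref{eq:regularity}, strict ellipticity \myeqref{eq:ellipC}, uniform obliqueness \myeqref{eq:ndobb}, density lower bound \myeqref{eq:rhomin}, functional independence \myeqref{eq:Gb}, and Lipschitz control \myeqref{eq:lip}, together with $\lambda$.

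Second, I choose $r_I(\po)>0$ so small, and varying continuously with $\po$, that for every $\pn'$ in the closed $\spC^{2,\alpha}_\beta$-ball of radius $r_I(\po)$ around $\po$: (a) the prerequisites \myeqref{eq:tilderho}, \myeqref{eq:faken}, \myeqref{eq:tildeL} hold, so $\BL(\po,\pn')$ is well-defined; (b) $B(\po,\pn'):=\partial\BL/\partial\pn'(\po,\pn')$ is still an isomorphism with $\|B(\po,\pn')^{-1}\|\le 2M$, where $M$ is the uniform bound from the first step, using continuity of the Fr\'echet derivative and openness of the set of invertible operators; (c) the second-order remainder obeys $\|\BL(\po,\pn_1)-\BL(\po,\pn_2)-B(\po,\pn_2)(\pn_1-\pn_2)\|_Y\le\tfrac{1}{4M}\|\pn_1-\pn_2\|_{\spC^{2,\alpha}_\beta}$ for all $\pn_1,\pn_2$ in the ball. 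Each of (a)--(c) is an open condition that holds at $\pn'=\po$ with a margin continuous in $\po$, so $r_I(\po)$ can be constructed as a minimum of positive continuous functions.

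Third, uniqueness: if $\pn_1,\pn_2$ both solve $\BL(\po,\pn_i)=0$ and lie in this ball, subtracting gives $B(\po,\pn_2)(\pn_1-\pn_2)=-\bigl[\BL(\po,\pn_1)-\BL(\po,\pn_2)-B(\po,\pn_2)(\pn_1-\pn_2)\bigr]$; applying $B(\po,\pn_2)^{-1}$ and invoking (b),(c) yields $\|\pn_1-\pn_2\|\le\tfrac12\|\pn_1-\pn_2\|$, so $\pn_1=\pn_2$. Continuous dependence on $\po$ follows from the standard Banach-space implicit function theorem at the solution point $(\po,\pn(\po))$: since $\BL$ is smooth in its arguments on the admissible set and $B(\po,\pn(\po))$ is an isomorphism, there is a $C^1$ local solution branch $\po'\mapsto\tilde\pn(\po')$, which by the uniqueness just established coincides with $\pn(\po')$. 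Continuous dependence on $\lambda$ follows because the onion coordinates of Definition \myref{def:fusp}, the operator $\BL$, and the function-space data all depend continuously (in fact smoothly) on $\lambda$.

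The main obstacle is the uniformity claimed in the first step: ensuring $\|A(\po)^{-1}\|$ does not blow up on $\cfusp\times\Lambda$. All the necessary bounds are built into the definition of $\cfusp$, so the argument reduces to tracking constants through the existing proof of Proposition \myref{prop:Lxn-iso}, with particular care at the new corner $\vec\xi_{AB}$ between $\hat A$ and $\hat B$, where the Lieberman-type corner estimate must be shown to have constants uniform in $\lambda$.
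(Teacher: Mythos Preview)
Your proposal is correct and is essentially the approach the paper takes (by deferring to the companion paper \pmc{Proposition \pmref{prop:pn-uqcont}}, which carries out the same implicit-function-theorem argument at the level of $\BL(\po,\pn)=0$). One small remark: you frame the uniformity of $\|A(\po)^{-1}\|$ over $\cfusp\times\Lambda$ as the main obstacle, but since $r_I$ is explicitly permitted to depend continuously on $\po$ (see the paragraph after \myeqref{eq:constlist} and the hypothesis of Proposition \myref{prop:Lxn-iso}), mere continuity of $\po\mapsto\|A(\po)^{-1}\|$ already suffices for your steps (b),(c); uniformity is convenient but not essential.
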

\begin{proof}
	The proof is exactly the same as for \pmc{Proposition \pmref{prop:pn-uqcont}}.
\end{proof}

\begin{proposition}
	\mylabel{prop:fusp-topology}%
	For $\epsilon$ and $r_I$ sufficiently small: 
	for all continuous paths $t\in[0,1]\mapsto\lambda(t)$ in $\Lambda$, 
	$\bigcup_{t\in(0,1)}\big(\{t\}\times\fusp_{\lambda(t)}\big)$ is open and
	$\bigcup_{t\in[0,1]}\big(\{t\}\times\cfusp_{\lambda(t)}\big)$ is closed\footnote{We make no statement about $\cfusp$ being the closure
	of $\fusp$. It certainly contains the closure, but it could be bigger, for example if one of the 
	inequalities in Definition \myref{def:fusp} becomes nonstrict in the interior without being violated.}
	in $[0,1]\times C^{2,\alpha}_\beta([0,1]^2)$.
\end{proposition}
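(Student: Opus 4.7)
The plan is to decompose the definition of $\cfusp_{\lambda(t)}$ into finitely many constraints, each of the form $h(t,\po)\leq 0$ (or $\geq 0$, or equality), plus one constraint asserting the existence of an auxiliary $\pn$ with certain properties. If each $h$ is continuous on $[0,1]\times\spC^{2,\alpha}_\beta([0,1]^2)$, then each non-strict inequality cuts out a closed set and its strict version an open set; finite intersections then deliver the desired openness/closedness.

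First I would verify that every functional appearing in Definition \myref{def:fusp} depends continuously on $(t,\po)$. Since $\lambda\mapsto b$ and the onion coordinate change depend continuously on $\lambda$ in $C^\infty$, and since $\spC^{2,\alpha}_\beta([0,1]^2)\hookrightarrow \spC^1$, pointwise evaluations of $\po,\nabla\po,\chi,\rho,c,L$ and of the velocity components are continuous in $(t,\po)$. The shock $S$ is defined through the implicit equation \myeqref{eq:sdef} with $\psi^I_\eta=v^y_I\neq 0$, so the implicit function theorem makes $s$, and hence $S,P,A,B$, the corners $\vxia_A,\vxia_C$, the domain $\Omega$, and the $(\sigma,\zeta)\to\vec\xi$ change of variables continuous functions of $\po$. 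From this one reads off the continuity of the functionals in \myeqref{eq:shockwall}, \myeqref{eq:cornerregion}, \myeqref{eq:cornercone}, \myeqref{eq:rhoprep}--\myeqref{eq:ellip}, \myeqref{eq:lip}--\myeqref{eq:Atanvel}, the shock normal bound \myeqref{eq:shocknormal}, and the obliqueness/independence bounds \myeqref{eq:ndobb}, \myeqref{eq:Gb}.

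The main work concerns existence of $\pn$ together with \myeqref{eq:oldnew}, \myeqref{eq:tilderho}--\myeqref{eq:tildeL}, and \myeqref{eq:pn}. Proposition \myref{prop:Lxn-iso} says that for $\po\in\cfusp_\lambda$ the Fr\'echet derivative $\partial\BL/\partial\pn'(\po,\po)$ is an isomorphism, so the implicit function theorem supplies a unique $\pn'$ near $\po'$ with $\BL(\po',\pn')=0$ for every $\po'$ in a small $\spC^{2,\alpha}_\beta$ neighborhood of $\po$; Proposition \myref{prop:pn-uqcont} patches these local solutions into a globally defined continuous map $(\lambda,\po)\mapsto\pn$ on a neighborhood of $\bigcup_t \{(\lambda(t),\po):\po\in\cfusp_{\lambda(t)}\}$, once $r_I$ is small. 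The strict inequalities \myeqref{eq:tilderho}, \myeqref{eq:faken}, \myeqref{eq:tildeL} for $\pn$ are then inherited from the corresponding strict inequalities \myeqref{eq:rhoprep}, $\nabla\po\neq\vec v_I$ (forced by e.g.\ \myeqref{eq:Bnorvel}), and \myeqref{eq:ellip} on $\po$, provided $r_I$ is small enough that $\|\po-\pn\|_{\spC^{2,\alpha}_\beta}\leq r_I(\po)$ does not destroy them; so they automatically hold on a full $\spC^{2,\alpha}_\beta$ neighborhood and are continuous there. The inequality \myeqref{eq:oldnew} then reduces to the continuous scalar inequality $\|\po-\pn(\lambda,\po)\|_{\spC^{2,\alpha}_\beta}\leq r_I(\po)$.

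With all defining functionals continuous in $(t,\po)$, openness of $\bigcup_{t\in(0,1)}(\{t\}\times\fusp_{\lambda(t)})$ and closedness of $\bigcup_{t\in[0,1]}(\{t\}\times\cfusp_{\lambda(t)})$ follow from continuity of $t\mapsto\lambda(t)$ and the standard open/closed behavior of strict/non-strict inequalities under continuous functions. The main obstacle is establishing continuity of the $\pn$-dependent constraints under perturbations of $\po$; this amounts to the interplay of Propositions \myref{prop:Lxn-iso} and \myref{prop:pn-uqcont} with a careful choice of $r_I$ so that the ancillary strict inequalities on $\pn$ persist. Everything else is routine bookkeeping.
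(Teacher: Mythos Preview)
Your proposal is correct and follows essentially the same approach as the paper's proof: reduce every defining condition to a scalar inequality with sides continuous in $(t,\po)$, invoke Proposition~\myref{prop:pn-uqcont} to make $\pn$ a continuous function of $(\lambda,\po)$, and then observe that the few strict inequalities \myeqref{eq:rhoprep}, \myeqref{eq:tilderho}, \myeqref{eq:faken}, \myeqref{eq:tildeL} are in fact consequences of the non-strict ones (via \myeqref{eq:rhomin}, \myeqref{eq:Bnorvel}, \myeqref{eq:ellipC} together with \myeqref{eq:oldnew}) once $r_I$ is small, so they survive limits. The paper organizes the closedness part as a sequential argument rather than abstractly, but the content is identical; the only point you state slightly less sharply than the paper is that \myeqref{eq:rhoprep} itself is secured by the non-strict \myeqref{eq:rhomin}, not just by being a condition on $\po$.
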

\begin{proof}
	All conditions on $\po$ in Definition \myref{def:fusp} are inequalities which can be made scalar by taking a suitable supremum or infimum.
	Then their sides are continuous under $\spC^{2,\alpha}_\beta([0,1]^2)$ 
	changes to $\po$ which, by Proposition \myref{prop:pn-uqcont}, means continuous in $\spC^{2,\alpha}_\beta([0,1]^2)$ change to $\pn$. 
	(Most inequalities need only $\spC^1([0,1]^2)$.) 
\begin{enumerate}
\item Closedness:
	consider sequences $(t_n,\po_n)$ in $\bigcup_{t\in[0,1]}\big(\{t\}\times\cfusp_{\lambda(t)}\big)$ that converge to a limit $(t,\po)$.

	Let $\pn_n$ be associated to $\po_n$ as in Definition \myref{def:fusp}. 
	By continuity (Proposition \myref{prop:pn-uqcont}), $(\pn_n)$
	converges to a limit $\pn$ as well. By continuity of $\BL$ in $\po$, $\pn$ and $\lambda$, we have $\BL_{\lambda(t)}(\po,\pn)=0$ as well.

	Let $\so_n$ be defined by $\po_n$ as in \myeqref{eq:sdef}, with $s\leftarrow s_n$ and $\po\leftarrow\po_n$.
	Then by \myeqref{eq:sdef},
	$(\so_n)$ converges in $\spC^{2,\alpha}_\beta[0,1]$ as well, 
	to a limit $\so$ which satisfies \myeqref{eq:sdef} itself.

	Most conditions on $\po$ are nonstrict inequalities with continuous left- and right-hand side, so they are still satisfied by $\po$.
	We check the strict inequalities explicitly and in order:

	\myeqref{eq:rhoprep} is implied by \myeqref{eq:rhomin}.

	\myeqref{eq:tilderho} resp.\ \myeqref{eq:faken} resp.\ \myeqref{eq:tildeL} are implied by \myeqref{eq:oldnew} resp.\ \myeqref{eq:Bnorvel}
	resp.\ \myeqref{eq:ellipC}, by choosing $r_I$ sufficiently small.

	All inequalities are satisfied, so $\po\in\cfusp$. 

\item Openness:

	same proof, using that all inequalities are strict now, by definition of $\fusp$, hence preserved by sufficiently small perturbations.

\end{enumerate}
\end{proof}

\begin{definition}
	\mylabel{def:it}%
	Define 
	$\IT:\cfusp\rightarrow \spC^{2,\alpha}_\beta([0,1]^2)$
	to map $\po$ into $\pn$ as given in Definition \myref{def:fusp}, but pulled back to $(\sigma,\zeta)$ coordinates
	and the $[0,1]^2$ domain (see Definition \myref{def:fusp}) with the coordinate transform defined by $\po$.
\end{definition}

\subsection{Regularity and compactness}

\begin{proposition}
	\mylabel{prop:regularity}%
	\mylabel{prop:fp-regularity}%
	For sufficiently small $\alpha\in(0,1)$ and $\beta\in(1,2)$, depending only on 
	$C_d$, $\delta_{Lb}\cdot\epsilon$, $\delta_o$, $C_L$, $\delta_{vx}$:
	\begin{enumerate}
	\item
		When parametrized in the coordinates of Figure \myref{fig:framerefl}, 
		\begin{alignat}{1}
			\|S\|_{C^{0,1}} &\leq C_{sL}  \myeqlabel{eq:sLip}
		\end{alignat}
		and
		\begin{alignat}{1}
			\|S\|_{C^{2,\alpha}_\beta} &\leq C_s \myeqlabel{eq:sregu}
		\end{alignat}
		for $C_{sL}=C_{sL}(C_L,\delta_{vx})$ and $C_s=C_s(C_{\spC},\delta_{vx})$;
		the weight $\beta$ is with respect to the endpoints $\vec\xi_A,\vec\xi_C$.
	\item
		For a \emph{fixed point} $\po$ of $\IT$:
		\begin{enumerate}
		\item
			\myeqref{eq:lip} is strict for sufficiently large $C_L$.
		\item 
			\myeqref{eq:regularity} is strict
			for sufficiently large $C_{\spC}=C_{\spC}(C_d,\delta_{Lb}\cdot\epsilon,C_L,\delta_o,\delta_{vtA},\delta_d)$.
		\item
			For $K\Subset\overline\Omega-\hat P-\{\vec\xi_B,\vec\xi_{AB}\}$ and all $k\geq 0$, $\alpha'\in(0,1)$,
			\begin{alignat}{1}
				\|\po\|_{\spC^{k,\alpha'}(K)} &\leq C_{\spC K}
				\myeqlabel{eq:reguint}
			\end{alignat}
			where $C_{\spC K}=C_{\spC K}(d,C_L,\delta_o,\delta_{vtA})$ is decreasing in $d:=d(K,\hat P\cup\{\vec\xi_{AB}\})$ 
			and \emph{not} dependent on $\epsilon$.
		\item 
			$\po$ is analytic in $\overline\Omega-\{\vec\xi_{AB},\vec\xi_C\}$; $S$ is analytic except in $\vec\xi_C$.
		\end{enumerate}
	\item
		For sufficiently small $r_I>0$, depending continuously and only on $\po$, 
		there are $\delta_\alpha,\delta_\beta>0$
		so that for all $\po\in\fusp$, 
		\begin{alignat}{1}
			\|\pn\|_{\spC^{2,\alpha+\delta_\alpha}_{\beta+\delta_\beta}(\overline\Omega)} &\leq C_\IT 
			\myeqlabel{eq:regu2}
		\end{alignat}
		Here, $C_{\IT},\delta_\alpha,\delta_\beta$ depend only on $C_d,\delta_{Lb}\cdot\epsilon,\delta_o,C_L,\delta_{vx}$,
	\end{enumerate}
\end{proposition}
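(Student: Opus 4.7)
The plan is to treat the three parts in order, each reducing to estimates for the underlying linear elliptic problem plus the shock relation, with careful attention to the weighted norms near the corners and to the $\epsilon$-independence of interior bounds.

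\textbf{Shock regularity (part 1).} The curve $S$ is the graph of $s$ defined implicitly by $\psi^I(\xi,s(\sigma))=\po(\sigma,1)$. Since $\psi^I_\eta = v^y_I$ is a fixed nonzero constant and $\po(\cdot,1)\in C^{0,1}([0,1])$ with norm controlled by $C_L$ (by \myeqref{eq:lip}), the implicit function theorem gives \myeqref{eq:sLip} with $C_{sL}$ depending only on $C_L$ and the fixed lower bound $\delta_{vx}$ on $|v^y_I|$. For \myeqref{eq:sregu} I would differentiate the defining relation twice and use $\po(\cdot,1)\in\spC^{2,\alpha}_\beta$, with weight at $\xi_C$ corresponding to the corner weight at $(1,1)$; the multiplicative structure of weighted H\"older norms shows the weighted $\spC^{2,\alpha}_\beta$ norm of $s$ is controlled by $C_\spC$ and $\delta_{vx}$.

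\textbf{Fixed-point regularity (part 2).} For a fixed point $\po=\pn$, Remark \myref{rem:fp} identifies the problem as a genuine free-boundary elliptic problem: $(c^2 I-\nabla\chi^2):\nabla^2\psi=0$ in $\Omega$, oblique derivative on $A$ and $B$, Rankine--Hugoniot on $S$, and $L^2=1-\epsilon$ on $P$. Using the reflection trick of Remark \myref{rem:reflection} across $A$ and $B$, points on those walls become interior points, and both corners $\vec\xi_A,\vec\xi_B$ become points on either $S$ or $P$ alone. Then:
\begin{itemize}
\item For the \emph{strict} Lipschitz bound \myeqref{eq:lip}, I would apply De Giorgi/Nash--Moser-type $L^\infty$-gradient estimates for the quasilinear equation, using the ellipticity constant $\delta_{Lb}\epsilon$ and the oblique boundary data (bounded by $|\vec v_I|$ on $A,B$ and by the shock relation on $S$, $P$) to conclude $\|\po\|_{C^{0,1}}\leq C'_L<C_L$ for $C_L$ chosen sufficiently large.
\item Interior/boundary Schauder estimates applied to the linearization at the fixed point, combined with the uniform obliqueness \myeqref{eq:ndobb} and the corner transversality \myeqref{eq:Gb} treated via Lieberman's corner theory \cite[Thm.~1.4]{lieberman-crelle-1988} (as in \pmc{Proposition \pmref{prop:Lxn-iso}}), yield \myeqref{eq:regularity} strictly, once $C_\spC$ is chosen large depending on the listed constants.
\item For the $\epsilon$-independent interior estimate \myeqref{eq:reguint}, the key observation is that away from $\hat P$ the ellipticity constant has a \emph{fixed} positive lower bound coming from condition \myeqref{eq:ellip} and the lower bound on $b|_K$, which depends only on $d(K,\hat P)$. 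Then classical interior Schauder plus boundary Schauder at the non-parabolic boundaries give $\spC^{k,\alpha'}$ bounds by iterative bootstrapping, with constants depending on $d$, $C_L$, $\delta_o$, $\delta_{vtA}$, but not on $\epsilon$.
\item Analyticity follows from Morrey's theorem for quasilinear elliptic systems with analytic coefficients (the pressure law and shock relations are analytic in $\rho,\vec v$), once $\po\in C^{2,\alpha'}$ up to the boundary pieces; the free boundary $S$ then inherits analyticity through its implicit definition.
\end{itemize}

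\textbf{Compactness for $\IT$ (part 3).} For $\po\in\fusp$, the iterate $\pn$ solves the \emph{linear} oblique-derivative problem with coefficients built from $\po$. Those coefficients lie in $\spC^{1,\alpha}_{\beta-1}\contemb\spC^{0,\alpha+\delta}$ for a small $\delta>0$ (since weighted $\spC^{1,\alpha}_{\beta-1}$ bounds give uniform $\spC^{0,\alpha'}$ for any $\alpha'<\beta-1$). Moreover the boundary data in \myeqref{eq:itn-parabolic}--\myeqref{eq:itn-wall} have the same extra regularity. Applying Lieberman's weighted Schauder estimates for oblique-derivative problems in corner domains to the linear problem solved by $\pn$, I obtain \myeqref{eq:regu2} with $\delta_\alpha,\delta_\beta>0$ and the stated constant dependence; this is essentially \pmc{Proposition \pmref{prop:regularity}} transcribed to the current geometry, which now has the additional corner $\vec\xi_{AB}$ but was already handled by the same theorem in Proposition \myref{prop:Lxn-iso}.

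\textbf{Main obstacle.} The delicate point is the simultaneous treatment of the two distinct corner types — the free-boundary/parabolic-arc corner at $\vxia_C$ and the two-wall corner at $\vec\xi_{AB}$ — while keeping the interior estimate \myeqref{eq:reguint} \emph{independent of $\epsilon$}. Degeneracy as $\epsilon\downarrow 0$ only occurs on $\hat P$, so on $K$ with $d(K,\hat P)>0$ the ellipticity constant is bounded below independently of $\epsilon$; the care needed is to verify that the Schauder constants in the weighted corner estimate at $\vec\xi_{AB}$ are also $\epsilon$-independent, which works because the obliqueness angles at $\vec\xi_{AB}$ are determined by $\vec n_A,\vec n_B$ alone (fixed data), not by $\epsilon$.
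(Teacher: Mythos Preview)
Your proposal is correct and follows essentially the same approach as the paper, which defers almost entirely to \pmc{Proposition \pmref{prop:fp-regularity}} and notes only that the new wall--wall corner at $\vec\xi_{AB}$ is handled by the same Lieberman-type corner result (\pmc{Proposition \pmref{prop:corner}}) using \myeqref{eq:Gb} and \myeqref{eq:ndobb}, with the corner angle bounded away from $\pi$ by the restriction on $\xi_{AB}$. Your sketch fills in the underlying mechanism (implicit function theorem for $S$, Schauder and corner estimates for $\pn$, $\epsilon$-independent ellipticity away from $\hat P$ via \myeqref{eq:ellip}, Morrey for analyticity) in more detail than the paper itself provides, but the route is the same.
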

\begin{proof}
	The proof is as the one for \pmc{Proposition \pmref{prop:fp-regularity}}, with obvious modifications. The only additional problem
	is the corner in $\vec\xi_{AB}$. This is very easy to treat with \pmc{Proposition \pmref{prop:corner}} because
	of \myeqref{eq:Gb} and \myeqref{eq:ndobb} for $\vec\xi_{AB}$.
	Note that the corner angle in $\vec\xi_{AB}$ is bounded away from $\pi$ because of the restrictions on $\xi_{AB}$
	(see Section \ref{section:parmset}).
\end{proof}

\begin{remark}
	\myeqref{eq:regu2} implies in particular that $\IT$ is a compact map.
	$\po\in\spC^{2,\alpha}_\beta([0,1]^2)$ is mapped continuously into $\pn\in\spC^{2,\alpha+\delta_\alpha}_{\beta+\delta_\beta}(\overline\Omega)$.
	The latter space is compactly embedded in $\spC^{2,\alpha}_\beta(\overline\Omega)$.
	Pullback to $\spC^{2,\alpha}_\beta([0,1]^2)$ by the $\sigma,\zeta$ coordinates defined by $\po$ (not $\pn$) may destroy the extra
	regularity, but preserves compactness.
\end{remark}

\subsection{Pseudo-Mach number control}

\mylabel{section:L-control}

\begin{proposition}
	\mylabel{prop:Lbounds}%
	For $\epsilon$ and $\delta_{Lb}$ sufficiently small, with bounds depending
	only on $\delta_{\rho}$:
	if $\po\in\cfusp$ is a fixed point of $\IT$, 
	then \myeqref{eq:ellip} is strict and
	\begin{alignat}{1}
		L^2 &< 1-\epsilon \qquad\text{in $\overline\Omega-\overline P$.} \myeqlabel{eq:Leps}
	\end{alignat}
\end{proposition}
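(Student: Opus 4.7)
The plan is to establish both strict inequalities by a Bers-type maximum principle on $L^2$, closely following \pmc{Proposition \pmref{prop:L-minmax}}. At a fixed point $\po=\pn=\xo$, Remark \myref{rem:fp} gives the BVP: the nondivergence equation \myeqref{eq:chi} in $\Omega$, slip conditions on $A\cup B$, the Rankine--Hugoniot system on $S$, and $L^2=1-\epsilon$ on $P$. By \myeqref{eq:ellipC} this problem is uniformly elliptic, with modulus controlled from below by $\delta_{Lb}\cdot\epsilon$; this is what makes all classical maximum-principle machinery applicable.

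First I would derive a linear elliptic inequality for $q:=|\nabla\xo|^2$ (equivalently for $L^2$) by differentiating \myeqref{eq:chi} tangentially/normally and using the Bernoulli relation \myeqref{eq:css} to trade $\partial_i c^2$ for $\partial_i q$. At an interior point where $L^2$ attains a local maximum one obtains $a^{ij}(L^2)_{ij}+b^i(L^2)_i\le 0$ with $(a^{ij})$ positive definite (since $L<1$ there by \myeqref{eq:ellipC}); the strong maximum principle then forces $L^2$ to attain its supremum on $\partial\Omega=S\cup A\cup B\cup P\cup\{\text{corners}\}$, and the corner values are well-defined by Proposition \myref{prop:fp-regularity}.

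Next I would dispatch each boundary piece in turn. On $P$ the boundary value is exactly $1-\epsilon$ by \myeqref{eq:itn-parabolic}. On the walls $A$ and $B$ the slip condition $\xo_n=0$ and Hopf's lemma applied to the linearised inequality for $L^2$ rule out a strict boundary maximum on the relative interior: the usual potential-flow computation yields $(L^2)_n\le 0$ there, so a strict maximum would force $L^2$ to be locally constant, contradicting the arc/shock values. On the shock $S$ the density lower bound \myeqref{eq:rhomin} gives $\rho_d\geq\rho_I+\delta_\rho$, and Proposition \myref{prop:shockpolar} (monotonicity of $L_d$ in $|\nva|$ and of $\rho_d$ in the opposite direction) converts this into a uniform bound $L_d^2\le 1-\eta(\delta_\rho)$ with $\eta(\delta_\rho)>0$ determined only by $\delta_\rho$ and the fixed upstream data $(\rho_I,\vec v_I,c_I)$. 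Taking $\epsilon<\eta(\delta_\rho)$ therefore yields $L^2<1-\epsilon$ strictly on $\overline S\cup\overline A\cup\overline B$; combined with Step~2 this proves \myeqref{eq:Leps}. The strict form of \myeqref{eq:ellip} follows: on $\overline P$ one has $b\le C\epsilon^{1/2}$ (by \myeqref{eq:cornerregion}, $|\nabla b|\le 1$ and $b=0$ on $\hat P^{(0)}$), so $1-\delta_{Lb}b\ge 1-C\delta_{Lb}\epsilon^{1/2}>1-\epsilon=L^2$ once $\delta_{Lb}\le\epsilon^{1/2}/(2C)$; off $\overline P$ the bound $L^2<1-\epsilon\le 1-\delta_{Lb}b$ is immediate from \myeqref{eq:Leps} and $b\le 1$.

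The main obstacle is the corner $\vxia_C$ where $\overline S$ meets $\overline P$: both pieces already touch the value $1-\epsilon$ (the shock via the envelope-induced choice $L_d=\sqrt{1-\epsilon}$ at $\vxit_C$), so the strict bound $L^2<1-\epsilon$ can only hold in the interior and at corners by continuity. Here I would use the weighted regularity \myeqref{eq:regu2} together with \myeqref{eq:cornerregion}--\myeqref{eq:cornercone} to compare the fixed point locally with the explicit constant-state solution on the downstream side of the $R$-shock at $\vxit_C$, showing that $L^2$ continuously attains $1-\epsilon$ at $\vxia_C$ from within $\overline P$ and is strictly below elsewhere; the other three corners $\vec\xi_A,\vec\xi_B,\vec\xi_{AB}$ reduce straightforwardly to the wall/shock estimates by the same continuity. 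All smallness thresholds on $\epsilon$ and $\delta_{Lb}$ are produced in terms of $\eta(\delta_\rho)$ alone, as claimed.
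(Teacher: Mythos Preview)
Your shock argument contains a genuine error. You claim that $\rho_d\ge\rho_I+\delta_\rho$ on $S$ implies a uniform bound $L_d^2\le 1-\eta(\delta_\rho)$ with $\eta>0$ depending only on $\delta_\rho$, and then take $\epsilon<\eta(\delta_\rho)$. But this implication is false: at the corner $\vxia_C\in\overline S\cap\overline P$ one has simultaneously $L_d^2=1-\epsilon$ (from the $P$ boundary condition) and $\rho_d\ge\rho_I+\delta_\rho$ (from \myeqref{eq:rhomin}). So for any $\epsilon$, there is a point on $\overline S$ where both hold, and no $\epsilon$-independent $\eta(\delta_\rho)$ can exist. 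Proposition \myref{prop:shockpolar} fixes the upstream data and varies only the normal at a \emph{fixed} point $\vec\xi$; it does not give the global implication you want because $L_d$ also depends on $|\vec z_u|=|\vec\xi-\vec v_I|$, which varies along $S$. Your last paragraph acknowledges the corner as an obstacle but does not repair the argument: the claimed uniform bound on $S$ is simply not available, so the contradiction you set up never closes.

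The paper's proof avoids this by never seeking a pointwise bound on $L$ at the shock. Instead it works with the auxiliary function $L^2+\delta_{Lb}\,b$ and invokes \pmc{Proposition \pmref{prop:L-minmax}} to rule out a \emph{local maximum} of this function at any shock point where $L$ lies in a fixed band $[1-\delta_{LS},1)$; this is a Hopf-type exclusion, not a uniform bound. The interior and walls are handled analogously via the ellipticity principle of \cite{elling-liu-ellipticity-journal}, which again only excludes maxima in a near-sonic band $L^2\ge 1-\delta_{L\Omega}$ rather than asserting a global subsolution inequality for $L^2$. The argument is then by contradiction: a nonstrict \myeqref{eq:ellip} forces a maximum of $L^2+\delta_{Lb}b$ with value $\ge 1$, hence $L$ in the forbidden band at the maximum point, which is excluded everywhere (the corners $\vec\xi_{AB}$ and $\vec\xi_A$ are checked directly). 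The same argument with $b\equiv 0$ gives \myeqref{eq:Leps}. A secondary issue: your derivation of strict \myeqref{eq:ellip} from \myeqref{eq:Leps} uses $b\le C\epsilon^{1/2}$ on $\overline P$ and hence needs $\delta_{Lb}\le\epsilon^{1/2}/(2C)$, which makes $\delta_{Lb}$ depend on $\epsilon$; in fact $b\le C\epsilon$ on $\overline P$ (concentric arcs at distance $\sim\epsilon$), so $\delta_{Lb}<1/C$ suffices, independent of $\epsilon$---but this is moot given the main gap.
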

\begin{proof}
	$$d(\overline\Omega,\hat P^{(0)})\geq\frac13\cdot\epsilon,$$
	for $\epsilon$ small enough.
	Remember from Definition \myref{def:b} that $b=0$ on $\hat P^{(0)}$. Therefore:
	$$L^2 = 1-\epsilon < 1-\|b\|_{C^{0,1}}\cdot d(P^{(\epsilon)},\hat P^{(0)}) \leq 1-\delta_{Lb}\cdot b\qquad\text{on $\overline P^{(\epsilon)}$,}$$
	e.g.\ for $\delta_{Lb}\leq 1$.

	On the shock, we may use \myeqref{eq:rhomin} combined with \pmc{Proposition \pmref{prop:L-minmax}}
	to rule out that $L^2+\delta_{Lb}\cdot b$ has a maximum in a point where $L<1$ and $L\geq 1-\delta_{LS}$,
	with $\delta_{LS}$ as supplied by loc.cit.
	Here $\delta_{Lb}$ has to be chosen so that $|\delta_{Lb}\nabla b|\leq\delta_{LS}$ is satisfied.
	(Now $\delta_{Lb}$ depends continuously on $\delta_\rho$ as well.)

	In addition we can choose $\delta_{Lb}$ so small that $\delta_{Lb}\cdot b$ satisfies the preconditions of 
	Theorem 1 and Theorem 2 in \cite{elling-liu-ellipticity-journal} (where it is called $b$). 
	For Theorem 2 we use that $b_n=0$ on $\hat A$ and on $\hat B$. 
	Let $\delta_{L\Omega}$ be the $\delta$
	from those theorems (it depends only and continuously on $\lambda$). %\mucon{maybe can restate theorems here}
	Then $L^2+\delta_{Lb}\cdot b$ cannot have a maximum in a point of $\Omega\cup A\cup B$ where $L^2\geq 1-\delta_{L\Omega}$.

	In the corner between $A,B$, due to $C^1$ regularity the boundary conditions imply $\nabla\chi=0$, so $L=0$,
	so $L^2+\delta_{Lb}\cdot b=\delta_{Lb}\cdot b<1$ for $\delta_{Lb}$ sufficiently small.

	In $\vec\xi_A$ we use that the shock is pseudo-normal (by the boundary condition $\nabla\chi\cdot\vec n_A=0$ which
	implies $\nabla\chi\cdot\vec t=0$ for the corresponding shock tangent $\vec t$ since $S,A$ form a right angle), so $L_d=L^n_d$
	which is uniformly bounded above away from $1$ by a constant depending on $\delta_\rho$, since \myeqref{eq:rhomin} implies
	uniform shock strength. 

	Assume that \myeqref{eq:ellip} is not strict (or violated). Then $L^2+\delta_{Lb}\cdot b$ has a maximum $\geq 1$ somewhere. For
	$\delta_{Lb}$ sufficiently small (no new dependencies) that means $L^2$ has a maximum $\geq 1-\min\{\delta_{L\Omega},\delta_{LS}\}$ 
	somewhere. But no matter where in $\overline\Omega$ this occurs, it contradicts one of the cases discussed above.
	Hence \myeqref{eq:ellip} is strict.
	
	\myeqref{eq:Leps} can be shown in the same manner, by taking $b=0$ instead, using the actual boundary
	condition $L=\sqrt{1-\epsilon}$ on $P$ and and considering $\epsilon<\delta_{LS},\delta_{L\Omega}$.
\end{proof}

\subsection{Arc control and corner bounds}

\mylabel{section:cornerbounds}

The discussion of parabolic arcs is very similar to \pmc{Sections 4.7 to 4.10}. 
For the convenience of the reader we restate the results using new notation and point out some differences in details.

A new choice of coordinates is convenient (see Figure \myref{fig:frameR}): since self-similar potential flow is
invariant under translations, we may
translate so that $\vec v_R$ moves to the origin (all other velocities $\vec v$
and coordinates $\vec\xi$ have $\vec v_R$ subtracted), then rotate clockwise
until $R$ is horizontal. In this frame, $\vec v_I$ is vertical down and $P$ is centered in $\vec v_R=0$. 
This means $\psi^R$ and $\chi^R$ are both constant on $P$, which simplifies certain calculations. 

\begin{figure}
\input{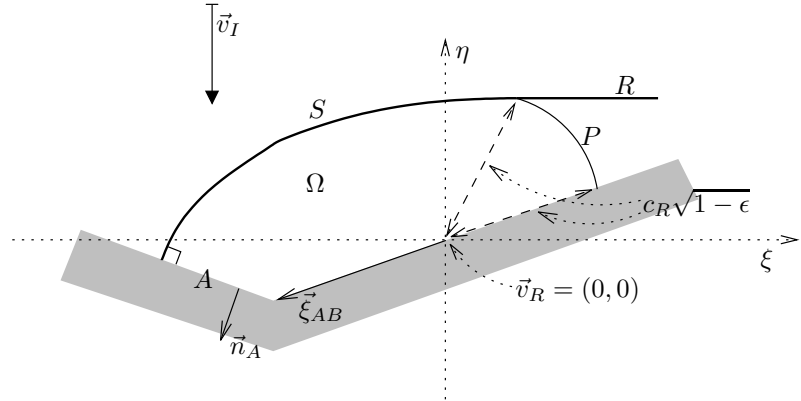}
\caption{In this frame $P$ is centered in $\vec v_R=0$, $R$ is horizontal and $\vec v_I$ is vertical.}
\mylabel{fig:frameR}
\end{figure}

In polar coordinates $(r,\phi)$ with respect to the origin (center of $P$), $P$ corresponds to $r=c_R\cdot\sqrt{1-\epsilon}$.

\begin{proposition}
	\mylabel{prop:pararc}%
	If $C_{Pt}<\infty$ is sufficiently large, if $\delta_{Pn}>0$ is sufficiently small,
	if $\epsilon$ is sufficiently small and $C_{Pv},C_{P\rho}$ sufficiently large,
	with bounds depending only on $C_{Pt}$, 
	then for any fixed point $\chi$ of $\IT$, \myeqref{eq:partan} and \myeqref{eq:parnor}
	are strict, and
	\begin{alignat}{1}
		|\rho-\rho_R| &\leq C_{P\rho}\epsilon^{1/2} \qquad\text{and}  \myeqlabel{eq:rhoP} \\
		|\vec v-\vec v_R| & \leq C_{Pv}\epsilon^{1/2} \qquad\text{on $P$.}  \myeqlabel{eq:vP}
	\end{alignat}
\end{proposition}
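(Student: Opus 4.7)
The plan is to mirror the argument of \pmc{Sections \pmref{section:parcs}--\pmref{section:lowerbounds}} in the referenced paper. Work in polar coordinates $(r,\phi)$ centered at the origin $\vec v_R=0$ of Figure \myref{fig:frameR}, so that $P$ is (the perturbed arc close to) $r=c_R\sqrt{1-\epsilon}$. The \emph{unperturbed} solution $\chi^R$ has $\nabla\chi^R=-\vec\xi$, hence $\chi^R_r=-r$, $\chi^R_\phi=0$, and $\chi^R$ is constant on $P$. Because $|\nabla\chi^R|^2=r^2=(1-\epsilon)c_R^2$ on $P$, the boundary condition $L^2=1-\epsilon$ is automatically satisfied by $\chi^R$, so all of \myeqref{eq:partan}--\myeqref{eq:vP} amount to showing that a fixed point $\chi$ deviates from $\chi^R$ by $O(\epsilon^{1/2})$ on $P$, with the bulk of the work in \myeqref{eq:partan}.

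The core step is to prove the strict form of \myeqref{eq:partan}, that is, $|\chi_\phi|/(rc)<C_{Pt}\epsilon^{1/2}$ on $P$. Differentiate the boundary condition $L^2=1-\epsilon$ tangentially with respect to $\phi$ along $P$ and use Bernoulli \myeqref{eq:css} together with the interior equation \myeqref{eq:chi} evaluated on $P$ to eliminate $\chi_{rr}$ and $\chi_{r\phi}$. This produces a first-order ODE along $P$ of the form
\begin{alignat}{1}
	\frac{d\chi_\phi}{d\phi} &= F(\phi,\chi,\chi_\phi), \notag
\end{alignat}
paralleling \pmc{\pmeqref{eq:pphi}--\pmeqref{eq:thetaphi}}. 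The essential point, verified by direct computation in the referenced paper, is that $F$ vanishes at $\chi_\phi=0$ (consistency with $\chi^R$) and that the linearization has a sign that makes integration in one distinguished direction of $\phi$ dissipative rather than amplifying.

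Initial data of size $O(\epsilon^{1/2})$ come from the endpoint of $P$ opposite the dissipative direction. At $\vec\xi_B^{(\epsilon)}$, the slip condition $\chi_n=0$ on $B$ combined with $C^1$ regularity across the corner (Remark \myref{rem:reflection}) forces $(\nabla\chi)\parallel\vec t_B$ there, so $\chi_t=(\nabla\chi)\cdot\vec t_P=|\nabla\chi|(\vec t_B\cdot\vec t_P)$; in the unperturbed configuration $\vec t_B$ and $\vec n_P$ coincide at $\vec\xi_B^{(0)}$, so the angular defect, and hence $|\chi_\phi(\vec\xi_B^{(\epsilon)})|$, is $O(\epsilon^{1/2})$. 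At $\vxia_C$ an analogous estimate uses the shock relations $\chi=\chi^I$, the corner-closeness bound \myeqref{eq:cornerregion}, and $\chi^I(\vxit_C)=\chi^R(\vxit_C)$. Gronwall (or a standard comparison argument) applied to the ODE then propagates $|\chi_\phi|/(rc)\leq\frac12 C_{Pt}\epsilon^{1/2}$ along all of $P$, which is strict \myeqref{eq:partan}.

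Once \myeqref{eq:partan} is strict, the remaining bounds follow. On $P$, $\chi_r^2=(1-\epsilon)c^2-r^{-2}\chi_\phi^2=(1-\epsilon)c^2-O(\epsilon)$, and continuity in the perturbation pins $\chi_r$ to the negative branch (as for $\chi^R$), giving $\chi_n/c\leq -\sqrt{1-\epsilon}+O(\epsilon^{1/2})<-\delta_{Pn}$ for $\delta_{Pn}$ small enough, i.e.\ strict \myeqref{eq:parnor}. Then $|\nabla\chi-\nabla\chi^R|=O(\epsilon^{1/2})$ on $P$ yields \myeqref{eq:vP} via $\vec v-\vec v_R=\nabla(\chi-\chi^R)$. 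Integrating $|\chi_t|=O(\epsilon^{1/2})$ tangentially from $\vxia_C$, at which $|\chi-\chi^R|=|\chi^I-\chi^R|=O(\epsilon^{1/2})$ by \myeqref{eq:cornerregion}, gives $|\chi-\chi^R|=O(\epsilon^{1/2})$ on all of $P$, from which \myeqref{eq:rhoP} follows by \myeqref{eq:rhoeq}. The main obstacle is verifying the correct sign of $F$ in the ODE so that integration is contractive rather than expansive, and handling the endpoint compatibility at both corners; these are precisely the delicate calculations carried out in \pmc{Section \pmref{section:cornersmoving}} and adapted here by the translation/rotation to the frame of Figure \myref{fig:frameR}.
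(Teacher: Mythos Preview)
Your proposal is correct and follows essentially the same route as the paper. The paper's own proof is a one-line deferral to \pmc{Proposition \pmref{prop:pararc}}, and what you have written is a faithful sketch of that referenced argument: the tangential ODE along $P$ (the system \pmc{\pmeqref{eq:pphi}--\pmeqref{eq:thetaphi}}), endpoint data of size $O(\epsilon^{1/2})$ at the arc--wall and arc--shock corners, and propagation via the sign structure of the linearization; the remaining bounds \myeqref{eq:parnor}, \myeqref{eq:rhoP}, \myeqref{eq:vP} then follow from \myeqref{eq:partan} exactly as you describe. One small sharpening: at $\vec\xi_B^{(\epsilon)}$ the wall $\hat B$ passes through the arc center $\vec v_R=0$, so $\vec t_P=\vec n_B$ there and $\chi_t=\nabla\chi\cdot\vec n_B=0$ exactly (not merely $O(\epsilon^{1/2})$), which only strengthens your initial-data estimate.
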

\begin{proof}
	The proof is as for \pmc{Proposition \pmref{prop:pararc}}, with obvious modifications.
\end{proof}

If \myeqref{eq:cornerregion} is satisfied, but not in its strict version, then $\eta^*_C=\eta^+_C$ or $\eta^*_C=\eta^-_C$
(where $\vec\xi^\pm$ are as defined in Definition \myref{def:fusp} after \myeqref{eq:cornerregion}). Each of these two cases must be
ruled out.

\begin{proposition}
	\mylabel{prop:etaa-lowerbound}%
	For $\epsilon$ sufficiently small:
		for any fixed point $\po\in\cfusp$ of $\IT$, 
	the lower bound in \myeqref{eq:cornerregion} is strict: 
	$$\etaa_C>\eta_C^-$$
\end{proposition}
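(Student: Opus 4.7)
The plan is to argue by contradiction: assume $\po\in\cfusp$ is a fixed point of $\IT$ with $\etaa_C = \eta_C^-$, i.e.\ $\vxia_C = \vec\xi_C^-$, and derive a contradiction with the fixed-point conditions.

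First I would invoke Proposition \myref{prop:pararc} to control the downstream state on $P$, and in particular at the corner $\vxia_C\in\hat P$: both $|\vec v(\vxia_C)-\vec v_R|\leq C_{Pv}\epsilon^{1/2}$ and $|\rho(\vxia_C)-\rho_R|\leq C_{P\rho}\epsilon^{1/2}$. Since $\vxia_C$ also lies on the shock $S$, the Rankine--Hugoniot relations \myeqref{eq:chijump}--\myeqref{eq:momjump} hold there with upstream $(\rho_I,\vec v_I)$. Combined with the $L_d=\sqrt{1-\epsilon}$ condition inherited from $P$, this locates the shock normal at $\vxia_C$ within $O(\epsilon^{1/2})$ of the reference $R$ shock normal via the non-degenerate shock polar of Proposition \myref{prop:shockpolar}.

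Next, I would compare with the straight reference configuration using the machinery of Section \myref{section:cornersmoving}. The family of straight shocks indexed by $\etat_C$ (or by the corresponding normal angle $\nva$, via Proposition \myref{prop:vdzero}) has the property that moving the sonic corner from $\vxit_C$ down to $\vec\xi_C^-$ produces a \emph{signed} change in the downstream vertical pseudo-velocity of size comparable to $\epsilon^{1/2}$, captured by \myeqref{eq:vyd-eta} and the positivity \myeqref{eq:vyd-eta-positive}. Because this deviation is signed rather than two-sided, once $\epsilon$ is small enough the symmetric size bound from the previous step cannot absorb it. This forces a mismatch between the actual downstream state at $\vxia_C$ and any state compatible with a shock through $\vec\xi_C^-$ having the correct normal, violating one of the strict inequalities in Definition \myref{def:fusp} --- the shock-normal range \myeqref{eq:shocknormal}, the tangential velocity bound \myeqref{eq:Rtanvel}, or the wall normal velocity bound \myeqref{eq:Bnorvel} are the natural candidates.

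The hard part will be converting the two-sided $O(\epsilon^{1/2})$ estimate from Proposition \myref{prop:pararc} into a one-sided contradiction. This requires tracking signs carefully through the polar-coordinate monotonicity analysis of \myeqref{eq:vyd-eta-positive}--\myeqref{eq:qeta} and confirming that the sign associated with the lower extreme $\eta_C^-$ (as opposed to the upper extreme $\eta_C^+$, which is treated separately by Proposition \myref{prop:etaa-upperbound}) is the one that produces the violation. Once the correct sign is identified, choosing $\epsilon$ small enough that the fixed $\epsilon^{1/2}$ gap exceeds the constants from Proposition \myref{prop:pararc} closes the argument.
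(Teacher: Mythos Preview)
The paper's proof is a bare reference to \pmc{Proposition~\pmref{prop:etaa-lowerbound}}, so a line-by-line comparison is not possible here; but your plan draws on exactly the machinery that reference builds --- Proposition~\ref{prop:pararc} for $O(\epsilon^{1/2})$ control on $P$, the shock relations at $\vxia_C$, and the signed corner-motion formulas \pmeqref{eq:vyd-eta}, \pmeqref{eq:vyd-eta-positive} from \pmc{Section~\pmref{section:cornersmoving}} --- so the overall strategy is sound.

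The weak point is the proposed landing spot for the contradiction. The bounds \myeqref{eq:shocknormal}, \myeqref{eq:Rtanvel}, \myeqref{eq:Bnorvel} carry constants $C_{Sn},C_{vtR},C_{vnA}$ that sit early in the hierarchy \myeqref{eq:constlist} and are chosen \emph{large} precisely so that they absorb generic $O(\epsilon^{1/2})$ perturbations coming from the arc; beating them with another $O(\epsilon^{1/2})$ quantity would require a constant comparison you have not set up, and Proposition~\ref{prop:pararc} gives no control of $C_{Pv}$ relative to those constants. The contradiction should instead be local at the corner: the sign of $v^y_d$ (equivalently $\psi_\eta$) forced at $\vxia_C$ by $\etaa_C=\eta_C^-$ via \pmeqref{eq:vyd-eta-positive} is geometrically incompatible with how $\Omega$ sits relative to $S$ and $P$ near that point --- compare how the upper-bound proof, Proposition~\ref{prop:etaa-upperbound}, uses $\psi_\eta>0$ at $\vxia_C$ to rule out an extremum there rather than to violate a global $\cfusp$ inequality. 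Redirect the endgame toward a local sign/geometry argument rather than toward one of the $\cfusp$ bounds.
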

\begin{proof}
	Same as for \pmc{Proposition \pmref{prop:etaa-lowerbound}}.
\end{proof}

\begin{proposition}
	\mylabel{prop:a-prop}%
	Consider $\etaa_C=\eta^+_C$.
	For sufficiently small $\epsilon$, there is an $a\geq 0$ so that
	\begin{enumerate}
	\item $\psi+a\xi$ does not have a local minimum (with respect to $\overline\Omega$) at $P\cup\{\vec\xi_B\}$, and 
	\item a shock through $\vec\xi^+_C$ with upstream data $\vec v_I$ and $\rho_I$
		and tangent $(1,\frac{a}{-v^y_I})$ has $v^y_d>0$.
	\end{enumerate}
\end{proposition}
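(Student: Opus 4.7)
The plan is to exhibit a specific value of $a$ for which both (i) and (ii) hold, by analyzing the shock polar at $\vec\xi^+_C$ and combining with the parabolic-arc control of Proposition \myref{prop:pararc}.

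For condition (ii), I would consider the one-parameter family of straight shocks through $\vec\xi^+_C = \vec\xi^*_C + (0,\epsilon^{1/2})$ with upstream data $(\vec v_I,\rho_I)$, parametrized by tangent slope $a/(-v^y_I) \geq 0$. Since for a straight shock with constant upstream state $v^y_d$ is constant along the shock (a direct consequence of $z^t$-continuity together with $\vec v_d = \vec z_d + \vec\xi$ and the shift $\vec\xi \mapsto \vec\xi + s\vec t$), each choice of $a$ yields a single scalar value of $v^y_d$. Proposition \myref{prop:vdzero} identifies a unique critical $\nva$ (hence a unique critical $a$) for which $v^y_d = 0$ at the shock location $\vec\xi^+_C$, together with monotonicity of $v^y_d$ across this critical value. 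To pin down the sign I would compare with the $R$-shock through $\vec\xi^*_C$: taking the same tangent angle $\nva^*$ of the $R$-shock (where $v^y_d = 0$ by construction) but the shock shifted up to $\vec\xi^+_C$, the elementary identity $z^n_u(\vec\xi^+_C) = z^n_u(\vec\xi^*_C) + \epsilon^{1/2}\cos\nva^*$ together with Proposition \myref{prop:shockpolar} shows that $v^y_d$ moves to one definite side of zero; a careful sign check (using the orientation of Figure \myref{fig:framerefl} and the convention that $\vec\xi^*_C$ is the right $L_d = \sqrt{1-\epsilon}$-point) yields $v^y_d > 0$. By continuity, condition (ii) then holds on a nonempty open $a$-interval around $a_R := -v^y_I\tan\nva^* \geq 0$.

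For condition (i), I would invoke Proposition \myref{prop:pararc} in the frame of Figure \myref{fig:frameR} (where $\vec v_R = 0$) to obtain $|\nabla\psi| \leq C_{Pv}\epsilon^{1/2}$ on $\overline P$. Hence $\nabla(\psi + a\xi) = (a,0) + O(\epsilon^{1/2})$ on $\overline P$. For $a$ bounded below by a positive constant independent of $\epsilon$, this gradient is uniformly close to $(a,0)$ and therefore points strictly into $\overline\Omega$ at every interior point of $P$, since $P$ lies in an angular sector uniformly bounded away from vertical tangents by Definition \myref{def:fusp} and the choice of $\delta_{\hat P}$; this rules out interior local minima of $\psi + a\xi$ on $P$. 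At the corner $\vec\xi_B = \overline P \cap B$ I would combine the gradient estimate with the wall boundary condition $\nabla\chi\cdot\vec n_B = 0$ on $B$ and the explicit geometry of the corner to exhibit an admissible direction into $\overline\Omega$ along which $\psi + a\xi$ strictly decreases.

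Combining the two steps: any $a > 0$ in the overlap of the admissible intervals (nonempty uniformly in $\epsilon$, as both intervals contain a neighborhood of $a_R > 0$ for sufficiently small $\epsilon$) suffices. The main obstacle will be the sign determination in (ii); one must verify on which side of the critical $\nva^*$ the shifted shock yields $v^y_d > 0$, and this hinges on the precise convention specifying $\vec\xi^*_C$ as the right $L_d$-point and the orientation of $\vec n$ in Proposition \myref{prop:vdzero}. A secondary delicate point is the corner analysis at $\vec\xi_B$, where the interplay of the parabolic-arc behavior and the wall boundary condition must be handled by an oblique-derivative argument rather than a plain gradient sign check.
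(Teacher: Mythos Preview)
Your overall plan --- pick $a$ near the $R$-shock slope $a_R$, verify (ii) by shifting the shock location and invoking monotonicity, verify (i) by the gradient bound $|\nabla\psi|\leq C_{Pv}\epsilon^{1/2}$ on $P$ --- has a genuine gap in part (i).

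The argument for (i) hinges on $a$ being bounded below by a positive constant \emph{independent of $\epsilon$}, so that $(a,0)+O(\epsilon^{1/2})$ has a definite sign. But the $a$ you produce for (ii), namely $a_R=-v^y_I\tan\nva^*$, need not be uniformly positive over the parameter set $\Lambda$: by Definition~\myref{def:Lambda}, $\overline\eta^*_C=\eta^0_C-C_\eta\epsilon^{1/2}$ for $\gamma>1$ (and $=\eta^0_C$ for $\gamma=1$), so $\eta^*_C$ can lie within $O(\epsilon^{1/2})$ of $\eta^0_C$. In that regime the $R$-shock is $O(\epsilon^{1/2})$-close to horizontal, hence $\nva^*=O(\epsilon^{1/2})$ and $a_R=O(\epsilon^{1/2})$. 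Your gradient estimate $\nabla(\psi+a\xi)=(a,0)+O(\epsilon^{1/2})$ then gives no sign information, and the ``points strictly into $\overline\Omega$'' conclusion fails. The paper's route (through \pmc{Propositions \pmref{prop:cbar}, \pmref{prop:psi-axi}, \pmref{prop:vyd-crit}}) does not rely on a crude first-order gradient bound; \pmc{Proposition \pmref{prop:psi-axi}} fixes $a$ by an explicit formula (cf.\ \pmc{\pmeqref{eq:a}, \pmeqref{eq:aexp}}) and exploits the detailed ODE structure of the arc (Section~\pmref{section:parcs}) to control $\psi+a\xi$ along $P$ at the level needed even when $a$ is small.

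A secondary issue: Proposition~\myref{prop:vdzero} concerns shocks through points $(0,\eta)$ on the vertical axis, not through an arbitrary $\vec\xi^+_C$, so your invocation of it for (ii) needs an additional translation/rotation step or a direct appeal to Proposition~\myref{prop:shockpolar} and \pmc{Lemma \pmref{lemma:movingnormal}} to track how $v^y_d$ changes when the shock location moves along $\hat P$ by $O(\epsilon^{1/2})$.
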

\begin{proof}
	This follows as in Propositions \pmref{prop:cbar}, \pmref{prop:psi-axi} and \pmref{prop:vyd-crit} 
	of \cite{elling-liu-pmeyer-arxiv}.
\end{proof}

Only the final upper bound requires some adaptation:

\begin{proposition}
	\mylabel{prop:etaa-upperbound}%
	Let $\chi\in\overline{\fusp}$ be a fixed point of $\IT$. 
	For $C_\eta$ sufficiently large and for $\epsilon>0$ sufficiently small,
	the upper part of \myeqref{eq:cornerregion} is strict:
	$$\etaa_C < \eta_C^+.$$
\end{proposition}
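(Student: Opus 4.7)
The plan is to argue by contradiction, in parallel with \pmc{Proposition \pmref{prop:etaa-upperbound}}. Suppose $\etaa_C=\eta_C^+$, so $\vxia_C=\vec\xi_C^+$. Invoke Proposition \myref{prop:a-prop} to obtain $a\geq 0$ with the two stated properties: first, the function $u:=\psi+a\xi$ has no local (boundary) minimum at $P\cup\{\vec\xi_B\}$; second, the straight shock through $\vec\xi_C^+$ with upstream data $(\vec v_I,\rho_I)$ and tangent $(1,a/(-v_I^y))$ satisfies $v_d^y>0$.

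Since $\nabla^2(a\xi)=0$, the function $u$ satisfies exactly the same linear elliptic equation \myeqref{eq:psi} as $\psi$ (with coefficients frozen through $\chi$); ellipticity is strict in the interior by \myeqref{eq:ellip} and \myeqref{eq:Leps}, and degenerates at $P$ only through the boundary. The strong minimum principle therefore forces $u$ to attain its minimum over $\overline\Omega$ on $\partial\Omega$; by Proposition \myref{prop:a-prop}(1) the location is not in $P\cup\{\vec\xi_B\}$. I then rule out every other boundary location in turn: on the smooth interior of $A$ and $B$ the Hopf lemma applies, using the wall conditions $\vec v\cdot\vec n_{A,B}=\vec\xi\cdot\vec n_{A,B}$ together with \myeqref{eq:Anorvel}, \myeqref{eq:Bnorvel}; the corner $\vec\xi_{AB}$ is handled by the even reflection of Remark \myref{rem:reflection} (which converts it to an interior/wall point with an angle bounded away from $\pi$ by \myeqref{eq:Gb}); $\vec\xi_A$ is treated by pseudo-normality of the shock there, exactly as in \pmc{Proposition \pmref{prop:etaa-upperbound}}; on the smooth interior of $S$ one uses $\psi^u=\psi^d=\psi^I$, so $u|_S=\psi^I+a\xi$ is affine with gradient $(a,v_I^y)$, and the tangential derivative $at^x+v_I^y t^y$ together with the admissibility constraint on shock normals from \myeqref{eq:shocknormal} excludes an interior minimum.

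This reduces the problem to the one remaining candidate, the corner $\vxia_C=\vec\xi_C^+$. Here one combines two independent pieces of information. First, Proposition \myref{prop:pararc} gives $|\vec v-\vec v_R|\leq C_{Pv}\epsilon^{1/2}$ on $P$, and hence at $\vec\xi_C^+$. Second, Proposition \myref{prop:a-prop}(2) asserts that on the critical shock through $\vec\xi_C^+$ the downstream $v_d^y$ is strictly positive; because $\etaa_C=\eta_C^+\leq \overline\eta^*_C=\eta_C^0-C_\eta\epsilon^{1/2}$ (by Definition \myref{def:Lambda}), choosing $C_\eta$ large creates a uniform positive gap to the degenerate case $\eta_C^0$ (where $v_d^y=0$), and a simple continuity argument on the shock-polar family from Proposition \myref{prop:vdzero} converts this gap into a quantitative lower bound $v_d^y\geq c_\eta\cdot C_\eta\epsilon^{1/2}$ at the critical shock, for some absolute $c_\eta>0$. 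For $C_\eta$ large relative to $C_{Pv}$, these two facts are incompatible: the arc control forces $\vec v(\vec\xi_C^+)$ within $C_{Pv}\epsilon^{1/2}$ of $\vec v_R$, whose second component is $0$ in the frame of Figure \myref{fig:frameR}, while the boundary minimum condition for $u$ at $\vec\xi_C^+$ together with the shock tangent geometry forces a lower bound on $v^y$ of order $c_\eta C_\eta\epsilon^{1/2}$, contradicting it for $C_\eta$ large enough.

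The main obstacle will be the quantitative matching at $\vec\xi_C^+$: one must track the $\epsilon^{1/2}$ scales carefully so that the linear $C_\eta$-dependence of the critical $v_d^y$ genuinely dominates the $C_{Pv}\epsilon^{1/2}$ error from the parabolic arc, uniformly in $\lambda\in\Lambda$. All other ingredients (the max principle, Hopf, reflection at $\vec\xi_{AB}$, pseudo-normality at $\vec\xi_A$, shock-tangent arithmetic) are direct adaptations of the corresponding steps in \pmc{Proposition \pmref{prop:etaa-upperbound}} and require no essentially new ideas.
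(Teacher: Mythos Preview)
Your overall plan (minimum principle for $u=\psi+a\xi$) is right, but the treatment of $S$ and of the corner $\vxia_C$ has genuine gaps.

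On $S$: you claim that the tangential derivative $at^x+v_I^yt^y$ together with \myeqref{eq:shocknormal} excludes an interior minimum. It does not. The critical tangent direction $(1,a/(-v_I^y))$ lies well within the cone of normals allowed by \myeqref{eq:shocknormal} (for $a=0$ it is exactly $\vec n_R$), so a stationary point of $u|_S$ can occur. The paper's argument is different and essential: at such a stationary point the shock tangent has slope $a/(-v_I^y)$; for it to be a \emph{global} minimum (in particular $\leq u(\vxia_C)$), the tangent line must lie above $\vxia_C=\vec\xi_C^+$; then Proposition~\myref{prop:a-prop}(2) gives $v_d^y>0$ for the shock through $\vec\xi_C^+$ with that tangent, and monotonicity of $v_d^y$ in shock speed (\pmc{\pmeqref{eq:DvndDsigma}}) pushes $v_d^y>0$ at the stationary point as well. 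Since the downward ray from any shock point lies in $\overline\Omega$ (by \myeqref{eq:shocknormal}), a minimum would force $u_\eta=\psi_\eta=v^y\leq 0$, a contradiction. You need this chain; \myeqref{eq:shocknormal} alone is not enough.

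At $\vxia_C$: your quantitative matching conflates the \emph{critical} shock of Proposition~\myref{prop:a-prop}(2) with the \emph{actual} shock $S$ at $\vxia_C$. The bound $v_d^y\geq c_\eta C_\eta\epsilon^{1/2}$ you derive is for the former, but the arc control \myeqref{eq:vP} constrains $\vec v(\vxia_C)$, which is the downstream velocity of the latter; these need not coincide, and you have not explained why a minimum at $\vxia_C$ forces the actual shock tangent there to be at least as tilted as the critical one. The paper bypasses this entirely: it invokes \pmc{\pmeqref{eq:vyd-eta-positive}} to get $\psi_\eta>0$ directly at $\vxia_C$ when $\etaa_C=\eta_C^+$ (this is where $C_\eta$ sufficiently large is actually used), and the downward-ray argument then rules out the minimum immediately. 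Also, on $A$ and $B$ you cite \myeqref{eq:Anorvel}, \myeqref{eq:Bnorvel}, but what is actually needed is the sign of $(\psi+a\xi)_n$, which follows from $\chi_n=0$ and the geometry ($\psi_n=\vec\xi\cdot\vec n$), not from those velocity bounds.
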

\begin{proof}
	Again, consider the coordinates of Figure \ref{fig:frameR}. 

	By Proposition \myref{prop:a-prop}, 
	$\psi+a\xi$ cannot have a local minimum at $P\cup\{\vec\xi_B\}$. 
	For $\etaa_C=\eta^+_C$, we have $(\psi+a\xi)_\eta=\psi_\eta>0$ in $\vec\xi_C$ by 
	\pmc{\pmeqref{eq:vyd-eta-positive}} (for sufficiently small $\epsilon$), so the minimum cannot be in $\vec\xi_C$
	either (note that the domain locally contains the ray downward from the corner).

	On the shock (excluding endpoints): let $\xi\mapsto s(\xi)$ be a local parametrization of the shock	.
	$\psi+a\xi=\psi^I+a\xi$, so
	$$\partial_t(\psi+a\xi)=\partial_t(\psi^I+a\xi)=\vec v_I\cdot\vec t+\frac{a}{(1+s_\xi^2)^{1/2}}=\frac{v^y_Is_\xi+a}{(1+s_\xi^2)^{1/2}}.$$ 
	For a local minimum at the shock we need $\partial_t(\psi+a\xi)=0$,
	so
	$$s_\xi=\frac{a}{-v^y_I}.$$
	A \emph{global} minimum, in particular\ $\leq \psi(\vec\xi_C)+a\xi_C$, additionally requires
	that $\vxia_C$ (as well as the rest of the shock) is on or below the tangent through the minimum point, because $\psi^I$ and thus $\psi^I+a\xi$
	are decreasing in $\eta$. 
	By Proposition \myref{prop:a-prop}, the shock through $\vec\xi^+_C$ with that tangent has 
	$v^y_d>0$ for $\etaa_C=\eta^+_C$. In the minimum point the tangent has same slope but is at least as high, 
	so the shock speed is at least as high, so $v^y_d=\psi_\eta=(\psi+a\xi)_\eta$ there is at least as high, in particular $>0$ too.
	But that contradicts a minimum (the ray vertically downwards from any shock point is locally contained in $\overline\Omega$, by
	\myeqref{eq:shocknormal}). 
	Hence $\psi+a\xi$ cannot have a global minimum at the shock.

	The equation \pmeqref{eq:psi} yields $$(c^2I-\nabla\chi^2):\nabla^2(\psi+a\xi)=0$$
	($a\xi$ is linear), so the classical strong maximum principle rules out a minimum in the interior
	(unless $\psi+a\xi$ is constant, which means we are looking at the unperturbed solution which has
	$\etaa_C=\etat_C<\eta^+_C$).

	On $B$, the boundary condition $\psi_n=\chi_n=0$ implies $(\psi+a\xi)_n=a\xi_n\geq 0$ (the slope of $B$ in the frame of Figure \myref{fig:frameR})
	is always nonnegative), so the Hopf lemma rules out a minimum of $\psi+a\xi$ at $B$.

	On $\overline A$ the boundary condition $\chi_n=0$ yields $\psi_n=\vec\xi\cdot\vec n=\vec\xi_{AB}\cdot\vec n_A\geq0$ 
	(see Figure \myref{fig:frameR}). This is actually $\psi_n>0$, except in the special case
	where (in the notation of Definition \myref{def:Lambda}) $\xi_{AB}=v^x_R$
	which is allowed only if $\gamma=1$ and $\eta^*_C=\eta^0_C$: the ``unperturbed'' case.
	In that case, the proof of Proposition \myref{prop:unperturbed-degree-nonzero} shows that only the unperturbed solution
	(Figure \myref{fig:unperturbed}) can solve the problem. Its corner is exactly in the expected location, so that $\eta_C=\eta^*_C<\eta^+_C$.
\end{proof}

\subsection{Velocity and shock normal control}
\mylabel{section:v-control}

\begin{figure}
\input{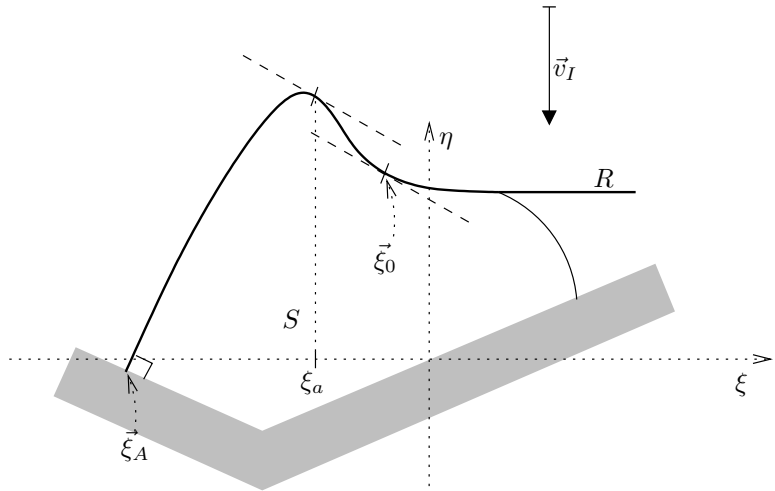}
\caption{A maximum of $v^x$ requires negative curvature, causing a contradiction}
\mylabel{fig:vtR}
\end{figure}

\begin{figure}
\input{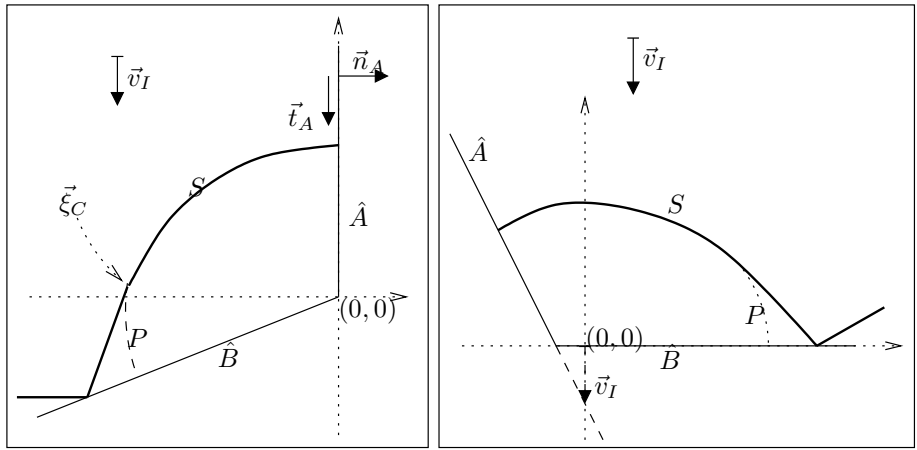}
\caption{Left: mirror-reflect Figure \myref{fig:frame0} across $\hat A$ and rotate around the origin. Right: setting of Figure \ref{fig:onion} left.}
\mylabel{fig:frameAleft}
\mylabel{fig:frameAright}
\end{figure}

\begin{proposition}
	\mylabel{prop:shocktanvel}%
	If $C_{vtR},C_{vnA}$ are sufficiently large (bounds depending only on $C_{Pt}$), 
	if $C_{Sn}$ is sufficiently large (bound depending only on $C_{vtR},C_{vnA}$), 
	if $\epsilon$ is sufficiently small (bound depending only on $C_{Sn}$), and
	if $\delta_{Cc}$ is sufficiently small, then
	for any fixed point $\po\in\cfusp$ of $\IT$,
	the inequalities \myeqref{eq:Rtanvel}, \myeqref{eq:Anorvel},
	\myeqref{eq:shocknormal} and \myeqref{eq:cornercone} are \emph{strict}. Moreover 
	\begin{alignat}{1}
		|\chi_t| &\geq \delta_{\chi t} \qquad\text{on $S\cap B_{\delta_d}(\vec\xi_C)$,} \myeqlabel{eq:chitshock}
	\end{alignat}
	for some constants $\delta_{\chi t},\delta_d>0$.
\end{proposition}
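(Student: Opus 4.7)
The plan is to handle the five strict inequalities by combining a maximum principle for the velocity components with shock-polar geometry for the remaining items. Throughout I work in the frame of Figure~\myref{fig:frameR}, so that $\psi^R$ and $\chi^R$ are constant on $P$.

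For \myeqref{eq:Rtanvel} and \myeqref{eq:Anorvel}: set $u := \psi_{\vec t_R}$ (resp.\ $\psi_{\vec n_A}$). Since $\vec t_R,\vec n_A$ are constant directions, differentiating \myeqref{eq:psi} in those directions produces a homogeneous linear equation for $u$ with the same characteristic matrix, strictly elliptic on $\overline\Omega - \overline P$ by Proposition~\myref{prop:Lbounds}; any interior maximum is ruled out. On $P$, \myeqref{eq:vP} gives $|u - \vec v_R\cdot\vec t_R| \leq C_{Pv}\epsilon^{1/2}$. On $\hat A$ the slip condition combined with the fact that $\hat A$ passes through $\vec v_I$ makes $\psi_{\vec n_A} = \vec v_I\cdot\vec n_A$ constant, so $\partial u/\partial\vec n_A = \partial_{\vec t_R}(\text{const}) = 0$ on $\hat A$ and the Hopf lemma excludes boundary maxima in the interior of $\hat A$; an analogous argument on $\hat B$ uses $\psi_{\vec n_B} = 0$. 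On $S$, Proposition~\myref{prop:shockpolar} makes $\vec v_d$ a smooth function of the shock-normal angle, so the non-strict \myeqref{eq:shocknormal} together with the parabolic-arc value at $\vxia_C$ yields an $O(\epsilon^{1/2})$ bound for $u - \vec v_R\cdot\vec t_R$ on $S$. Taking $C_{vtR},C_{vnA}$ strictly larger than the resulting constants makes \myeqref{eq:Rtanvel}, \myeqref{eq:Anorvel} strict.

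For \myeqref{eq:shocknormal}, \myeqref{eq:cornercone} and \myeqref{eq:chitshock}: at $\vec\xi_A$ the right angle $S\perp\hat A$ (forced by the slip condition and the shock relations) makes $\vec n(\vec\xi_A)=\vec t_A\in\partial N$; at $\vxia_C$, Proposition~\myref{prop:pararc} places $(\vec v_d,\rho_d)$ within $O(\epsilon^{1/2})$ of $(\vec v_R,\rho_R)$, and the Rankine-Hugoniot relations with upstream $(\vec v_I,\rho_I)$ then pin $\vec n(\vxia_C)$ within $O(\epsilon^{1/2})$ of $\vec n_R\in\partial N$. Between endpoints, the regularity \myeqref{eq:sregu} combined with strict monotonicity on the shock polar (Proposition~\myref{prop:shockpolar}) keeps $\vec n$ in an $O(\epsilon^{1/2})$-tube around $N$, making \myeqref{eq:shocknormal} strict for $C_{Sn}$ large. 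The corner cone \myeqref{eq:cornercone} then follows geometrically from $|\etaa_C - \etat_C| \leq \epsilon^{1/2}$ (Propositions~\myref{prop:etaa-lowerbound}--\myref{prop:etaa-upperbound}): the unperturbed configuration at $\vxit_C$ has a definite opening angle bounded away from $\pi$, preserved under $\epsilon^{1/2}$-perturbation provided $\delta_{Cc}$ is fixed small. For \myeqref{eq:chitshock}, the target shock through $\vxit_C$ satisfies $L_d^2 = 1-\epsilon < 1$ and is therefore not pseudo-normal, so $|z^t|$ is strictly positive there; the $O(\epsilon^{1/2})$ closeness of the actual shock to the target and compactness of $\Lambda$ then provide a uniform lower bound $\delta_{\chi t}$ on some $B_{\delta_d}(\vec\xi_C)$ valid for all $\po\in\cfusp$ and all $\lambda\in\Lambda$.

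The main obstacle is the coupling between \myeqref{eq:Rtanvel}--\myeqref{eq:Anorvel} and \myeqref{eq:shocknormal}: the velocity maximum-principle argument uses the non-strict \myeqref{eq:shocknormal} as input on $S$, while the sharpening of \myeqref{eq:shocknormal} to strict draws only on Proposition~\myref{prop:pararc} and shock-polar geometry. The argument must therefore be staged so that \myeqref{eq:shocknormal} is upgraded to strict \emph{before} attacking \myeqref{eq:Rtanvel}, \myeqref{eq:Anorvel}; the remaining items \myeqref{eq:cornercone} and \myeqref{eq:chitshock} are then essentially corollaries.
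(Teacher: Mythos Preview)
Your overall plan reverses the paper's logical order---you try to establish \myeqref{eq:shocknormal} first and then read off the velocity bounds from it---but the standalone argument you give for \myeqref{eq:shocknormal} has a genuine gap. Knowing $\vec n(\vec\xi_A)=\vec t_A$ and $\vec n(\vxia_C)\approx\vec n_R$ at the endpoints, together with the $\spC^{2,\alpha}_\beta$ bound \myeqref{eq:sregu}, does \emph{not} force $\vec n$ to stay in an $O(\epsilon^{1/2})$-tube around $N$ in the interior of $S$: the constant $C_s$ in \myeqref{eq:sregu} depends on $C_\spC$ and hence on $\epsilon$, and shock-polar monotonicity (Proposition~\myref{prop:shockpolar}) relates downstream data to the normal \emph{at a fixed point}, saying nothing about how the normal evolves along $S$. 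Without this interior control the staging you propose in your last paragraph collapses, and your velocity argument on $S$ then makes $C_{vtR}$ depend on $C_{Sn}$, contradicting both the statement of the proposition and the ordering in \myeqref{eq:constlist}. The paper proceeds in the opposite direction: it proves \myeqref{eq:Rtanvel} and \myeqref{eq:Anorvel} first, ruling out a shock maximum of $v^x$ by a curvature contradiction (\pmc{Proposition~\pmref{prop:vshock}}: at such a maximum $s_{\xi\xi}>0$, so a parallel-tangent point with larger shock speed and hence larger $v^x$ exists to the left), which uses \myeqref{eq:shocknormal} only qualitatively to parametrize $S$ as a graph. Then \myeqref{eq:shocknormal} is \emph{derived} from the velocity bounds via the relation $\sign s_\xi=-\sign v^x_d$.

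There is also a mistake in your wall step: you write $\partial u/\partial\vec n_A=\partial_{\vec t_R}(\psi_{\vec n_A})=\partial_{\vec t_R}(\text{const})=0$, but $\psi_{\vec n_A}$ is constant only \emph{along} $\hat A$, not in a neighbourhood, and $\vec t_R$ is not tangent to $\hat A$; the tangential identity $\psi_{\vec n_A\vec t_A}=0$ on $\hat A$ is all you get, and turning this into a statement about $\psi_{\vec n_A\vec t_R}$ is precisely the content of \pmc{Proposition~\pmref{prop:v-wall}} (which also explains why the direction must not be normal to the wall). Your treatment of \myeqref{eq:cornercone} and \myeqref{eq:chitshock} is essentially in line with the paper.
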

\begin{proof}
\begin{enumerate}
\item 
	For \myeqref{eq:Rtanvel}: consider the coordinates of Figure \myref{fig:frameR} where $\vec t_R=(1,0)$.
	Let $\xi\mapsto s(\xi)$ parametrize $S$ (the shock normal bounds \myeqref{eq:shocknormal} show that $S$ is nowhere vertical in these coordinates,,
	for sufficiently small $\epsilon$, bound depending on $C_{Sn}$). 
	Assume that $\vec v\cdot\vec t_R=v^x$ attains a positive global maximum (with respect to $\overline\Omega$) in a point $\vec\xi_0$ at $S$
	(i.e.\ on the downstream side).
	Since $\vec v_I=(0,v^y_I)$ with $v^y_I<0$, this means $n^x<0$ in $\vec\xi_0$ (because $n^y<0$), i.e.\ $s_\xi(\xi_0)<0$ 
	(see Figure \myref{fig:vtR}). 

	$s_\xi(\xi_0)$ can be expressed as a continuous
	function of $v^x(\vec\xi_0)$ and $\vec\xi_0$. The set of possible $\vec\xi_0$ is contained in the set of possible shock locations which
	is pre-compact. Therefore if $v^x=C_{vtR}\cdot\epsilon^{1/2}$ in $\vec\xi_0\in S$, then 
	\begin{alignat}{1}
		s_\xi(\xi_0) &\leq -C_{s1}\cdot\epsilon^{1/2} \mylabel{eq:s1x0}
	\end{alignat}
	where $C_{s1}=C_{s1}(C_{vtR})>0$ is uniformly increasing in $C_{vtA}$.

	For a constant-state solution \myeqref{eq:Rtanvel} is immediate. Otherwise,
	since $S$ and $\po$ are analytic (Proposition \myref{prop:fp-regularity}), we can apply
	\pmc{Proposition \pmref{prop:vshock}} with $\vec w=(1,0)$, which yields that curvature $\kappa<0$, i.e.\ $s_{\xi\xi}>0$, in $\vec\xi_0$.
	Therefore $s_\xi(\xi)<s_\xi(\xi_0)$ for $\xi<\xi_0$ near $\xi_0$. 
	On the other hand, $s_\xi\geq0$ in $\vec\xi_A$ since the boundary condition $\chi_n=0$ requires the shock to be perpendicular to the wall $A$;
	in particular $s_\xi(\xi_A)>0>s_\xi(\xi_0)$ by \myeqref{eq:s1x0}.
	(In this choice of coordinates, $A$ is either vertical or has negative slope, since we require it to form right or sharp angles with $R$, by
	choice of $\xi_{AB}$ in Section \myref{section:parmset}.)

	Therefore we can pick $\xi_a\in(\xi_A,\xi_0)$ maximal so that $s_\xi(\xi_a)=s_\xi(\xi_0)$.
	Then $s_\xi(\xi)<s_\xi(\xi_0)$
	for $\xi\in(\xi_a,\xi_0)$, so by integration 
	$$s(\xi_a) > s(\xi_0)+s_\xi(\xi_0)\cdot(\xi_a-\xi_0).$$
	But that means the shock tangent in $\xi_a$ is parallel to the one in $\xi_0$
	but \emph{higher}, so the shock speed $\sigma:=\vec\xi\cdot\vec n$ is smaller. 
	By \pmc{\pmeqref{eq:DvndDsigma}}, that means $v^n_d$ is smaller, whereas
	$v^t$ is the same (parallel tangents). $n^x<0$, so $v^x_d$ is \emph{bigger}. Contradiction --- we assumed that
	we have a \emph{global} maximum of $v^x$ in $\vec\xi_0$.

	\pmc{Propositions \pmref{prop:interior-velocity} and \pmref{prop:v-wall}}
	rule out local maxima of $v^x$ in $\Omega$ and at $B$,
	where we use that $\chi$ is analytic and that $(1,0)$ is not vertical, i.e.\ not normal to $B$.
	
	At $A$: if $A$ is vertical, then the boundary condition requires $v^x=\xi_A<0$; if $A$ is not vertical, then $(1,0)$ is not normal,
	so \pmc{Proposition \pmref{prop:interior-velocity}} applies again.

	In $\vec\xi_{AB}$, the two boundary conditions combine to yield $\vec v=\vec\xi_{AB}$, so $v^x=\xi_{AB}<0$.
	
	In $\vec\xi_A$, $s_\xi\geq0$ (see above) yields $v^x\leq 0$. 

	On $\overline{P}$ we can use \myeqref{eq:vP} with $v^x_R=0$, increasing $C_{vtR}$ to $>C_{Pv}$ if necessary 
	(this makes $C_{vtR}$ depend on $C_{Pt}$ as well). 

	All parts of $\overline\Omega$ are covered; \myeqref{eq:Rtanvel} is strict.
\item
	For \myeqref{eq:Anorvel}: consider the coordinates of Figure \myref{fig:frameAleft} left.
	There, $\vec n_A=(1,0)$, so we need to show $v^x=\vec v\cdot\vec n_A\leq C_{vnA}\cdot\epsilon^{1/2}$.
	On $\overline{A}$, the boundary condition yields $v^x=0$. $B$ is never vertical, so \pmc{Proposition \pmref{prop:v-wall}} rules out extrema
	of $v^x$ at $B$. \pmc{Proposition \pmref{prop:interior-velocity}} does not allow extrema in $\Omega$. At $P$,
	\myeqref{eq:vP} yields $v^x=v^x_R+O(\epsilon^{1/2})$; note that $v^x_R<0$ in these coordinates.
	At $S$, we can use the same curvature 
	argument as for $\vec v\cdot\vec t_R$, except that we now use $s_\xi\geq 0$ in $\vec\xi_C$ rather than $\vec\xi_A$.
	Altogether we obtain a contradiction again, if $C_{vnA}$ is sufficiently large, depending only and continuously on $C_{Pt}$.
\item 
	Consider the coordinates of Figure \myref{fig:frameR}.
	The slope $s_\xi$ of some shock passing through a point $\vec\xi$ is uniquely determined by (and continuous in) $\vec\xi$ and $v^x$, 
	with $\sign s_\xi=-\sign v^x$ (since $\vec v_I=(0,v^y_I)$, $v^y_I<0$). 
	The set of possible shock locations $\vec\xi$ is pre-compact, so \myeqref{eq:Rtanvel} implies
	$$\sup\measuredangle(\vec n,\vec n_R)<C_{Sn}\cdot\epsilon^{1/2}$$
	where $C_{Sn}=C_{Sn}(C_{vtR})$. 	

	Analogously we argue that \myeqref{eq:Atanvel} implies
	$$\sup\measuredangle(\vec t_A,\vec n)<C_{Sn}\cdot\epsilon^{1/2},$$
	where $C_{Sn}=C_{Sn}(C_{vtR},C_{vnA})$ now.
	\myeqref{eq:shocknormal} is strict with these choices.
\item
	These shock normal bounds also imply \myeqref{eq:cornercone} is strict, for $\delta_{Cc}>0$ and $\epsilon>0$ sufficiently small(er),
	with $\epsilon$ bound depending only on $C_{Sn}$.
\item
	Near each corner the shock normal bound bounds $\vec n$ away from the $\vec\xi$ direction, so 
	$|\chi^I_t|\geq\delta_{\chi t}$ and therefore \myeqref{eq:chitshock} for some $\delta_{\chi t}$. 
\end{enumerate}
\end{proof}

\mylabel{section:densitycontrol}

\begin{proposition}
	\mylabel{prop:shockenv}%
	\begin{enumerate}
	\item 
		If $\delta_{SB}$ is sufficiently small, then \myeqref{eq:shockwall} is strict.
	\item There is a constant $\delta_{\rho S}>0$ so that 
		\begin{alignat}{1}
			\rho_d &\geq \rho_I+\delta_{\rho S} \qquad\text{at $\overline S$}\myeqlabel{eq:shockrho}
		\end{alignat}
	\end{enumerate}
\end{proposition}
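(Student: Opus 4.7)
The plan is to use the envelope machinery of Section \ref{section:envelope} together with the envelope condition of Theorem \ref{th:elling-rrefl}: the envelope $E=E^{(0)}$ from the hypothesis will act as a barrier trapping the curved shock $S$ strictly outside it, and hence strictly away from both $\hat B$ and the sonic circle $\{|\vec\xi-\vec v_I|=c_I\}$.

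First I would transfer the envelope bound from $E$ to the envelope $E'$ that starts at the actual corner $\vec\xi_C$. By \eqref{eq:cornerregion} and the $O(\epsilon^{1/2})$ shift of $\vec\xi_C^{*(\epsilon)}$ built into Definition \ref{def:Lambda}, we have $|\vec\xi_C-\vec\xi_C^{(0)}|=O(\epsilon^{1/2})$; and since the right-hand side of \eqref{eq:envelope-ode} is analytic in $r$ on $(c_I,\infty)$, standard ODE continuous dependence gives $E'\to E$ uniformly on the relevant $\phi$-range as $\epsilon\downarrow 0$. The envelope condition provides a constant $d_0>0$, depending only on $\lambda$, with $d(E,\hat B)\geq 2d_0$ and $d(E,\{|\vec\xi-\vec v_I|=c_I\})\geq 2d_0$; the same bounds with $d_0$ in place of $2d_0$ then transfer to $E'$ for all sufficiently small $\epsilon$.

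Next, Proposition \ref{prop:Lbounds} gives $L^2<1-\epsilon$ on $\overline\Omega-\overline P$, and the parabolic boundary condition forces $L^2=1-\epsilon<1$ at $\overline S\cap\overline P=\{\vec\xi_C\}$; thus $L_d<1$ strictly on all of $\overline S$. Parametrizing both $S$ and $E'$ in polar coordinates $(r,\phi)$ centered at $\vec v_I$, starting from the common point $\vec\xi_C$, Proposition \ref{prop:shock-envelope} yields $r_S(\phi)>r_{E'}(\phi)$ for $\phi>\phi_0$, i.e.\ $S$ stays strictly on the far side of $E'$ from $\vec v_I$. By the geometry of the reflection pattern (where $\hat B$ and the sonic circle lie in the component of the complement of $E'$ containing $\vec v_I$), this gives $d(S,\hat B)\geq d_0$ and $d(S,\{|\vec\xi-\vec v_I|=c_I\})\geq d_0$. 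Choosing $\delta_{SB}\leq d_0/2$ then makes \eqref{eq:shockwall} strict and proves (1).

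For (2), the second distance bound becomes $|\vec z_u|=|\vec v_I-\vec\xi|\geq c_I+d_0$ uniformly on $\overline S$. By Proposition \ref{prop:shockpolar}, for fixed $\rho_u=\rho_I$ and $|\vec z_u|$ in the compact interval $[c_I+d_0,\diam(\Omega)+|\vec v_I|]$, the map $|\beta|\mapsto\rho_d$ is continuous and strictly decreasing from a value $>\rho_I$ at $|\beta|=0$ down to $\rho_I$ only at the vanishing angle $\beta_v$; and $L_d\leq\sqrt{1-\epsilon}<1$ keeps $|\beta|$ uniformly bounded below $\beta_v$ for small $\epsilon$. A joint compactness argument then produces a uniform minimum $\delta_{\rho S}>0$ with $\rho_d\geq\rho_I+\delta_{\rho S}$ on $\overline S$, giving \eqref{eq:shockrho}. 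The main obstacle will be the geometric separation claim in the second step — verifying that, after the $\epsilon$-perturbation from $E$ to $E'$, the curve $E'$ genuinely has $\hat B$ and the sonic circle on the $\vec v_I$-side and $S$ on the opposite side, including near the endpoints where $S$, $E'$ and $\hat A$ meet.
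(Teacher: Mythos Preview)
Your proposal is correct and follows essentially the same route as the paper: perturb the envelope $E$ from $\vec\xi_C^{*(0)}$ to an envelope $E'$ from the actual corner $\vec\xi_C$ via ODE continuous dependence, then invoke Proposition~\ref{prop:shock-envelope} (using $L_d<1$ from \eqref{eq:ellipC}) to trap $S$ outside $E'$ and hence uniformly away from $\hat B$ and the sonic circle; the uniform distance from the circle combined with $L_d\leq 1$ then forces uniform shock strength. The only cosmetic difference is in part~(2): the paper argues by contradiction (a vanishing point would have $L_d=L_u=|\vec\xi-\vec v_I|/c_I$ bounded strictly above $1$, contradicting \eqref{eq:ellipC}), whereas you route the same idea through the $|\beta|$-monotonicity of Proposition~\ref{prop:shockpolar}; both are equivalent. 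Your closing caveat about the topological separation near the endpoints is the one place where the paper, too, just appeals to the picture (``topologically $S$ is separated from $\hat B$ and the circle by $E'$'').
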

\begin{proof}
\begin{enumerate}
\item
	Consider the envelope $E$ defined in Section \myref{section:parmset}. 
	The parameter set $\Lambda$ (see Definition \myref{def:Lambda}) has been chosen so that for any $\lambda\in\Lambda$, 
	$E$ passes from $\vec\xi^{*(0)}_C$ to $\hat A$ without meeting $\hat B$ or 
	the circle (with radius $c_I$ centered in $\vec v_I$). Since $\Lambda$ has also been chosen compact, 
	$E$ is in fact uniformly bounded away from $\hat B$ and the circle. 

	$E$ starts in $\vec\xi_C^{*(0)}$; 
	let $E'$ be the counterclockwise envelope (Definition \myref{def:envelope}) starting in $\vec\xi_C$ instead. 
	$E,E'$ are solutions of an ODE \myeqref{eq:envelope-ode}, so they depend continuously on the initial point.
	Hence for $\vec\xi_C$ sufficiently close to $\vec\xi_C^{*(0)}$, i.e.\ by \myeqref{eq:cornerregion} for sufficiently small $\epsilon$ (with upper bound depending
	only on the choice of $\Lambda$), 
	$E'$ is also uniformly bounded away from $\hat B$ and the circle.

	Now we can apply the argument displayed in Figure \ref{fig:theorem} right:
	$|\vec\xi-\vec v_I|$ is $r$ in the polar coordinates used in Section \myref{section:envelope}.
	Let $E'$ and the shock $S$ be parametrized by $\phi\mapsto r_S(\phi)$ resp.\ $\phi\mapsto r_{E'}(\phi)$, with
	$\phi\in[\phi_C,\phi_A]$, $\phi_C$ corresponding to the ray from $\vec v_I$ through $\vec\xi_C$ and $\phi_A$ to the ray from $\vec v_i$
	containing $\hat A$.
	$r_S(\phi_C)=r_{E'}(\phi_C)$ because $S$ and $E'$ both pass through $\vec\xi_C$. 
	By \myeqref{eq:ellipC}, $L_d<1$ at $S$. 
	Therefore, Proposition \myref{prop:shock-envelope} yields $r_S(\phi)>r_{E'}(\phi)$ for all $\phi>\phi_C$.
	Hence topologically $S$ is separated from $\hat B$ and the circle by $E'$, so it also has uniformly lower bounded distance from
	them. In particular \myeqref{eq:shockwall} is strict, for sufficiently small $\delta_{SB}$ (depending only on the choice of $\Lambda$, 
	but not on any other constant).
\item
	If $S$ vanishes in some point $\vec\xi$, then $L_d=L_u=|\vec\xi-\vec v_I|/c_I$ which --- since $S$ has uniform distance
	from the circle --- is uniformly bounded below away from $1$.
	However, this contradicts \myeqref{eq:ellipC}. The shock cannot vanish; on the contrary, by continuity the shock has uniformly 
	lower-bounded strength.
	That implies \myeqref{eq:shockrho}, for sufficiently small $\delta_{\rho S}$. (Again, it depends only on $\Lambda$, not on
	the choice of other constants.)
\end{enumerate}
\end{proof}

\begin{proposition}
	\mylabel{prop:rho}%
	If $\delta_\rho$ and $\epsilon$ are sufficiently small (with bounds depending only on $C_{Pt}$),
	then for any fixed point $\po\in\cfusp$ of $\IT$,
	the inequality \myeqref{eq:rhomin} is \emph{strict}.
\end{proposition}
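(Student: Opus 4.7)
The plan is to show that the minimum of $\rho$ over $\overline{\Omega}$ strictly exceeds $\rho_I+\delta_\rho$, for sufficiently small $\delta_\rho$ and $\epsilon$. The proof mirrors \pmc{Proposition \pmref{prop:rho}} with adjustments for the second wall $B$ and the new corner $\vec\xi_{AB}$. The strategy is to rule out $\rho\leq\rho_I+\delta_\rho$ separately on each piece $S$, $P$, $A$, $B$, $\Omega$ and at each corner, then pick $\delta_\rho$ smaller than the minimum of all the gaps.

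On the shock $\overline S$, Proposition \myref{prop:shockenv} already delivers $\rho_d\geq\rho_I+\delta_{\rho S}$ strictly, with $\delta_{\rho S}$ depending only on $\Lambda$; choosing $\delta_\rho<\delta_{\rho S}/2$ gives strict inequality there. On the parabolic arc $\overline P$, Proposition \myref{prop:pararc} yields $|\rho-\rho_R|\leq C_{P\rho}\epsilon^{1/2}$, and since $R$ is admissible with $\rho_R>\rho_I$ uniformly over the compact set $\Lambda$, for $\epsilon$ small we obtain $\rho>\rho_I+\delta_\rho$ on $\overline P$. In the interior, Proposition \myref{prop:c-principle} is a minimum principle for $c$ (equivalently for $\rho$, since $c$ is monotone in $\rho$ for the polytropic pressure law), so no minimum can occur in $\Omega$.

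On the walls $A$ and $B$ and at the corners $\vec\xi_A,\vec\xi_B,\vec\xi_C$, one proceeds as in the cited proof: Hopf-type arguments applied to the auxiliary quantity controlled by Proposition \myref{prop:c-principle}, together with the slip condition $\chi_n=0$, exclude a boundary minimum at interior wall points, while $\vec\xi_A$, $\vec\xi_B$, $\vec\xi_C$ are respective limits of $S\cup A$, $P\cup B$, $P\cup S$ on which uniform lower bounds have already been obtained. The genuinely new case is the corner $\vec\xi_{AB}$: the $C^1$ regularity of $\chi$ (from the $\spC^{2,\alpha}_\beta\hookrightarrow\spC^1$ embedding), the two slip conditions $\nabla\chi\cdot\vec n_A=\nabla\chi\cdot\vec n_B=0$, and the nondegenerate corner angle enforced by \myeqref{eq:xiAB} together force $\nabla\chi(\vec\xi_{AB})=0$, hence $\rho(\vec\xi_{AB})=\pi^{-1}(-\chi(\vec\xi_{AB}))$ is determined by $\chi$ alone. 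The main obstacle is to produce a uniform upper bound for $\chi(\vec\xi_{AB})$: since $\psi^I$ is fixed and $\psi=\psi^I$ on $S$, the maximum principle along the path from $\vec\xi_A$ (or via tracking along $B$) gives an upper bound for $\psi$, hence for $\chi$, at $\vec\xi_{AB}$. Alternatively, by compactness of $\Lambda$ and continuous dependence on $\lambda$ (Proposition \myref{prop:pn-uqcont}), it suffices to check the unperturbed case $\lambda=(1,\eta^0_C,v^x_R)$ where the solution is explicit and the strict inequality at $\vec\xi_{AB}$ can be verified directly, then transfer via continuity for $\epsilon$ small. Combining all cases yields the strict version of \myeqref{eq:rhomin}.
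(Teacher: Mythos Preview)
Your overall decomposition is the same as the paper's, and your treatment of $\overline S$, $\overline P$, the interior $\Omega$, and the walls $A,B$ (via reflection and the $c$--principle) is correct and matches the paper's argument. The genuinely new corner $\vec\xi_{AB}$ is indeed the only point requiring a fresh idea, and you correctly observe that $\nabla\chi(\vec\xi_{AB})=0$ reduces the question to an upper bound on $\chi(\vec\xi_{AB})=\psi(\vec\xi_{AB})$.

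However, both of the routes you sketch for that bound have a gap. The compactness/continuity argument is circular: Proposition~\myref{prop:pn-uqcont} gives continuity of the iteration map $\po\mapsto\pn$, not of fixed points; uniqueness of fixed points is never established, and in any case the strict inequality you are trying to prove is itself a prerequisite (via Proposition~\myref{prop:fp-boundary}) for the degree argument producing fixed points at all $\lambda$. The ``maximum principle along the path from $\vec\xi_A$'' is closer to the mark but too vague: $\psi$ does not satisfy a one-dimensional maximum principle along $A$. What the paper actually uses is the a~priori velocity constraint \myeqref{eq:Atanvel}, which is part of the definition of $\cfusp$: it says $\partial_{\vec t_A}\psi=\vec v\cdot\vec t_A\leq\vec v_I\cdot\vec t_A-\delta_{vtA}$ throughout $\overline\Omega$. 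Integrating this inequality along $A$ from $\vec\xi_A$ (where $\psi=\psi^I$ by the shock condition) to $\vec\xi_{AB}$ gives a concrete upper bound $\psi(\vec\xi_{AB})<-\pi(\rho_I)$, hence $\rho(\vec\xi_{AB})>\rho_I$ by a definite amount $\delta_{\rho AB}>0$. Without invoking \myeqref{eq:Atanvel} (or some equivalent a~priori bound on the tangential velocity along $A$), you do not have a mechanism to control $\psi(\vec\xi_{AB})$, and the argument at that corner is incomplete.
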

\begin{proof}
	By Proposition \myref{prop:fp-regularity}, $\psi$ and hence $s$ are analytic. Thus we may use \pmc{Proposition \pmref{prop:c-principle}}
	which rules out minima of $\rho$ in $\Omega$ and (using Remark \myref{rem:reflection}) at $A$ or $B$. 

	Consider the coordinates of Figure \myref{fig:frame0}.
	In $\vec\xi_A$, the first shock condition is 
	$$\psi(\vec\xi_A)=\psi^I(\vec\xi_A)=-\pi(\rho_I)+v^x_I\big(\xi_A-\frac12v^x_I\big).$$
	\myeqref{eq:Atanvel} implies
	$$\psi(\vec\xi_{AB})\leq\psi(\vec\xi_A)+(\subeq{\xi_{AB}}{=0}-\subeq{\xi_A}{<0})(v^x_I-\delta_{vtA})
	=-\pi(\rho_I)+\subeq{\delta_{vtA}\xi_A-\frac12(v^x_I)^2}{<0}.$$
	So in $\vec\xi_{AB}=0$, since $\nabla\chi=0$ by boundary conditions on $A,B$ and $C^1$ regularity:
	$$\rho=\pi^{-1}(-\chi-\frac12|\nabla\chi|^2)=\pi^{-1}(-\psi)=\rho_I+\delta_{\rho AB}$$
	for some constant $\delta_{\rho AB}>0$ depending only on the parameters $\lambda$; note that $\pi$ is a strictly increasing function
	for any $\gamma\geq 1$.
	We can pick $\delta_\rho<\delta_{\rho AB}$ so that $\rho\leq\rho_I+\delta_\rho$ is not possible in $\vec\xi_{AB}$.

	On $P$ we know $\rho$ up to a small constant, by \myeqref{eq:rhoP}, so we can choose $\delta_\rho$ even smaller
	so that $\rho\leq\rho_I+\delta_\rho$ is not possible at $\overline P$.

	By \myeqref{eq:shockrho}, $\rho$ at $\overline S$ is uniformly bounded below away from $\rho_I$. 
	Hence, for $\delta_\rho$ sufficiently small, $\rho$ cannot have a global minimum close to $\rho_I$ at $S$.

	We see that for sufficiently small $\delta_\rho$ and $\epsilon$, depending
	continuously on $C_{Pt}$ (and $\lambda$), 
	\myeqref{eq:rhomin} is \emph{strict}.
\end{proof}

\begin{proposition}
	\mylabel{prop:shocknorvel}%
	If $\delta_{vtA}$, $\delta_{vnB}$ and $\epsilon$ are sufficiently small ($\delta_{vtA}$, $\delta_{vnB}$ bounds depending only on 
	$\delta_\rho,C_{Sn}$, $\epsilon$ bound depending only on $C_{Pt}$), 
	and if $\delta_{Cc}$ is sufficiently small,
	then for any fixed point $\po\in\cfusp$ of $\IT$,
	the inequalities \myeqref{eq:Atanvel} and \myeqref{eq:Bnorvel}
	are \emph{strict}.
\end{proposition}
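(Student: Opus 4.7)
The plan is to mimic the template of Proposition \myref{prop:shocktanvel}: for each inequality, I would work in coordinates where the direction being tested is a coordinate axis, then systematically rule out violations at the interior and at each piece of $\partial\Omega$ (walls, shock, parabolic arc, corners), with the curvature-based shock argument of Proposition \myref{prop:shocktanvel} providing the template at $S$.

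For \myeqref{eq:Atanvel}, I would choose a coordinate frame in which $\vec t_A = (1,0)$, so the quantity to bound above is $v^x$. In the interior, \pmc{Proposition \pmref{prop:interior-velocity}} rules out local maxima of $v^x$. On $\hat B$ (which is not parallel to $\hat A$ by the angle constraint in Section \myref{section:parmset}), $\vec t_A$ is not normal to $\hat B$, so \pmc{Proposition \pmref{prop:v-wall}} applies. On $\hat A$ itself, $v^x$ equals the tangential velocity, which is governed by the tangential derivative of $\psi$ along $\hat A$; using $\psi|_{\vec\xi_A} = \psi^I(\vec\xi_A)$ (shock condition) and the normal boundary condition $\vec v\cdot\vec n_A = \vec\xi\cdot\vec n_A$, a direct integration gives $v^x$ in terms of $\psi(\vec\xi_{AB})$, which can be controlled. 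At the shock $S$, I would use the decomposition $\vec v_d = \vec v_I + (v^n_d - v^n_u)\vec n_S$ together with admissibility $v^n_u - v^n_d \geq \delta > 0$ (which follows from the uniform shock strength \myeqref{eq:shockrho} of Proposition \myref{prop:shockenv}) and the shock normal location \myeqref{eq:shocknormal}; at points where this decomposition does not immediately give the desired strict bound, fall back on the curvature argument used in Proposition \myref{prop:shocktanvel}, applying \pmc{Proposition \pmref{prop:vshock}} with $\vec w = \vec t_A$. On $\overline P$, bound \myeqref{eq:vP} gives $\vec v = \vec v_R + O(\epsilon^{1/2})$, so the question reduces to whether $(\vec v_R - \vec v_I)\cdot\vec t_A < -\delta$ strictly; this is a purely geometric fact about the reflected-shock data (indeed $\vec v_I - \vec v_R$ is a positive multiple of $\vec n_R$, and since $\hat A$ is not perpendicular to $R$ — by the envelope/$\xi_{AB}$ choice — $\vec n_R\cdot\vec t_A$ has a definite sign with $\vec v_I\cdot\vec t_A > \vec v_R\cdot\vec t_A$). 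At the corner $\vec\xi_{AB}$, the combined boundary conditions force $\nabla\chi = 0$ by $C^1$ regularity, so $\vec v = \vec\xi_{AB}$, giving $v^x = \vec\xi_{AB}\cdot\vec t_A$ which can be compared explicitly to $\vec v_I\cdot\vec t_A$. At $\vec\xi_A$, since $\vec\xi_A\in\hat A$ and shock tangent continuity gives $\vec v\cdot\vec t_S = \vec v_I\cdot\vec t_S$ with $\vec t_S$ close to $\vec n_A$ (by \myeqref{eq:shocknormal} near $\vec\xi_A$), $v^x$ is again controlled.

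For \myeqref{eq:Bnorvel}, I would choose coordinates where $\hat B$ is horizontal with $\vec n_B = (0,-1)$, so the quantity to bound is $-v^y$. On $\hat B$, the boundary condition gives $\vec v\cdot\vec n_B = \vec\xi\cdot\vec n_B = 0$, which is strictly smaller than $\vec v_I\cdot\vec n_B$ (the latter is strictly positive because $\vec v_I$ points into the gas region across $\hat B$) — this yields the required strict bound on $B$ itself with any sufficiently small $\delta_{vnB}$. On $\hat A$, $\vec n_B$ is not normal (walls are not parallel), so \pmc{Proposition \pmref{prop:v-wall}} rules out extrema. Interior extrema are excluded by \pmc{Proposition \pmref{prop:interior-velocity}}. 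At $S$, argue as for \myeqref{eq:Atanvel}: shock tangent continuity combined with shock admissibility and normal location forces $\vec v_d\cdot\vec n_B < \vec v_I\cdot\vec n_B$ (since the shock takes $\vec v_I$ toward the half-plane $\{\vec v\cdot\vec n_B < \vec v_I\cdot\vec n_B\}$), falling back on the curvature argument near points where the shock geometry makes this degenerate. On $\overline P$, apply \myeqref{eq:vP} and the geometric fact $\vec v_R\cdot\vec n_B < \vec v_I\cdot\vec n_B$ strictly (same reason as above; this is why the reflected shock exists). In the corners: $\vec\xi_{AB}$ is handled by $\nabla\chi = 0$ exactly as before, and $\vec\xi_B\in\overline P$ is covered by \myeqref{eq:vP}.

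The main obstacle I foresee is the shock analysis near points where $\vec n_S\cdot\vec t_A$ (for \myeqref{eq:Atanvel}) or the analogous inner product for \myeqref{eq:Bnorvel} happens to have small magnitude: there the direct shock-polar argument $(\vec v_d - \vec v_I) = (v^n_d - v^n_u)\vec n_S$ does not by itself give a strict bound with a fixed $\delta$, and one must invoke the curvature/maximum argument of Proposition \myref{prop:shocktanvel} via \pmc{Proposition \pmref{prop:vshock}}. Getting the curvature sign to come out correctly, and showing the contradiction propagates back along $S$ to reach a corner where the bound is available by boundary conditions (as in the $s_\xi$-monotonicity argument at \pmeqref{eq:s1x0}), is the delicate part. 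The smallness requirements on $\delta_{vtA},\delta_{vnB}$ (in terms of $\delta_\rho,C_{Sn}$) come out of the quantitative form of these shock estimates; $\delta_{Cc}$ small and $\epsilon$ small (depending on $C_{Pt}$) enter to keep the parabolic-arc comparison \myeqref{eq:vP} and corner geometry under control.
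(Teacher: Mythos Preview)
Your approach is essentially the paper's, but you overcomplicate the shock step. Unlike in Proposition \myref{prop:shocktanvel}, \emph{no curvature fallback is needed} here: for both \myeqref{eq:Atanvel} and \myeqref{eq:Bnorvel}, the direct decomposition $\vec v_d-\vec v_I=(v^n_d-v^n_u)\vec n_S$ already gives a uniform gap at every point of $\overline S$. The reason is that \myeqref{eq:shocknormal} confines $\vec n_S$ to the arc from $\vec n_R$ to $\vec t_A$ (up to $O(\epsilon^{1/2})$), and this arc is uniformly transversal to both $\vec n_B$ and $\vec t_A$ --- in the paper's coordinates (Figure \myref{fig:frameAright} right for \myeqref{eq:Bnorvel}, Figure \myref{fig:frameAleft} left for \myeqref{eq:Atanvel}) the shock is ``uniformly not horizontal'' resp.\ ``nowhere vertical''. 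Combined with the uniform shock strength from \myeqref{eq:rhomin}, this yields $v^y>v^y_I+\delta$ on $\overline S$ directly. So the ``main obstacle'' you anticipate does not arise.

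Two smaller points. On $\hat A$ for \myeqref{eq:Atanvel}, you do not need to integrate: in the coordinates of Figure \myref{fig:frameAleft} left, $\hat A$ is vertical, so the direction $(0,1)$ is never normal to it and \pmc{Proposition \pmref{prop:v-wall}} rules out extrema immediately. Also, your parenthetical ``since $\hat A$ is not perpendicular to $R$'' is off --- the parameter range \myeqref{eq:xiAB} explicitly allows $\xi_{AB}=v^x_R$, which is precisely the perpendicular case; fortunately that case makes $\vec n_R\cdot\vec t_A$ maximal rather than zero, so your conclusion $(\vec v_R-\vec v_I)\cdot\vec t_A<0$ still holds (indeed $v^y_R>v^y_I$ in the relevant coordinates for all admissible $\lambda$).
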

\begin{proof}
	Consider the coordinates of Figure \myref{fig:frameAright} right, where
	$\vec v\cdot\vec n_B=-v^y$. 
	\myeqref{eq:rhomin} implies that the shock is uniformly strong.
	By \myeqref{eq:shocknormal}, the shock normal $\vec n$ is everywhere downwards and uniformly not horizontal.
	Thus $v^y>v^y_I+\delta_{vnB}$ at $\overline S$ for sufficiently small $\delta_{vnB}$, depending only on $\delta_\rho$ and $C_{Sn}$.

	\pmc{Proposition \pmref{prop:interior-velocity}} rules out local maxima of $v^y$ in $\Omega$.

	If $v^y$ has a local maximum at $A$, then $A$ must be horizontal (\pmc{Proposition \pmref{prop:v-wall}}), but by construction
	it is not. 

	On $\overline B$ the boundary condition implies $0=\chi_n=\chi_2$, so
	$v^y=\psi_2=\eta_{AB}=0$.
	
	At $\overline{P}$ we can use \myeqref{eq:vP} with $v^y_R=0>v^y_I$ to obtain $v^y>v^y_I$ if $\epsilon$ is small enough (depending on $C_{Pt}$).
	
	Altogether we have that \myeqref{eq:Bnorvel} is strict if $\delta_{vnB}$ is small enough.

	The arguments for \myeqref{eq:Atanvel} are analogous, looking at Figure \myref{fig:frameAleft} left coordinates instead:
	the shock $S$ is nowhere vertical (by \myeqref{eq:shocknormal}), so $v^y>v^y_I+\delta_{vtA}$ at $\overline S$
	for sufficiently small $\delta_{vtA}$. 
	If $B$ is not horizontal, then the direction $(0,1)$ is not perpendicular to it, so \pmc{Proposition \pmref{prop:v-wall}} 
	rules out a local $v^y$ extremum at $B$; if $B$ is horizontal, then $0=\chi_n=\chi_2$, so $v^y=\psi_2=\eta_{AB}=0$ on it.
	$A$ is always vertical, i.e.\ never perpendicular to $(0,1)$, so by \pmc{Proposition \pmref{prop:v-wall}} no $v^y$ extremum is possible at it.
	In $\vec\xi_{AB}=0$, the boundary conditions combine to $\vec v=0$, so $v^y=0>v^y_I+\delta_{vtA}$ if $\delta_{vtA}$ is small enough.
	At $\overline P$ we can use \myeqref{eq:vP} again to obtain $v^y\geq v^y_R-C_{Pt}\cdot\epsilon^{1/2}>v^y_I$ (using $v^y_R>v^y_I$ and 
	for $\epsilon$ sufficiently small,
	with bound depending only on $C_{Pt}$). \pmc{Proposition \pmref{prop:interior-velocity}} rules out interior extrema of $v^y$.
	Hence \myeqref{eq:Atanvel} is strict if $\delta_{vtA}$ is small enough.
\end{proof}

\subsection{Fixed points}

\begin{proposition}
	\mylabel{prop:oblique-corner}%
	For $\delta_o$ sufficiently small, with bounds depending only on $\delta_\rho$ and $C_L$,
	for $C_d$ resp.\ $\delta_d$ sufficiently large resp.\ small, with bounds depending only on $\delta_\rho$ and $C_L$,
	and for $\epsilon$ sufficiently small, with bounds depending only on $C_{Pt}$, $C_L$ and $\delta_\rho$: 

	If $\xo\in\cfusp$ is a fixed point of $\IT$, then \myeqref{eq:ndobb} and \myeqref{eq:Gb} are strict.
\end{proposition}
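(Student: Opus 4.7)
The plan is to establish strict versions of \myeqref{eq:ndobb} and \myeqref{eq:Gb} by explicit computation of $g^i_{\vec p}$ at a fixed point $\po=\pn$, combined with the strict inequalities from Propositions \myref{prop:Lbounds}, \myref{prop:pararc}, \myref{prop:rho}, \myref{prop:shockenv}, \myref{prop:shocktanvel}, and \myref{prop:shocknorvel}. For the walls $\Sigma_1=A$ and $\Sigma_4=B$, obliqueness is trivial: $g^1_{\vec p}=\vec n_A$ and $g^4_{\vec p}=\vec n_B$ give $|g^i_{\vec p}\cdot\vec n|/|g^i_{\vec p}|\equiv 1$. On $\Sigma_3=P$, $g^3_{\vec p}=\nabla\xn=\vec v-\vec\xi$; because $P$ is centered at $\vec v_R$ and Proposition \myref{prop:pararc} yields $|\vec v-\vec v_R|\leq C_{Pv}\epsilon^{1/2}$, the vector $\nabla\xn$ is nearly antiparallel to the outer normal $(\vec\xi-\vec v_R)/|\vec\xi-\vec v_R|$, so $|g^3_{\vec p}\cdot\vec n|/|g^3_{\vec p}|\geq 1-O(\epsilon^{1/2})$ for $\epsilon$ small.

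The shock case $\Sigma_2=S$ is the heart of the obliqueness argument. Using $\hat\rho_{\vec p}=-\frac{\hat\rho}{\hat c^2}\nabla\xn$ (from \myeqref{eq:Dpiinv}), the Rankine-Hugoniot identity $g^2=\vec q\cdot\vec n=0$ with $\vec q:=\hat\rho\nabla\xn-\rho_I\nabla\chi^I$, and the fact that $\vec n=(\vec v_I-\nabla\pn)/|\vec v_I-\nabla\pn|$ satisfies $\sum_k n^k\partial_{p^k}\vec n=0$, a direct calculation yields at the fixed point
\begin{alignat}{1}
g^2_{\vec p}=\hat\rho\bigl(1-(L^n_d)^2\bigr)\vec n_S-z^t\Bigl[\tfrac{\hat\rho z^n_d}{\hat c^2}+\tfrac{\hat\rho-\rho_I}{|\vec v_I-\vec v|}\Bigr]\vec t_S,\qquad g^2_{\vec p}\cdot\vec n_S=\hat\rho\bigl(1-(L^n_d)^2\bigr).\notag
\end{alignat}
By \myeqref{eq:Leps} and \myeqref{eq:rhomin}, the normal component $\hat\rho(1-(L^n_d)^2)$ is strictly positive on $S$ away from $\vec\xi_C$; an upper bound on $|g^2_{\vec p}|$ follows from the $C^1$-bound \myeqref{eq:lip}. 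This gives strict \myeqref{eq:ndobb} with quantitative $\delta_o$.

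For \myeqref{eq:Gb}: near $\vec\xi_{AB}$, $G=[\vec n_A\mid\vec n_B]$ and the range constraint \myeqref{eq:xiAB} on $\xi_{AB}$ bounds the angle between $\hat A$ and $\hat B$ uniformly away from $0$ and $\pi$, so $\|G\|,\|G^{-1}\|$ are uniformly controlled. Near $\vec\xi_C$, combining the shock formula above with $g^3_{\vec p}=z^n_d\vec n_S+z^t\vec t_S$ produces
\begin{alignat}{1}
\det G=z^t\Bigl\{\hat\rho\bigl(1-(L^n_d)^2\bigr)+z^n_d\Bigl[\tfrac{\hat\rho z^n_d}{\hat c^2}+\tfrac{\hat\rho-\rho_I}{|\vec v_I-\vec v|}\Bigr]\Bigr\}.\notag
\end{alignat}
The braced factor is uniformly positive ($z^n_d>0$ by shock orientation, $\hat\rho-\rho_I\geq\delta_\rho$ by \myeqref{eq:rhomin}, and both addends are nonnegative). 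Together with $|z^t|\geq\delta_{\chi t}$ in $S\cap B_{\delta_d}(\vec\xi_C)$ from \myeqref{eq:chitshock}, this bounds $|\det G|$ strictly below; combined with upper bounds from \myeqref{eq:lip}, it yields strict \myeqref{eq:Gb} for sufficiently small $\delta_d$.

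The main obstacle is that the normal component of $g^2_{\vec p}$ degrades like $\epsilon$ as $\vec\xi\to\vec\xi_C$, since $(L^n_d)^2\to 1-\epsilon$ there. This is compatible with the precedence list \myeqref{eq:constlist} only because \myeqref{eq:ndobb} is imposed on the open set $\Sigma_2=S$ away from the corner, while the quantitative control lost near $\vec\xi_C$ is precisely recovered by \myeqref{eq:Gb}, which exploits the tangential component of $g^2_{\vec p}$ (controlled via \myeqref{eq:chitshock}) rather than its normal component.
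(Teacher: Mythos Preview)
Your explicit computations are correct and reproduce exactly what the paper does (by deferring to \pmc{Proposition \pmref{prop:oblique-corner}}); the paper's own proof merely notes that the only new ingredient beyond that reference is the wall--wall corner $\vec\xi_{AB}$, where $g^1_{\vec p}=\vec n_A$, $g^4_{\vec p}=\vec n_B$ and the corner angle is uniformly bounded away from $0$ and $\pi$ by the constraint \myeqref{eq:xiAB}.

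Your final paragraph, however, rests on a misconception. You claim $(L^n_d)^2\to 1-\epsilon$ at $\vec\xi_C$, but this is false: what approaches $1-\epsilon$ there is the \emph{full} $L_d^2=(L^n_d)^2+(z^t/\hat c)^2$. The normal Mach number $L^n_d$ is controlled entirely by the shock strength: for any admissible potential-flow shock, $L^n_d<1$ with gap depending only on $\rho_d/\rho_u$ (see \pmc{Lemma \pmref{lemma:srel-M}}), and by \myeqref{eq:rhomin} that ratio is $\geq 1+\delta_\rho/\rho_I$ uniformly on $\overline S$. Hence $\hat\rho(1-(L^n_d)^2)$ is bounded below on all of $S$, including arbitrarily close to $\vec\xi_C$, by a constant depending only on $\delta_\rho$ --- exactly matching the proposition's stated dependency for $\delta_o$. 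There is no obstacle to overcome, and no need to invoke the tangential component of $g^2_{\vec p}$ to ``recover'' lost obliqueness near the corner; \myeqref{eq:Gb} and \myeqref{eq:ndobb} are simply independent conditions. (Equivalently: from \myeqref{eq:chitshock} you get $(L^n_d)^2\leq L_d^2-(\delta_{\chi t}/\hat c)^2\leq 1-(\delta_{\chi t}/\hat c)^2$ near $\vec\xi_C$, again uniformly away from $1$.) Delete the last paragraph and your argument is clean.
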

\begin{proof}
	Compared to \pmc{Proposition \pmref{prop:oblique-corner}}, the only new case is a corner between two walls, $A$ and $B$.
	The corner angle is bounded away from $0$ and $\pi$ by constants depending only on the parameters $\lambda$.
	(Note that $\xi_{AB}$ in Section \myref{section:parmset} has been lower-bounded uniformly by $\underline\xi_{AB}$ in Definition
	\myref{def:Lambda}, so that $\hat A,\hat B$ are uniformly not parallel.)
	$g_{\vec p}$ on $A$ and $B$ is their respective normal, so \myeqref{eq:Gb} is obvious. 
\end{proof}

\begin{proposition}
	\mylabel{prop:fp-boundary}%
	If the constants in \myeqref{eq:constlist} in Definition \myref{def:fusp} are chosen sufficiently small resp.\ large:
	for any $\lambda\in\Lambda$, 
	$\IT_{\lambda}$ cannot have fixed points on $\cfusp_{\lambda}-\fusp_{\lambda}$.
\end{proposition}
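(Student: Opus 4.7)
The proof is essentially a careful collation of all the strictness results established in Sections \myref{section:L-control} through \myref{section:cornerbounds}: each nonstrict inequality used to define $\cfusp$ in Definition \myref{def:fusp} has already been shown to become strict whenever $\po$ is a fixed point, provided the various constants in \myeqref{eq:constlist} are chosen in a mutually compatible way. The plan is therefore to walk through the inequalities in the order given in \myeqref{eq:constlist} and invoke the corresponding propositions, verifying at each step that the required bounds on preceding constants are already in place.

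More concretely, I would choose the constants in exactly the order displayed in \myeqref{eq:constlist} (so that bounds on each constant depend only on bounds on its predecessors) and invoke: Proposition \myref{prop:pararc} to obtain strict \myeqref{eq:partan} and \myeqref{eq:parnor}; Propositions \myref{prop:etaa-lowerbound} and \myref{prop:etaa-upperbound} to obtain strict \myeqref{eq:cornerregion}; Proposition \myref{prop:shockenv} to obtain strict \myeqref{eq:shockwall} and the uniform shock-strength bound \myeqref{eq:shockrho}; Proposition \myref{prop:rho} to obtain strict \myeqref{eq:rhomin}; Proposition \myref{prop:Lbounds} to obtain strict \myeqref{eq:ellip} (and, consequently, \myeqref{eq:ellipC}); Proposition \myref{prop:shocktanvel} to obtain strict \myeqref{eq:Rtanvel}, \myeqref{eq:Anorvel}, \myeqref{eq:shocknormal} and \myeqref{eq:cornercone} together with \myeqref{eq:chitshock}; Proposition \myref{prop:shocknorvel} to obtain strict \myeqref{eq:Atanvel} and \myeqref{eq:Bnorvel}; Proposition \myref{prop:oblique-corner} to obtain strict \myeqref{eq:ndobb} and \myeqref{eq:Gb}; and Proposition \myref{prop:fp-regularity} to obtain the strict $\spC^{2,\alpha}_\beta$ bound \myeqref{eq:regularity} and strict Lipschitz bound \myeqref{eq:lip}, with $C_\spC$ and $r_I$ adjusted appropriately. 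The remaining inequalities \myeqref{eq:rhoprep}, \myeqref{eq:tilderho}, \myeqref{eq:faken}, \myeqref{eq:tildeL} are, as already observed in the openness argument of Proposition \myref{prop:fusp-topology}, immediate consequences of the strict versions of \myeqref{eq:rhomin}, \myeqref{eq:Bnorvel}, and \myeqref{eq:ellipC}, together with a small enough choice of $r_I$.

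The main obstacle, and the reason the ordering in \myeqref{eq:constlist} was set up as it is, is ensuring that the closing of the constant-choice loop is logically consistent: for example, Proposition \myref{prop:Lbounds} only constrains $\delta_{Lb}$ and $\epsilon$ in terms of $\delta_\rho$, which in turn is constrained by Proposition \myref{prop:rho} in terms of $C_{Pt}$ and the envelope-based lower bound \myeqref{eq:shockrho}; the envelope bound itself comes from the geometric setup of $\Lambda$ in Definition \myref{def:Lambda} and does not depend on any of the later constants. Similarly, $\delta_o$, $C_d$ and $\delta_d$ from Proposition \myref{prop:oblique-corner} depend only on $\delta_\rho$ and $C_L$, both already fixed; and the H\"older regularity exponents $\alpha,\beta$ and the constant $C_\spC$ come last, depending on everything else via Proposition \myref{prop:fp-regularity}. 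After these choices, each inequality defining $\cfusp_\lambda$ holds strictly, which by the definition of $\fusp_\lambda$ means $\po\in\fusp_\lambda$, contradicting the assumption $\po\in\cfusp_\lambda-\fusp_\lambda$ and completing the proof.
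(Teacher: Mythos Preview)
Your proposal is correct and follows essentially the same approach as the paper's proof: both walk through the inequalities defining $\cfusp$ and invoke the corresponding strictness propositions, with your version spelling out the constant-dependency bookkeeping more explicitly. The one small omission is \myeqref{eq:oldnew}: you should note that for a fixed point $\po=\pn$, so $\|\po-\pn\|=0<r_I(\po)$ and \myeqref{eq:oldnew} is trivially strict.
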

\begin{proof}
	Let $\xo\in\cfusp$ be a fixed point of $\IT$.
	We show that every inequality in the definition of $\cfusp$ is strict,
	so $\xo\in\fusp$.

	\myeqref{eq:regularity} and \myeqref{eq:lip} are strict by Proposition \myref{prop:regularity}.

	\myeqref{eq:shockwall} is strict by Proposition \myref{prop:shockenv}.

	\myeqref{eq:rhomin} is strict by Proposition \myref{prop:rho}.

	A fixed point satisfies $\po=\pn$, so $\|\po-\pn\|=r_I(\po)>0$ cannot be true.
	\myeqref{eq:oldnew} is strict.

	\myeqref{eq:ellip} strict is provided by Proposition \myref{prop:Lbounds}.

	Due to Proposition \myref{prop:Lbounds}, $L^2=1-\epsilon$ on each point of $\overline{P}$,
	so we are in the situation of Section \myref{section:cornerbounds} and \pmc{Section \pmref{section:parcs} etc}.
	Proposition \myref{prop:pararc} shows that \myeqref{eq:partan} and \myeqref{eq:parnor} are strict.

	\myeqref{eq:Anorvel} and \myeqref{eq:Rtanvel} are strict by Proposition \myref{prop:shocktanvel}.

	\myeqref{eq:Bnorvel} and \myeqref{eq:Atanvel} are strict by Proposition \myref{prop:shocknorvel}.

	Propositions \myref{prop:etaa-lowerbound} and \myref{prop:etaa-upperbound} rule out
	$\etaa_C=\etat_C\pm\delta^{-1}\epsilon$ if $\delta$ is small enough,
	so \myeqref{eq:cornerregion} is strict.

	\myeqref{eq:cornercone} is strict by Proposition \myref{prop:shocktanvel}.

	\myeqref{eq:regularity} yields a trivial upper bound on the density in $\overline\Omega$, hence downstream at the shock.

	\myeqref{eq:shocknormal} is strict by Proposition \myref{prop:shocktanvel}.

	Proposition \myref{prop:oblique-corner} shows that \myeqref{eq:ndobb} and \myeqref{eq:Gb} are strict.

	All inequalities are strict, so $\po\in\fusp$.
\end{proof}

\subsection{Existence of fixed points}

\mylabel{section:ls}

\begin{figure}
\input{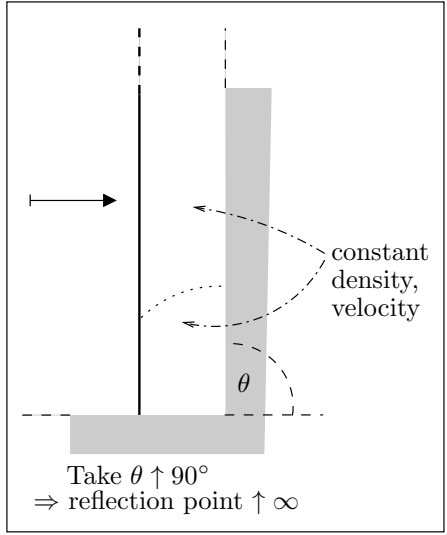}
\caption{The unperturbed case: a straight vertical shock $R$. In this case there is no reflection point and no incident shock.}
\mylabel{fig:unperturbed}
\end{figure}

We determine the Leray-Schauder degree of $\IT$ on $\fusp$ for a particular choice of parameters $\lambda$: the unperturbed problem 
(see Figure \myref{fig:unperturbed}),
featuring a straight shock separating two constant-state regions
($\etat_C=\overline\eta^0_C$, $\xi_{AB}=v^x_R$ in the coordinates of Definition \myref{def:Lambda}),
for $\gamma=1$. 

\begin{proposition}
        \mylabel{prop:unperturbed-degree-nonzero}%
	For sufficiently small $\epsilon$:

        For $\gamma=1$, $\eta^*_C=\overline\eta^*_C$ and $\xi_{AB}=v^x_R$, $\IT$ has nonzero Leray-Schauder degree.
\end{proposition}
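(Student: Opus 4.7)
The plan is to exhibit an explicit fixed point of $\IT$ for these special parameters, show it is the \emph{unique} fixed point in $\cfusp_\lambda$, and then verify that $I-D\IT$ at that point is a Banach-space isomorphism; this forces the fixed point to be isolated with Leray--Schauder index $\pm 1$, hence the degree equals $\pm 1\neq 0$.

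First, identify the candidate fixed point. With $\gamma=1$ the sound speed $c=c_0$ is constant, so the shock polar collapses nicely; $\eta^*_C=\overline\eta^*_C=\eta^0_C$ places the parabolic arc endpoint at its unperturbed position, and $\xi_{AB}=v^x_R$ means $\vec\xi_{AB}=\vec v_R$, so $\hat A$ is the ray from $\vec v_R$ through $\vec v_I$. Taking
\begin{alignat}{1}
\psi &\equiv \psi^R\quad\text{in }\overline\Omega,\qquad \chi=\psi^R-\tfrac12|\vec\xi|^2,\notag
\end{alignat}
with the straight horizontal shock $S$ at height $\eta^0_C$ as free boundary, one checks directly: (i)~$\rho\equiv\rho_R$, so \eqref{eq:chi} holds trivially; (ii) $\psi^I=\psi^R$ on $S$ and the momentum jump is the definition of the $R$ shock; (iii) $\nabla\chi=\vec v_R-\vec\xi$ is tangent to $\hat A$ (since $\vec\xi-\vec v_R\in\R\cdot(\vec v_I-\vec v_R)$ for $\vec\xi\in\hat A$) and horizontal on $\hat B$; (iv) on $P^{(\epsilon)}$, $|\nabla\chi|=|\vec\xi-\vec v_R|=c_R\sqrt{1-\epsilon}$ so $L^2=1-\epsilon$. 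With the elementary continuity and Galilean invariance one verifies all the inequalities \eqref{eq:regularity}--\eqref{eq:Gb}, provided the constants were chosen large/small enough. Hence this solution is a fixed point.

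For uniqueness, let $\po=\pn\in\cfusp_\lambda$ be any other fixed point. The difference $w:=\psi-\psi^R$ satisfies a homogeneous \emph{linear} elliptic equation obtained by subtracting from \eqref{eq:psi} (with $c\equiv c_0$ for $\gamma=1$) the corresponding identity for $\psi^R$; write it as $(c_0^2 I-\nabla\chi\,\nabla\chi^T):\nabla^2 w=0$ using that $\psi^R$ is affine. The boundary conditions for $w$ are homogeneous: on $S$ one has $\psi=\psi^I=\psi^R$ hence $w=0$; on $\hat B$, $w_{\vec n}=0$; on $\hat A$, the computation $\nabla\psi\cdot\vec n_A-\nabla\psi^R\cdot\vec n_A=(\vec\xi-\vec v_R)\cdot\vec n_A=0$ (since $\vec\xi-\vec v_R\in\hat A$) gives $w_{\vec n}=0$; on $P$ the condition $L^2=1-\epsilon$ combined with $|\nabla\chi^R|^2=|\vec\xi-\vec v_R|^2$ and $P\subset\{|\vec\xi-\vec v_R|^2=(1-\epsilon)c_0^2\}$ gives a homogeneous oblique condition on $w$ as well. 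By the strong maximum principle and Hopf lemma (applied with the reflection argument of Remark \myref{rem:reflection} at $\vec\xi_A,\vec\xi_{AB}$ and at the $\vec\xi_C$ corner using \eqref{eq:ndobb}, \eqref{eq:Gb}), $w\equiv 0$, so $\psi=\psi^R$. The free-boundary $S$ is then forced to be the unperturbed horizontal shock by the shock relations, so the fixed point coincides with the explicit one.

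Finally, to see that $I-D\IT$ is an isomorphism at the unperturbed point, linearize $\BL$ about it: the linearization of \eqref{eq:itn-inner}--\eqref{eq:itn-wall} is a linear elliptic mixed boundary value problem with the same oblique structure, and by Proposition \myref{prop:Lxn-iso} its solution map is an isomorphism onto $Y$. Freezing coefficients at $\po=\psi^R$ and using that the free-boundary variation $\delta s$ is uniquely determined by $\delta\pn$ via \eqref{eq:sdef}, one sees that $I-D\IT$ on the $\spC^{2,\alpha}_\beta$ space is Fredholm of index zero, and its kernel coincides with the kernel of the homogeneous linearized problem, which by the same maximum principle argument as above is trivial. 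The main obstacle in carrying out this plan will be the last step: careful bookkeeping of the coupled linearization in the $(\sigma,\zeta)\leftrightarrow\vec\xi$ transform so that triviality of the kernel really follows from a scalar elliptic maximum principle, and control of the corner at $\vec\xi_C$ where the parabolic arc meets the shock. Once this is done, the index is $(-1)^0=+1$, so $\deg(\IT,\fusp_\lambda,0)=1\neq 0$.
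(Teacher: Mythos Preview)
Your overall strategy---exhibit the constant-state fixed point $\psi^R$, prove it is unique, and show $I-D\IT$ is an isomorphism there---is exactly the strategy behind the cited Propositions \pmref{prop:unperturbed-unique} and \pmref{prop:unperturbed-index} of \cite{elling-liu-pmeyer-arxiv}. The paper simply observes that reflection across $\hat A$ (which for $\xi_{AB}=v^x_R$ is perpendicular to $R$) collapses the present four-sided domain to the three-sided domain of that earlier paper (shock on top, wall below, parabolic arcs on both sides), so those propositions apply verbatim. Your route avoids the reflection and argues directly on the four-sided domain; that is fine in principle.

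However, there is a genuine gap in your uniqueness step. You assert that on the free boundary $S$ one has $\psi=\psi^I=\psi^R$, hence $w:=\psi-\psi^R=0$. The first equality is the definition of $S$; the second is \emph{false} unless $S$ already coincides with the unperturbed horizontal shock, which is precisely what you are trying to prove. In Figure~\ref{fig:frameR} coordinates $\psi^I-\psi^R=v^y_I(\eta-\eta_{\text{shock}})$, so on a curved $S$ parametrized by $\eta=s(\xi)$ you get $w|_S=v^y_I\bigl(s(\xi)-\eta_{\text{shock}}\bigr)\neq 0$. Thus the Dirichlet data for $w$ on $S$ are coupled to the unknown shock location, and a bare maximum-principle/Hopf argument on $w$ cannot close. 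You need to bring in the second shock condition \eqref{eq:shock2-eps} (the momentum jump) together with the first, and exploit the sign structure this pairing produces---this is where the real work in \pmc{Proposition \pmref{prop:unperturbed-unique}} lies. The same coupling issue recurs in your index computation: the linearized shock condition is not a homogeneous Dirichlet condition for $\delta\psi$, and the kernel argument must handle the free-boundary variation $\delta s$ simultaneously, not as an afterthought ``uniquely determined by $\delta\pn$ via \eqref{eq:sdef}''.
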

\begin{proof}
        We can use reflection across $A$ (Remark \ref{rem:reflection}) to obtain the problem
	of Propositions 4.14.1 and 4.14.3 in \cite{elling-liu-pmeyer-arxiv}. The resulting iteration $\IT$ is almost
	the same as in loc.cit., except for minor differences in the coordinate transform from $(\sigma,\zeta)\in[0,1]^2$ (fixed domain)
	to $\vec\xi$ coordinates
	(see Definition \myref{def:fusp} as compared to \pmc{Definition \pmref{def:fusp}}). The proofs of \pmc{Propositions
	\pmref{prop:unperturbed-unique} and \pmref{prop:unperturbed-index}} carry over without any change
	to show that the present problem has nonzero Leray-Schauder degree.
\end{proof}

\begin{proposition}
	\mylabel{prop:probell}%
	For sufficiently small resp.\ large constants in \myeqref{eq:constlist}:
	$\IT$ has a fixed point for all $\lambda\in\Lambda$.
\end{proposition}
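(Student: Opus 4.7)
The plan is to conclude via Leray–Schauder degree theory applied to the compact map $\IT_\lambda$ on the open set $\fusp_\lambda\subset C^{2,\alpha}_\beta([0,1]^2)$, using the unperturbed case (Proposition \myref{prop:unperturbed-degree-nonzero}) as the base point of a homotopy and the boundary-exclusion result (Proposition \myref{prop:fp-boundary}) to propagate degree along paths in $\Lambda$.

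First I would verify that the Leray–Schauder degree $\deg(\id-\IT_\lambda,\fusp_\lambda,0)$ is well defined for every $\lambda\in\Lambda$: compactness of $\IT$ follows from the improvement of regularity \myeqref{eq:regu2} combined with the compact embedding $\spC^{2,\alpha+\delta_\alpha}_{\beta+\delta_\beta}\compemb\spC^{2,\alpha}_\beta$, openness of $\fusp_\lambda$ is the first half of Proposition \myref{prop:fusp-topology}, and the absence of fixed points on $\partial\fusp_\lambda\subset\cfusp_\lambda-\fusp_\lambda$ is exactly Proposition \myref{prop:fp-boundary}. By Proposition \myref{lemma:Gamma-connected}, $\Lambda$ is path-connected and contains the unperturbed parameter $\lambda_0=(1,\eta_C^0,v^x_R)$, so any given $\lambda_1\in\Lambda$ can be joined to $\lambda_0$ by a continuous path $t\mapsto\lambda(t)$, $t\in[0,1]$.

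Next I would invoke homotopy invariance along this path. Consider the map $H:[0,1]\times\bigcup_{t}\{t\}\times\cfusp_{\lambda(t)}\rightarrow \spC^{2,\alpha}_\beta([0,1]^2)$ given by $(t,\po)\mapsto \po-\IT_{\lambda(t)}(\po)$; by Proposition \myref{prop:pn-uqcont} this is continuous in $(t,\po)$ jointly, and compactness in the $\po$-variable is uniform in $t$ by the same regularity improvement. Proposition \myref{prop:fusp-topology} shows that $\bigcup_{t}\{t\}\times\fusp_{\lambda(t)}$ is open in $[0,1]\times C^{2,\alpha}_\beta$ and that its complement in $\bigcup_{t}\{t\}\times\cfusp_{\lambda(t)}$ is closed, so the set of fixed points over $[0,1]$ is a closed subset of $\bigcup_{t}\{t\}\times\fusp_{\lambda(t)}$; combined with Proposition \myref{prop:fp-boundary}, no fixed points occur on $\partial\fusp_{\lambda(t)}$ for any $t$. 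Homotopy invariance of Leray–Schauder degree therefore gives
\begin{alignat}{1}
	\deg(\id-\IT_{\lambda_1},\fusp_{\lambda_1},0) &= \deg(\id-\IT_{\lambda_0},\fusp_{\lambda_0},0) \neq 0, \notag
\end{alignat}
where the nonvanishing at $\lambda_0$ is Proposition \myref{prop:unperturbed-degree-nonzero}. Nonzero degree forces at least one solution of $\po=\IT_{\lambda_1}(\po)$ in $\fusp_{\lambda_1}$.

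The main obstacle is really the verification that the homotopy data is admissible — specifically, that the family $\{\fusp_{\lambda(t)}\}$ behaves well enough for the classical Leray–Schauder theorem over a varying domain; but this is already encoded in Proposition \myref{prop:fusp-topology} (openness of the total space and closedness of the barrier) together with continuous dependence of $\IT$ on $\lambda$ (Proposition \myref{prop:pn-uqcont}). Given these, the standard trick of viewing everything on the fixed ambient space $C^{2,\alpha}_\beta([0,1]^2)$ and excising neighborhoods of $\partial\fusp_{\lambda(t)}$ at each $t$ reduces the argument to the textbook homotopy invariance statement, and no further estimates are needed beyond those already assembled in Sections \myref{section:L-control}–\myref{section:cornerbounds}.
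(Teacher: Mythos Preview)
Your argument is correct and follows essentially the same approach as the paper: Leray--Schauder degree computed at the unperturbed parameter $\lambda_0=(1,\eta^0_C,v^x_R)$ via Proposition \myref{prop:unperturbed-degree-nonzero}, propagated along paths in the path-connected set $\Lambda$ (Proposition \myref{lemma:Gamma-connected}) by homotopy invariance, with admissibility guaranteed by compactness of $\IT$, the openness/closedness in Proposition \myref{prop:fusp-topology}, continuity in Proposition \myref{prop:pn-uqcont}, and boundary exclusion in Proposition \myref{prop:fp-boundary}. The paper's own proof simply refers to the identical argument in the companion article \cite{elling-liu-pmeyer-arxiv}, so your write-up is in fact more explicit than what appears here.
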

\begin{proof}
	The proof is identical to the one of \pmc{Proposition \pmref{prop:probell}}, 
	except for the definition of $\Lambda$ (Definition \myref{def:Lambda}); 
	we use the known Leray-Schauder degree in 
	$(\gamma,\eta^*_C,\xi_{AB})=(1,\eta^0_C,v^x_R)$ from Proposition \myref{prop:unperturbed-degree-nonzero}.
\end{proof}

\subsection{Construction of the entire flow}
\mylabel{section:entireflow}

\begin{figure}
\input{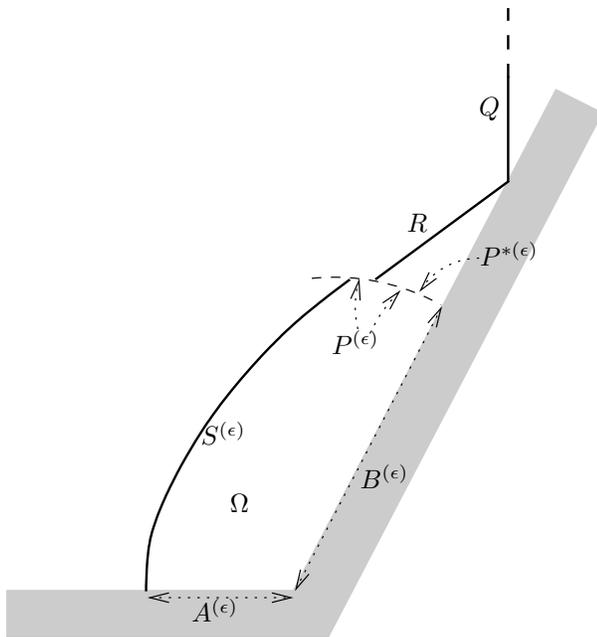}
\caption{The expected and actual parabolic arc ($P^{*(\epsilon)}$ and $P^{(\epsilon)}$) differ by curve of length $O(\epsilon^{1/2})$ 
(by \myeqref{eq:cornerregion})}
\mylabel{fig:flowmatch}
\end{figure}

\begin{proof}[Proof of Theorem \myref{th:elling-rrefl}]
	For all $\rho_I,c_I,M_I\in(0,\infty)$ and
	for each choice (in Definition \myref{def:Lambda}) of $\overline\gamma$, $\underline\eta^*_C$ and $\underline\xi_{AB}$
	we obtain a separate parameter set $\Lambda$.
	For sufficiently small constants in \myeqref{eq:constlist}, Proposition \myref{prop:probell} 
	yields fixed points $\po$ for all $\lambda\in\Lambda$. 
	Note that there is no lower bound on $\epsilon$, except that $\alpha,\beta$ etc.\ may change as $\epsilon\downarrow 0$.

	By Definition \myref{def:fusp}, Remark \myref{rem:fp}, 
	Proposition \myref{prop:pararc} and \myeqref{eq:reguint}, the fixed points satisfy
	\begin{alignat}{3}
		(c^2I-\nabla\chi^2):\nabla^2\psi &= 0 && \qquad \text{in $\Omega^{(\epsilon)}$}, \myeqlabel{eq:interior-eps} \\
		|\psi-\psi^R(\vxit_C)| & = O(\epsilon^{1/2}) && \qquad \text{and} \myeqlabel{eq:para-chi-eps} \\
		|\rho-\rho_R| & = O(\epsilon^{1/2}) && \qquad\text{and} \myeqlabel{eq:para-rho-eps} \\
		|\nabla\psi-\vec v_R| & = O(\epsilon^{1/2}) && \qquad \text{on $P^{(\epsilon)}$,} \myeqlabel{eq:para-nablachi-eps} \\
		\chi &= \chi^I && \qquad \text{and} \myeqlabel{eq:shock1-eps}  \\
		(\rho\nabla\chi-\rho_I\nabla\chi^I)\cdot\vec n &= 0 && \qquad \text{on $S$,} \myeqlabel{eq:shock2-eps}  \\
		\nabla\chi\cdot\vec n &= 0 && \qquad \text{on $A\cup B$,} \myeqlabel{eq:wall-eps}\\
		|\vxia_C-\vec\xi^{*(\epsilon)}_C| &= O(\epsilon^{1/2}) \myeqlabel{eq:cornerdist-eps}
	\end{alignat}
	where the $O$ constants are independent of $\epsilon$. For regularity, Proposition \myref{prop:regularity} yields
	\begin{alignat}{1}
		\|\psi\|_{C^{0,1}(\overline\Omega^{(\epsilon)})} &\leq C_1, \myeqlabel{eq:lip-eps} \\
		\|\psi\|_{C^{k,\alpha}(K\cap\overline\Omega^{(\epsilon)})},|S|_{C^{k,\alpha}(K\cap\overline S^{(\epsilon)})} &\leq C_2(d) \myeqlabel{eq:cka-eps}\\
		\qquad\text{where $d:=d(K,\hat P^{(\epsilon)}\cup\{\vec\xi_{AB}\})>0$} \notag.
	\end{alignat}
	for constants $C_1$ and $C_2(d)$ independent of $\epsilon$.

	Now consider those parameter vectors $\lambda$ that arise from the situtation in Theorem \ref{th:elling-rrefl}, i.e.\ so that
	there is an incident shock $Q$ meeting $R$ in a local regular reflection.
	We extend $\po$ from above to a function $\psi^{(\epsilon)}$ 
	defined on all of $\overline V$ as shown in Figure \myref{fig:flowmatch}:
	set $\rho=\rho_R$, $\vec v=\vec v_R$ in the region enclosed by $R$ shock, $\hat B$ and $P^{*(\epsilon)}$;
	set $\rho=\rho_Q$, $\vec v=\vec v_{R,Q}$ in the region right of the $Q$ shock and
	$\rho=\rho_I$, $\vec v=\vec v_I$ in the remaining area. 
	In each of the four regions, $\psi^{(\epsilon)}$ is a strong solution of self-similar potential flow,
	so we can multiply the divergence-form PDE \pmc{\pmeqref{eq:chi-divform}} with any test function $\vartheta\in C_c^\infty(\overline V)$
	and integrate over all region to obtain a sum of boundary integrals of the type
	$$\int_M\rho\nabla\chi\cdot\vec n~ds$$
	where $M$ are various curves; $\nabla\chi$ and $\rho$ are limits on one of the sides of $M$.

	The symmetric difference of $P^{(\epsilon)}$ and
	$P^{*(\epsilon)}$ has length $O(\epsilon^{1/2})$ (by \myeqref{eq:cornerdist-eps}, so since $\nabla\psi$ and $\psi$ are bounded in each region
	(uniformly in $\epsilon$, by \myeqref{eq:lip-eps}), the boundary integral over the difference contributes only $O(\epsilon^{1/2})$. 
	The difference of the integrals on each side of $P^{*(\epsilon)}\cap P^{(\epsilon)}$ are $O(\epsilon^{1/2})$ due to
	\myeqref{eq:para-rho-eps} and \myeqref{eq:para-nablachi-eps}. The integrals over $A,B$ vanish due to \myeqref{eq:wall-eps}.
	Finally, the integrals on each side of $S^{(\epsilon)}$ cancel due to \myeqref{eq:shock1-eps} and \myeqref{eq:shock2-eps}. 
	Altogether:
	\begin{alignat}{1}
		\int_{\overline V}\rho^{(\epsilon)}\nabla\chi^{(\epsilon)}\cdot\nabla\vartheta-2\rho^{(\epsilon)}\vartheta~d\vec\xi = O(\epsilon^{1/2}).
		\mylabel{eq:eps-weak}
	\end{alignat}

	$\spC^{k,\alpha}$ with $k+\alpha>1$ is compactly embedded in $C^{0,1}$, so by \myeqref{eq:cka-eps} with a diagonalization argument, 
	for every compact $K\subset\overline V-\{\vec\xi_{AB}\}-\overline P^{*(0)}$
	we can find a sequence $(\epsilon_k)\downarrow 0$ so that $\psi^{(\epsilon_k)}$
	converges to $\psi^{(0)}$ in $C^{0,1}(K)$. 
	Moreover $\rho^{(\epsilon)}$ and $\nabla\chi^{(\epsilon)}$ are bounded on $\overline V$ uniformly in $\epsilon$,
	so we may take $\epsilon\downarrow 0$ in \myeqref{eq:eps-weak} to obtain
	\begin{alignat}{1}
		\int_V \rho^{(0)}\nabla\chi^{(0)}\cdot\nabla\vartheta-2\rho^{(0)}\vartheta~d\vec\xi &= 0. \myeqlabel{eq:zero-weak}
	\end{alignat}
	In addition, \myeqref{eq:shock1-eps} and \myeqref{eq:para-chi-eps} combined with \myeqref{eq:lip-eps} show that 
	\begin{alignat}{1}
		\psi^{(0)} &\in C(\overline V) \myeqlabel{eq:zero-cont}
	\end{alignat}
	Finally, by construction of $\psi^{(\epsilon)}$, 
	\begin{alignat}{1}
		\rho^{(0)}(s\vec\xi),\vec v^{(0)}(s\vec\xi) &\rightarrow \begin{cases}
			\rho_I,\vec v_I, & \vec\xi\in V_I, \\
			\rho_Q,\vec v_Q, & \vec\xi\in V_Q
		\end{cases}\qquad\text{as $s\rightarrow\infty$,} \myeqlabel{eq:zero-cont-rhov}
	\end{alignat}
	i.e.\ their limits on rays to infinity are exactly as for the initial data in Figure \ref{fig:rrefini}. 
	This means the limit approaches the initial data as $t\downarrow 0$.

	\myeqref{eq:zero-weak}, \myeqref{eq:wall-eps}, \myeqref{eq:zero-cont} and \myeqref{eq:zero-cont-rhov} show that 
	$\phi(t,\vec x):=\psi^{(0)}(t^{-1}\vec x)$ defines
	a solution of \myeqref{eq:prob1}, \myeqref{eq:prob2}, \myeqref{eq:prob3} and \myeqref{eq:prob4}. 

	By taking $\overline\gamma\uparrow\infty$, $\underline\eta^*_C\downarrow 0$ 
	and $\underline\xi_{AB}\downarrow\xi_{EB}$,
	we obtain a solution for \emph{every} 
	$\gamma\in[1,\infty)$, $\eta_C^*\in[\eta^0_C,0)$ and $\xi_{AB}\in(\xi_EB,v^x_R]$.
	(in the cases $\gamma>1$ and $\eta_C^*=\eta^0_C$, we may use that $\overline\eta_C^*$ approaches $\eta^0_C$ as $\epsilon\downarrow 0$).
		
	As mentioned (Remark \myref{rem:Lambda-full}), this exhausts all cases covered by the 
	conditions of Theorem \myref{th:elling-rrefl}. The proof is therefore complete.
\end{proof}

\begin{remark}
	\mylabel{rem:structure}%
        In addition to mere existence we obtain some structural information in the proof:
	\begin{enumerate}
	\item The solution has the structure shown in Figure \myref{fig:rrsmr} left, with 
		pseudo-Mach number $L>1$ in the $I,R,Q$ regions, $L<1$ in the elliptic 
		region $\Omega$.
	\item The solution has constant density and velocity in each of the $I,R,Q$ regions. 
	\item The solution is analytic everywhere except perhaps at $\overline P^{*(0)}$ and in $\vec\xi_{AB}$ and, of course, the shocks.
	\item The curved shock is analytic away from $\hat A$ and $\overline P^{*(0)}$ and Lipschitz overall.
	\item Density and velocity are bounded. 
	\end{enumerate}
	It is expected that density and velocity are at least continuous. 
	However, the methods developed in \cite{elling-liu-pmeyer-arxiv} yield boundedness everywhere, but 
	continuity
	only away from $\overline P^*$.
	Note that $\overline P^*$ can \emph{not} be a classical shock with smooth data on each side,  
	because the one-sided limit of $L$ on the hyperbolic side $R$ of $P^*$ is $=1$ everywhere ($>1$ is needed for positive shock strength).

	Some additional structural information: 
	\begin{enumerate}
	\item The possible (downstream) normals of the curved shock are between $\vec n_R$ and $\vec t_A$ (counterclockwise). 
	\item The shocks are admissible and do not vanish anywhere.
	\item In the elliptic region, $v^x<v^x_I$ and $v^y\geq 0$ (in Figure \ref{fig:rrsmr} left coordinates).
	\item In the elliptic region, the density $\rho$ is greater than $\rho_I$. 
	\end{enumerate}
	Additional information can be obtained from the inequalities in Definition \myref{def:fusp}.
\end{remark}

\bibliographystyle{amsalpha}
%\bibliography{../../pmeyer/elling}

\newcommand{\etalchar}[1]{$^{#1}$}
\providecommand{\bysame}{\leavevmode\hbox to3em{\hrulefill}\thinspace}
\providecommand{\MR}{\relax\ifhmode\unskip\space\fi MR }
% \MRhref is called by the amsart/book/proc definition of \MR.
\providecommand{\MRhref}[2]{%
  \href{http://www.ams.org/mathscinet-getitem?mr=#1}{#2}
}
\providecommand{\href}[2]{#2}

\end{document}